\setlist{nolistsep,leftmargin=*}
\newcommand\abol[1]{\textcolor{blue}{(Abol) #1}}
\newcommand\techrep[1]{#1}
\newcommand\submit[1]{}
\newcommand{\stitle}[1]{\vspace{1ex}\noindent{\bf #1}}
\newcommand{\twodrrr}{{\sc 2drrr}\xspace}
\newcommand{\mdrrr}{{\sc mdrrr}\xspace}
\newcommand{\mdrc}{{\sc mdrc}\xspace}
\newcommand{\ksetr}{{\sc k-set$_r$}\xspace}
\newcommand{\hdrrms}{{\sc hd-rrms}\xspace}
\title{RRR: Rank-Regret Representative}
\author
{
\alignauthor
Abolfazl Asudeh$^{\dag}$,
Azade Nazi$^{\dag\dag}$,
Nan Zhang$^{\ddag}$, 
Gautam Das$^{\dag\ddag}$,
H. V. Jagadish$^{\dag}$
\\
\affaddr {
	$^\dag$University of Michigan;
	$^{\dag\dag}$Microsoft Research;
	$^{\ddag}$Pennsylvania State University;\\
	$^{\dag\ddag}$University of Texas at Arlington
}
{
\email
	{
	$^{\dag}$\{asudeh, jag\}@umich.edu;
	$^{\dag\dag}$azade.nazi@microsoft.com;
	$^{\ddag}$nan@ist.psu.edu;
	$^{\dag\ddag}$gdas@uta.edu
	}
}
}
\date{}
\newtheorem{theorem}{Theorem} 
\newtheorem{lemma}[theorem]{Lemma}
\newtheorem{definition}{Definition}
\begin{document}
\maketitle

\begin{abstract}
Selecting the best items in a dataset is a common task in data exploration.
However, the concept of ``best'' lies in the eyes of the beholder: different users may consider different attributes more important, and hence arrive at different rankings.
Nevertheless, one can remove ``dominated'' items and create a ``representative'' subset of the data set, comprising the ``best items'' in it.
A Pareto-optimal representative is guaranteed to contain the best item of each possible ranking, but it can be almost as big as the full data. Representative can be found if we relax the requirement to include the best item for every possible user, and instead just limit the users' ``regret''.
Existing work defines regret as the loss in score by limiting consideration to the representative instead of the full data set, for any chosen ranking function.

However, the score is often not a meaningful number and users may not understand its absolute value. Sometimes small ranges in score can include large fractions of the data set. In contrast, users do understand the notion of rank ordering.
Therefore, alternatively, we consider the position of the items in the ranked list for defining the regret and propose the {\em rank-regret representative} as the minimal subset of the data containing at least one of the top-$k$ of any possible ranking function. This problem is NP-complete. We use the 
geometric interpretation of items to bound their ranks on ranges of functions and to utilize combinatorial geometry notions for developing effective and efficient approximation algorithms for the problem.
Experiments on real datasets demonstrate that we can efficiently find small subsets with small rank-regrets.
\end{abstract}

\section{Introduction}\label{sec:intro}

Given a dataset with multiple attributes,
the challenge is to combine the values of multiple attributes to arrive at a rank.
In many applications, especially in databases with numeric attributes, a weight vector $\vec{w}$ is used to express user preferences in the form of a linear combination of the attributes, i.e., $\sum w_i A_i$.
Finding flights based on a linear combination of some criteria such as price and duration~\cite{QueryReranking}, diamonds based on depth and carat~\cite{qr2}, and houses based on price and floor area~\cite{qr2} are a few examples.

The difficulty is that the concept of ``best'' lies in the eyes of the beholder.
Different users may consider different attributes more important, and hence arrive at very different rankings.
In the absence of explicit user preferences, the system can remove dominated items, and offer the remaining Pareto-optimal~\cite{pareto} set as representing the desirable items in the data set.
Such a skyline (resp. convex hull) is the smallest subset of the data that is guaranteed to contain the top choice of a user based on any monotonic (resp. linear) ranking function.
Borzsony et. al.~\cite{skylineoperator} initiated the skyline research in the database community and since then a large body of work has been conducted in this area.
A major issue with such representatives is that they can be a large portion of the dataset~\cite{asudeh2017,har2011expected}, especially when there are multiple attributes.
Hence, several researchers have tackled~\cite{chan2006, vlachou2010ranking} the challenge of finding a small subset of the data for further consideration.  

One elegant way to find a smaller subset is to define the notion of {\em regret} for any particular user. That is, how much this user loses by restricting consideration only to the subset rather than the whole set. The goal is to find a small subset of the data such that this regret is small for every user, no matter what their preference function.
There has been considerable attention given to the regret-ratio minimizing set~\cite{nanongkai2010,asudeh2017} problem and its variants~\cite{nanongkai2012interactive, zeighami2016minimizing,kessler2015k,chester2014,agarwal2017efficient,cao2017k,kumar2018faster}.
For a function $f$ and a subset of the data, let $m_o$ be the maximum score of the tuples in dataset based on $f$ and $m_a$ be the one for the subset. The regret-ratio of the subset for $f$ is the ratio of $(m_o - m_a)$ to $m_o$.
The classic regret-ratio minimizing set problem aims to find a subset of size $r$ that minimizes the maximum regret-ratio for any possible function.
Other variations of the problem are pointed out in \S~\ref{sec:related}.

Unfortunately, in most real situations, the actual score is a ``made up'' number with no direct significance.  This is even more so the case when attribute values are drawn from different domains.  In fact, the score itself could also be on a made-up scale.  Considering the regret as a ratio helps, but is far from being a complete solution.  For example, wine ratings appear to be on a 100 point scale, with the best wines in the high 90s.  However, wines rated below 80 almost never make it to a store, and the median rating is roughly 88 (exact value depends on rater).  Let's say the best wine in some data set is at 90 points.  A regret of 3 points gives a very small regret ratio of .03, but actually only promises a wine with a rating of 87, which is below median!  In other words, a small value of regret ratio can actually result in a large swing in rank.  In the case of wines at least the rating scales see enough use that most wine-drinkers would have a sense of what a score means.  But consider choosing a hotel.  If a website takes your preferences into account and scores a hotel at 17.2 for you, do you know what that means?   If not, then how can you meaningfully specify a regret ratio?

Although ordinary users may not have a good sense of actual scores, they almost always understand the notion of rank.
Therefore,  as an alternative to the regret-ratio, we consider the 
position of the items in the ranked list and propose the position distance of items to the top of the list as the {\em rank-regret} measure. We define the {\em rank-regret} of a subset of the data to be $k$, if it contains at least one of the top-$k$ tuples of any possible ranking function.

Since items in a dataset are usually not uniformly distributed by score, solutions that minimize regret-ratio do not typically minimize rank-regret.
In this paper, we seek to find the smallest subset of the given data set that has {\em rank-regret} of $k$.
We call this subset the {\em order $k$ rank-regret representative} of the database.
(We will write this as $k$-RRR, or simply as RRR when $k$ is understood from context).
The order $1$ rank-regret representative of a database (for linear ranking functions) is its convex hull: guaranteed to contain the top choice of any linear ranking function.  
The convex hull is usually very large: almost the entire data set with five or more dimensions~\cite{asudeh2017,har2011expected}.
By choosing a value of $k$ larger than $1$, we can drastically reduce the size of the rank-regret representative set, while guaranteeing everyone a choice in their top $k$ even if not the absolute top choice.

Unfortunately, finding RRR is NP-complete, even for three dimensions.
However, 
we find a bound on the maximum rank of an item for a function and use it for designing efficient approximation algorithms. We also find the connection of the RRR problem with well-known notions in combinatorial geometry such as $k$-set~\cite{Ed87}, a set of $k$ points in $d$ dimensional space separated from the remaining points by a $d-1$ dimensional hyperplane. We show how the $k$-set notion can be used to find a set that guarantees a rank-regret of $k$ and has size at most a logarithmic times the optimal solution. We then show how a smart partitioning of the function space offers an elegant way of finding the rank-regret representative. 

\stitle{Summary of contributions.} The following are the summary of our contributions in this paper:
\begin{itemize}
\item We propose the rank-regret representative as a way of choosing a small subset of the database guaranteed to contain at least one good choice for every user.
\item We provide a key theorem that, 
given the rank of an item for a pair of functions,
bounds the maximum rank of the item for any function ``between'' these functions.
\item For the special 2D case, we provide an approximation algorithm \twodrrr that guarantees to achieve the optimal output size and the approximation ratio of 2 on the rank-regret.
\item We identify the connection of the problem with the combinatorial geometry notion of $k$-set. We review the $k$-set enumeration can be modeled by graph traversal.
Using the collection of $k$-sets, for the general case with constant number of dimensions,
we model the problem by geometric hitting set, and propose the approximation algorithm \mdrrr that guarantees the rank-regret of $k$ and a logarithmic approximation-ratio on its output size. We also propose a randomized algorithm for $k$-set enumeration, based on the coupon collector's problem.
\item We propose a function space partitioning algorithm \mdrc that, for a fixed number of dimensions, guarantees a fixed approximation ratio on the rank-regret.
As confirmed in the experiments, applying a greedy strategy while partitioning the space makes this algorithm both efficient and effective in practice.
\item We conduct extensive experimental evaluations on two real datasets to verify the efficiency and effectiveness of our proposal.
\end{itemize}

In the following, we first formally define the terms, provide the problem formulation, and study its complexity in \S~\ref{sec:pre}. 
We provide the geometric interpretation of items, a dual space, and some statements in \S~\ref{sec:dual} that play key roles in the technical sections.
In \S~\ref{sec:2d}, we study the 2D problem and propose an approximation algorithm for it.
\S~\ref{sec:md} starts by revisiting the $k$-set notion and its connection to our problem. Then we provide the hitting set based approximation algorithm, as well as the function space partitioning based algorithm, for the general multi dimensional case.
Experiment results and related work are provided in \S~\ref{sec:exp} and~\ref{sec:related}, respectively, and the paper is concluded in \S~\ref{sec:conclusion}.

\section{Problem Definition}\label{sec:pre}

\begin{figure*}[!ht] 
    \begin{minipage}[t]{0.2\linewidth}
        \centering
        \vspace{-40mm}
        \begin{tabular}{|l|c|c|}
               \hline
              	id   &  $x_1$&$x_2$\\ \hline
             	$t_1$&  0.8& 0.28 \\ \hline
                $t_2$& 0.54& 0.45\\ \hline
                $t_3$& 0.67& 0.6\\ \hline
                $t_4$& 0.32& 0.42\\ \hline
                $t_5$& 0.46& 0.72\\ \hline
                $t_6$& 0.23& 0.52\\ \hline
                $t_7$& 0.91& 0.43\\ \hline
        \end{tabular}
        \vspace{7mm}\caption{A 2D dataset}
        \label{fig:toydata}
    \end{minipage}
    \begin{minipage}[t]{0.38\linewidth}
        \includegraphics[width = \textwidth]{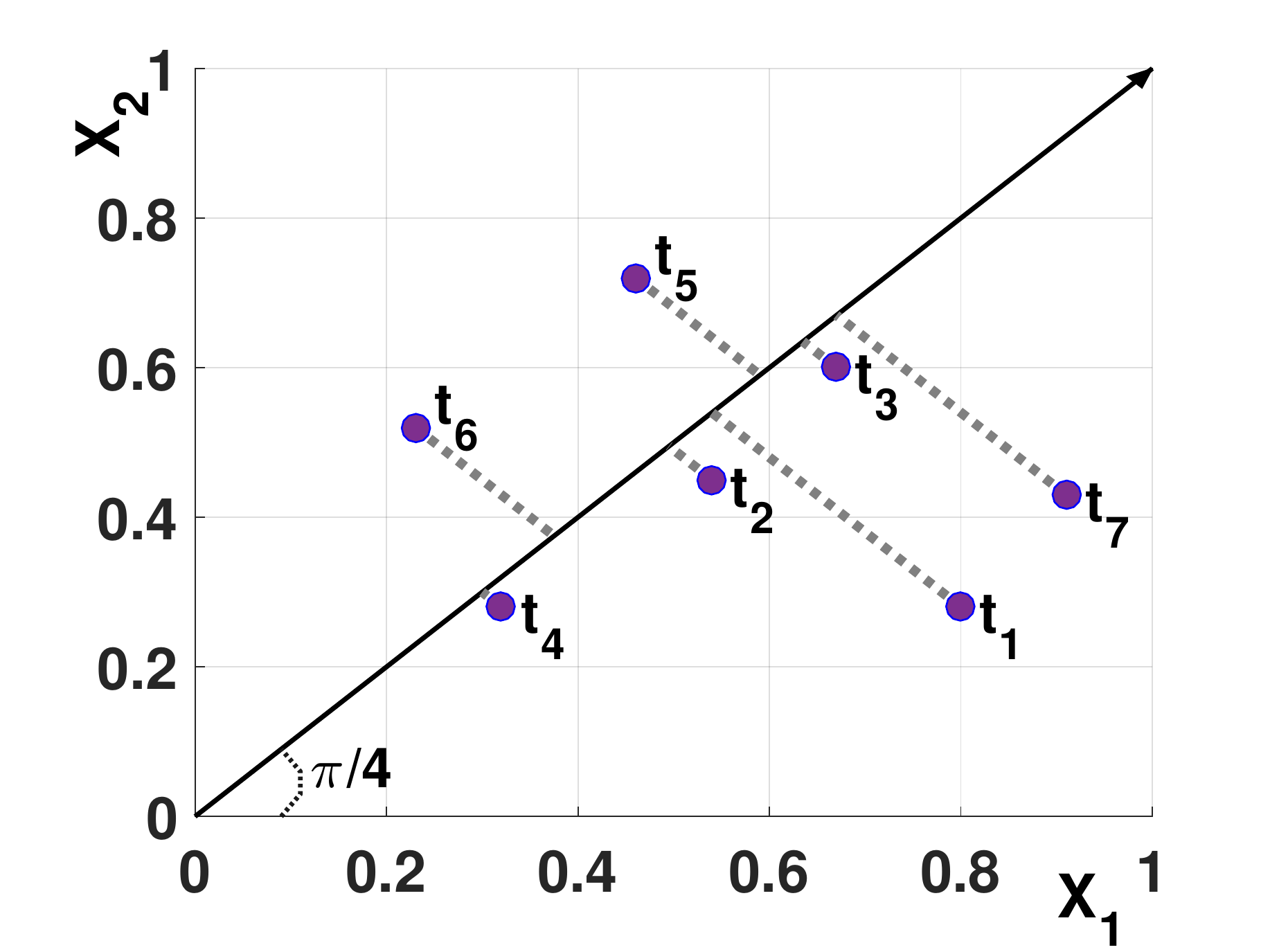}
        \vspace{-6mm}\caption{Items of Fig.~\ref{fig:toydata} ordered by $f=x_1+x_2$}
        \label{fig:toy1}
    \end{minipage}
    \begin{minipage}[t]{0.38\linewidth}
         \includegraphics[width = \textwidth]{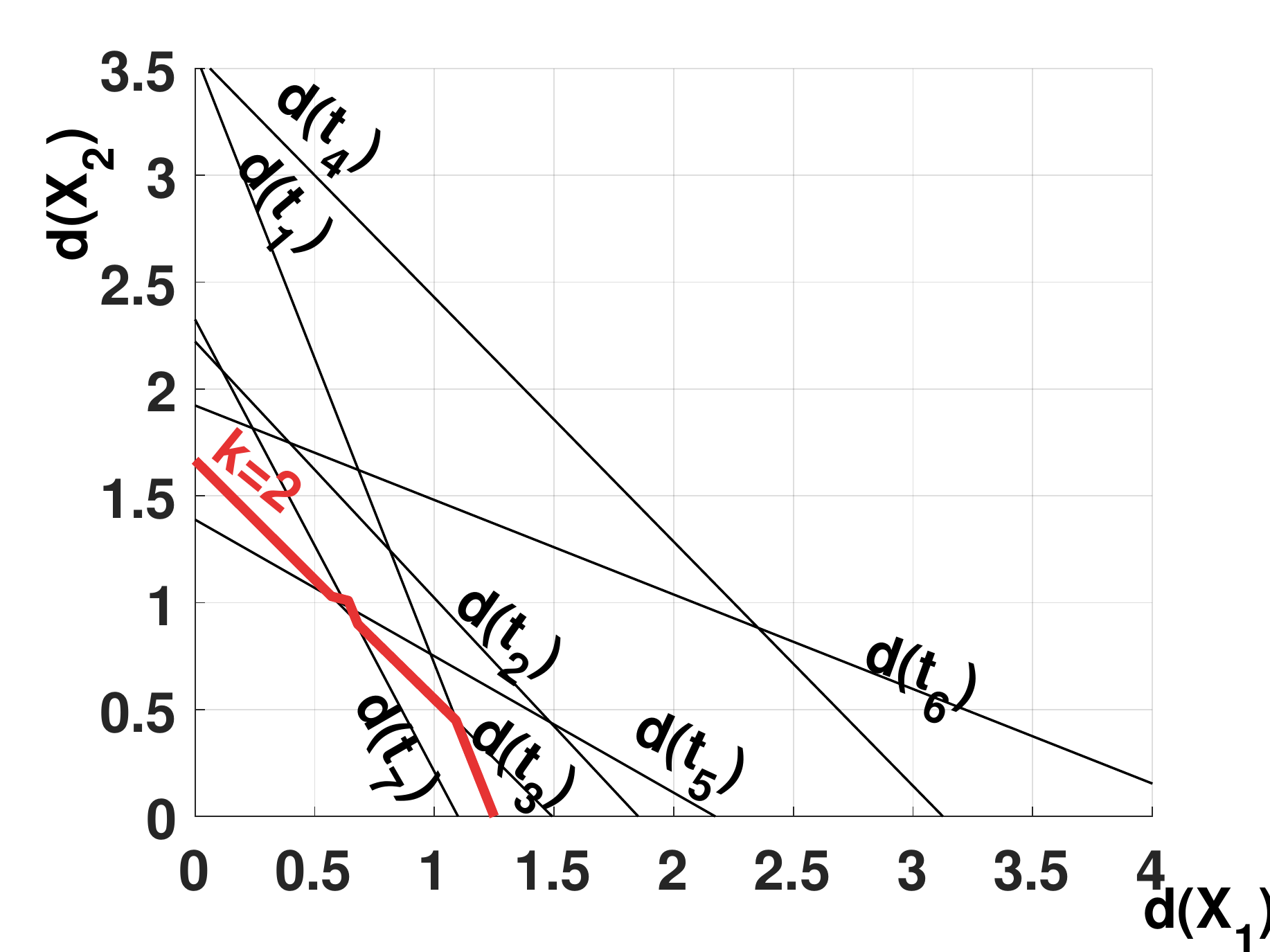}
        \vspace{-6mm}\caption{Dual presentation of items in Fig.~\ref{fig:toydata}}
        \label{fig:toy2}
    \end{minipage}
\end{figure*}

\noindent{\bf Database:}
Consider a database $\mathcal{D}$ of $n$ tuples, each consisting of $d$ 
attributes $\mathcal{A} = \{ A_1, A_2, \cdots , A_d \}$ that may be involved in a user's preference function\footnote{\small each tuple could also include additional attributes that are not involved in the user preferences. We do not consider these attributes for the purpose of this paper.}. Without loss of generality, we consider $A_i \in \mathbb{R}$ for all $i \in [1, d]$. 
We represent each tuple $t \in \mathcal{D}$ as a $d$-dimensional vector $\{t[1], t[2], \cdots t[d]\}$.

\vspace{2mm}
\noindent{\bf Ranking function:}
Consider a {\em ranking function} $f:\mathbb{R}^d\rightarrow\mathbb{R}$ that maps each tuple $t$ to a numerical score. We further assume, through applying any arbitrary tie-breaker, that no two tuples in the database have the same score - i.e., $\forall t_i, t_j \in \mathcal{D}$ with $i \neq j$, there is $f(t_i) \neq f(t_j)$. We say $t_i$ {\em outranks} $t_j$ if and only if $f(t_i)>f(t_j)$. For each $t\in\mathcal{D}$, let $\mathcal{r}_f(t)$ be the {\em rank} of $t$ in the ordered list of $\mathcal{D}$ based on $f$.  In other words, there are exactly $\mathcal{r}_f(t) - 1$ tuples in $\mathcal{D}$ that outrank $t$ according to $f$.

Ranking functions can take a wide variety of forms.  A popular type of ranking functions studied in the database literature is {\em linear ranking functions}, i.e., 
\begin{align}
f(t) = \sum_{i=1}^{d} w_i\cdot t[i], \label{equ:linearRank}
\end{align}
where $\vec{w}=\{w_1, w_2, \cdots ,w_d\}$ ($\forall i\in [1, d]$, $w_i > 0$) is a weight vector used to capture the importance of each attribute to the final score of a tuple. We use $\mathcal{L}$ to refer to the set of all possible linear ranking functions.

\vspace{2mm}
\noindent{\bf Maxima representation:} For a given database $\mathcal{D}$, if the set of ranking functions of interest is known - say $\mathcal{F}$ - then we can derive a compact {\em maxima representation} of $\mathcal{D}$ by only including a tuple $t \in \mathcal{D}$ if it represents the {\em maxima} (i.e., is the No.~1 ranked tuple) for at least one ranking function in $\mathcal{F}$. For example, if we focus on linear ranking functions in $\mathcal{L}$, then the maxima representation of $\mathcal{D}$ is what is known in the computational geometry and database literature as the {\em convex hull} \cite{dantzig1998linear} of $\mathcal{D}$.  Similarly, the set of {\em skyline} tuples \cite{skylineoperator}, a superset of the convex hull, form the maxima representation for the set of all monotonic ranking functions~\cite{asudeh2016discovering}.


A problem with the maxima representation is its potentially large size. For example, depending on the ``curvature'' of the shape within which the tuples are distributed, even in 2D, the convex hull can be as large as $O(n^{1/3})$~\cite{har2011expected}. The problem gets worse in higher dimensions~\cite{weil1993stochastic, asudeh2017}. As shown in \cite{asudeh2017}, in practice, even for a database with dimensionality as small as $d=5$, the convex hull can often be as large as $O(n)$.

To address this issue, we propose in this paper to relax the definition of maxima representation in order to reduce its size.  Specifically, instead of requiring the representation to contain the top-1 item for every ranking function, we allow the representation to stand so long as it contains at least one of the top-$k$ items for every ranking function.  This tradeoff between the compactness of the representation and the ``satisfaction'' of each ranking function is captured in the following formal definitions of {\em rank regret}:
\begin{definition}\label{def:r2f}
Given a subset of tuples $X\subseteq \mathcal{D}$ and a ranking function $f$, the rank-regret of $X$ for $f$ is the minimum rank of all tuples in $X$ according to $f$.
Formally,
$$
RR_f(X) = \underset{\forall t\in X}{\mbox{argmin}} (\mathcal{r}_f(t))
$$
\end{definition}

\begin{definition}\label{def:r2}
Given a subset of tuples $X\subseteq \mathcal{D}$ and a set of ranking functions $\mathcal{F}$, the rank-regret of $X$ for $\mathcal{F}$ is
the maximum rank-regret of $X$ for all functions in $\mathcal{F}$ - i.e.,
$$
RR_\mathcal{F}(X) = \underset{\forall f\in \mathcal{F}}{\mbox{argmax}} (RR_f(X))
$$
\end{definition}
\begin{definition}\label{def:rrr}
Given a set of ranking functions $\mathcal{F}$ and a user input $k \geq 1$, we say $X \subseteq \mathcal{D}$ is a rank-regret representation of $\mathcal{D}$ if and only if $X$ has the rank-regret of at most $k$ for $\mathcal{F}$, and no other subset of $\mathcal{D}$ satisfies this condition while having a smaller size than $X$. Formally:
\begin{align}
\nonumber \min\limits_{\forall X\subseteq \mathcal{D}} ~ & ~ \| X\| \\
\nonumber s.t. ~ & ~ RR_\mathcal{F}(X) \leq k
\end{align}
\end{definition}

\stitle{Problem Formulation:}
Finding the rank-regret representative of the dataset $\mathcal{D}$ is our objective in this paper. Therefore, we define the rank-regret representative (RRR) problem as following:

\vspace{0.02in}
\medskip\noindent
\framebox[\columnwidth]{\parbox{0.9\columnwidth}{ \textbf{\textsc{Rank-Regret Representative (RRR) Problem:}}
\\ \textit{
Given
a dataset $\mathcal{D}$,
a set of ranking functions $\mathcal{F}$,
and a user input $k$,
find the rank-regret representative of $\mathcal{D}$ for $\mathcal{F}$ and $k$ according to Definition~\ref{def:rrr}.
}}}\\

We note that there is a dual formulation of the problem - i.e., a user specifies the output size $|X|$, and aims to find $X$ that has the minimum rank-regret.
Interestingly, a solution for the RRR problem can be easily adopted for solving this dual problem.
Given the solver for RRR, for the set size $x$, one may apply a binary search to vary the value of $k$ in the range $[1,n]$ and, for each value of $k$, call the solver to find RRR. If the output size is larger than $x$, then the search continues in the upper half of the search space for $k$, or otherwise moves to the lower half.
Given an optimal solver for RRR, this algorithm is guaranteed to find the optimal solution for the dual problem at a cost of an additional $\log n$ factor in the running time.

In the rest of the paper, we focus on $\mathcal{L}$, the class of linear ranking functions.

\stitle{Complexity analysis:} 
The decision version of RRR problem asks if there exists a subset of size $r$ of $\mathcal{D}$ that satisfies the rank-regret of $k$.  Somewhat surprisingly, even though no solution for RRR exists in the literature, we can readily use previous results to prove the NP-completeness of RRR.  Specifically, the $(k,\epsilon)$-regret problem studied in Agrawal et al.~\cite{agarwal2017efficient} asks if there exists a set that guarantees the maximum regret-ratio of $\epsilon$ from the top $k$-th item of any linear ranking function.  Note that the $(2,0)$-regret problem is the equivalent of RRR problem for $k=2$.  Given that the NP-completeness proof in \cite{agarwal2017efficient} covers the $(2,0)$-case when $d \geq 3$, through a reduction to the NP-completeness of the convex polytope vertex cover (CPVC) problem proven by Das et al.~\cite{das1997complexity}, the NP-completeness of RRR follows.

We would like to reemphasize that even though the complexity of RRR was established in existing work, RRR is still a novel problem to study because all previous work in the regret ratio area focused on the case where $\epsilon > 0$.  In other words, they seek approximations on the absolute score achieved by tuples in the compact representation - a strategy which, as discussed in the introduction, could lead to a significant increase on rank regret because many tuples may congregate in a small band of absolute scores.  RRR, on the other hand, focus on the rank perspective (i.e., $\epsilon = 0$) and assumes no specific distribution of the absolute scores.

\section{Geometric interpretation of \\items}\label{sec:dual}
In this section, we use the geometric interpretation of items, explain a dual transformation, and propose a theorem that plays a key role in designing the RRR algorithms.

Each item $t\in \mathcal{D}$ with $d$ scalar attributes can be viewed as a point in $\mathbb{R}^d$.
As an example, Figure~\ref{fig:toydata} shows a sample dataset with $n=7$ items, defined over $d=2$ attributes.
Figure~\ref{fig:toy1} shows these items as the points in $\mathbb{R}^2$.
In this space, every linear preference function $f$ with the weight vector $\vec{w}=\{w_1, w_2, \cdots,w_d\}$ can be viewed as a ray that starts at the origin and passes through the point $\{w_1, w_2, \cdots,w_d\}$.
For each item $t\in\mathcal{D}$, consider the orthogonal to the ray of $f$ that passes through $t$; let the projection of $t$ on $f$ be the intersection of this line with the ray of $f$.
The ordering of items based on $f$ is the same as the ordering of the projection of them on $f$ where the items that are farther from the origin are ranked higher.
For example, Figure~\ref{fig:toy1} shows the ray of the function $f=x_1+x_2$, as well as the ordering of items based on it.
As shown in the figure, the items are ranked as $t_7$, $t_3$, $t_5$, $t_1$, $t_2$, $t_6$, and $t_4$, based on $f=x_1+x_2$.
Every ray starting at the origin in $\mathbb{R}^d$ is represented by $d-1$ angles. For example in $\mathbb{R}^2$, every ray is identified by a single angle. In Figure~\ref{fig:toy1}, the ray of function $f=x_1+x_2$ is identified by the angle $\theta=\pi/4$.

Small changes in the weights of a function will move the corresponding ray slightly, and hence change the projection points of items. However, it may not change the ordering of items.
In fact, while the function space is continuous and the number of possible weight vectors is infinite, the number of possible ordering between the items is, combinatorially, bounded by $n!$.

In order to study the ranking of items based on various functions, throughout this paper, we consider the dual space~\cite{Ed87} that transforms a tuple $t$ in $\mathbb{R}^d$ to the hyperplane $\mathsf{d}(t)$ as follows:
\begin{align}\label{eq:dual}
\mathsf{d}(t):~ \sum\limits_{i=1}^d t[i].x_i=1
\end{align}
In the dual space, the ray of a linear function $f$ with the weight vector $w = \{w_1, w_2,\cdots, w_d\}$ remains the same as the original space: the origin-starting ray that passes through the point $w$. 
The ordering of items based on a function is specified by the intersection of hyperplanes $\mathsf{d}(t_i)$ with it. 
However, unlike the original space, the intersections that are closer to the origin are ranked higher.
Using Equation~\ref{eq:dual}, every tuple in two dimensional space gets transformed to the line $\mathsf{d}(t):~ t[1].x_1 + t[2].x_2=1$. 
Figure~\ref{fig:toy2} shows the items in the example dataset of Figure~\ref{fig:toydata} in the dual space.
Looking at the intersection of dual lines with the $x_1$ axis in Figure~\ref{fig:toy2}, one can see that the ordering of items based on $f=x_1$ is $t_7$, $t_1$, $t_3$, $t_2$, $t_5$, $t_4$, and $t_6$; hence, for any set $X$ containing $t_7$ or $t_1$, for $f=x_1$ (i.e., $w=\{1,0\}$), $RR_f(X)\leq 2$.

The set of dual hyperplanes defines a dissection of $\mathbb{R}^d$ into connected convex cells named as the arrangement of hyperplanes~\cite{Ed87}.
The borders of the cells in the arrangement are $d-1$ dimensional facets.
For example, in Figure~\ref{fig:toy2}, the arrangement of dual lines dissect the $\mathbb{R}^2$ space into connected convex polygons. The borders of the convex polygons are one dimensional line segments.
For every facet in the arrangement consider
a line segment starting from the origin and ending on it.
Let the level of the facet be the number of hyperplanes intersecting this line segment.
We define a top-$k$ border (or simply $k$-border) as the set of facets having level $k$.
For example, the red chain consisting of piecewise linear segments in Figure~\ref{fig:toy2}, shows the top-$k$ border for $k=2$. 
For any function $f$, the hyperplanes intersecting the ray of $f$ on or below the top-$k$ border are the top-$k$.
Looking at the red line in Figure~\ref{fig:toy2}, one may confirm that:
\begin{itemize}
\item The top-$k$ border is not necessarily convex.
\item A dual hyperplane $\mathsf{d}(t_i)$ may contain more than one facet of the top-$k$ border. For example, $\mathsf{d}(t_3)$ in Figure~\ref{fig:toy2} contains two line segments of the top-$2$ border.
\end{itemize}
In the following, we propose an important theorem that is the key to designing the 2D algorithm, as well as the practical algorithm in MD.

\begin{theorem}\label{th:max2k}
For any item $t\in\mathcal{D}$ consider two (if any) functions $f$ and $f'$ where $\mathcal{r}_f(t)\leq k_1$ and $\mathcal{r}_{f'}(t)\leq k_2$.
Also, consider a line segment $l_{f,f'}$ starting from a point on the ray of $f$ and ending at a point on the ray of $f'$.
For any function $f''$ that its ray intersects $l_{f,f'}$, $\mathcal{r}_{f''}(t)\leq k_1+k_2$.
\end{theorem}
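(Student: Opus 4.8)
The plan is to translate the statement out of the geometry of rays and levels into a purely algebraic condition on the weight vectors, after which the claim reduces to a one-line counting argument. Write $\vec{w}_f$, $\vec{w}_{f'}$, $\vec{w}_{f''}$ for the weight vectors (equivalently, the direction vectors of the rays) of $f$, $f'$, $f''$. For each other tuple $s\in\mathcal{D}$ set $v_s = s - t$. Since the functions are linear, $s$ outranks $t$ under $f$ exactly when $\vec{w}_f\cdot s > \vec{w}_f\cdot t$, i.e. $\vec{w}_f\cdot v_s > 0$. Hence $\mathcal{r}_f(t) - 1 = |\{s : \vec{w}_f\cdot v_s > 0\}|$, and likewise for $f'$ and $f''$. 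The hypotheses then read $|\{s:\vec{w}_f\cdot v_s>0\}|\le k_1-1$ and $|\{s:\vec{w}_{f'}\cdot v_s>0\}|\le k_2-1$, and the goal becomes an upper bound on $|\{s:\vec{w}_{f''}\cdot v_s>0\}|$.

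The single geometric fact I need is that the direction of $f''$ is a \emph{nonnegative} combination of the directions of $f$ and $f'$. A point $a$ on the ray of $f$ is $a=\alpha\,\vec{w}_f$ with $\alpha>0$, and a point $b$ on the ray of $f'$ is $b=\beta\,\vec{w}_{f'}$ with $\beta>0$. A point $q$ on the segment $l_{f,f'}$ is $q=(1-\lambda)\,a+\lambda\,b$ for some $\lambda\in[0,1]$, and the ray of $f''$ through $q$ has direction $\vec{w}_{f''}$ proportional to $q$. Because the sign of $\vec{w}_{f''}\cdot v_s$ is unchanged under positive scaling, I may take $\vec{w}_{f''}=q=c_1\,\vec{w}_f+c_2\,\vec{w}_{f'}$ with $c_1=(1-\lambda)\alpha\ge 0$ and $c_2=\lambda\beta\ge 0$. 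In words: the segment $l_{f,f'}$ lies in the cone spanned by the two rays, so any ray crossing it is a conic combination of them.

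Finally, $\vec{w}_{f''}\cdot v_s = c_1(\vec{w}_f\cdot v_s) + c_2(\vec{w}_{f'}\cdot v_s)$. If this is positive then, since $c_1,c_2\ge 0$, at least one summand is positive, forcing $\vec{w}_f\cdot v_s>0$ or $\vec{w}_{f'}\cdot v_s>0$. Thus $\{s:\vec{w}_{f''}\cdot v_s>0\}\subseteq\{s:\vec{w}_f\cdot v_s>0\}\cup\{s:\vec{w}_{f'}\cdot v_s>0\}$, and a union bound gives $|\{s:\vec{w}_{f''}\cdot v_s>0\}|\le (k_1-1)+(k_2-1)$, i.e. $\mathcal{r}_{f''}(t)\le k_1+k_2-1\le k_1+k_2$. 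I expect the only real obstacle to be stating the conic-combination step cleanly, in particular insisting that ``a point on the ray'' is a strictly positive multiple of the weight vector so that $c_1,c_2$ are genuinely nonnegative; once that is in place the argument is immediate and in fact yields the slightly sharper bound $k_1+k_2-1$.
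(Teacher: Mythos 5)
Your proof is correct, and it takes a genuinely different route from the paper's. The paper works in the dual space: it restricts to the 2D plane spanned by the rays of $f$ and $f'$, considers the arrangement of the lines $\mathsf{L}(t_i)$ obtained by intersecting the dual hyperplanes with that plane, and argues by contradiction that if $t$ fell below rank $k_1+k_2$ somewhere between $f$ and $f'$, then at least $k_2$ items would have crossed $\mathsf{L}(t)$ and --- since two lines meet at most once --- would permanently outrank $t$, contradicting $\mathcal{r}_{f'}(t)\leq k_2$. You instead stay in the primal space and make the underlying mechanism explicit: the ray of $f''$ meeting the segment $l_{f,f'}$ is exactly the statement that $\vec{w}_{f''}$ is a conic combination $c_1\vec{w}_f+c_2\vec{w}_{f'}$ with $c_1,c_2\ge 0$, so $\vec{w}_{f''}\cdot(s-t)>0$ forces $\vec{w}_f\cdot(s-t)>0$ or $\vec{w}_{f'}\cdot(s-t)>0$, and a union bound finishes. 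Your version is shorter, avoids the arrangement machinery entirely, makes the ``between $f$ and $f'$'' condition precise (membership in the cone), and yields the slightly sharper bound $\mathcal{r}_{f''}(t)\le k_1+k_2-1$; your care in taking the segment endpoints to be strictly positive multiples of the weight vectors is exactly the right place to be careful. What the paper's dual-space formulation buys is continuity with the rest of the machinery (the $k$-border, the sweep in Algorithm~\ref{alg:2dfr}, and the lattice induction in Theorem~\ref{th:mdrc2} all live in that picture), but for this theorem in isolation your algebraic argument is the cleaner one. The only shared caveat, which affects both proofs equally, is that items exactly tied with $t$ under $f''$ are implicitly assumed away by the paper's general-position/tie-breaking convention.
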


\begin{proof}
We use the dual space and prove the theorem by contradiction.
In the dual space, consider the 2D plane passing through the rays of $f$ and $f'$ -- referred as $\mathbb{R}^2_{f,f'}$.
Note that $\mathbb{R}^2_{f,f'}$ is the affine space for the origin starting rays that intersect $l_{f,f'}$.
The intersection of each hyperplane $\mathsf{d}(t_i)$ and this plane is a line that we name as $\mathsf{L}(t_i)$.
The arrangement of lines $\mathsf{L}(t_i)$, $\forall t_i\in \mathcal{D}$, identify the orderings of items $t\in\mathcal{D}$ based on any origin-starting ray (function) that falls in $\mathbb{R}^2_{f,f'}$. 
This is similar to Figure~\ref{fig:toy2} in that the arrangement of lines $\mathsf{d}(t_i)$ identify the possible ordering of items in Figure~\ref{fig:toydata}.
For any pair of items $t_1$ and $t_2$, the intersection of the lines $\mathsf{L}(t_1)$ and $\mathsf{L}(t_2)$
shows the function (the origin-starting ray that passes through the intersection) that ranks $t_1$ and $t_2$ equally well, while on one side of this point $t_1$ outranks $t_2$, but $t_2$ outranks $t_1$  on the other side.
Note that since $\mathsf{L}(t_1)$ and $\mathsf{L}(t_2)$ are both (one dimensional) lines, they intersect at most once.

Now consider the point $t$ and its corresponding line $\mathsf{L}(t)$ in the arrangement.
Since $\mathcal{r}_f(t)\leq k_1$, there exist at most $k-1$ lines below it on the ray of $f$.
Moving from the ray of $f$ toward the ray of $f'$, in order for $t$ to have a rank greater than $k_1+k_2$, $\mathsf{L}(t)$ has to intersect with at least $k_2$ lines $\mathsf{L}(t_i)$ in a way that after the intersection points (toward $f'$) those points outrank $t$.
Since every pair of lines has at most one intersection point, $\mathsf{L}(t)$ will not intersect with those lines any further.
As a result, those (at least) $k_2$ points keep outranking $t$, and thus $t$ cannot have a rank smaller than or equal to $k_2$ again, which contradicts the fact that $\mathcal{r}_{f'}(t)\leq k_2$.
\end{proof}


Intuitively, Theorem~\ref{th:max2k} states that if $f''$ lies ``between'' $f$ and $f'$, then the rank of an item based on $f''$ is at worst the summation of its rank in $f$ and $f'$.
We use this result in the next section, as well as \S~\ref{sec:md}, for providing approximation algorithms for RRR.

\section{RRR in 2D}\label{sec:2d}
In this section, we study the special case of two dimensional (2D) data in which $d=2$. 
In \S~\ref{sec:pre}, we discussed the complexity of the problem for $d\geq 3$. However, we believe that the complexity of the problem is due to the complexity of covering the possible top-$k$ results and therefore, provide an approximation algorithm for 2D.
We consider the items in the dual space and use Theorem~\ref{th:max2k} as the key for designing the algorithm \twodrrr. Later on, in \S~\ref{sec:md}, we also use this theorem for designing a practical algorithm for the multi-dimensional cases.

Based on our discussion about the top-$k$ border in the previous section, each dual line may contain multiple segments of the top-$k$ border. As a results,
for each item, the set of functions for which the item is in the top-$k$, is a collection of disjoint intervals.
Based on Theorem~\ref{th:max2k}, if we take the union of these intervals (i.e., the convex closure), 
we get a single interval, in which the item is guaranteed to be in the top-2$k$. This, we are effectively applying Theorem~\ref{th:max2k} to get the 2-approximation factor.

At a high-level, the algorithm \twodrrr consists of two parts. It  first makes an angular sweep of a ray anchored at the origin, from the x-axis (angle $0^\circ$) toward the y-axis (angle $\pi/2^\circ$)  so that for every item $t\in\mathcal{D}$, it finds the first (smallest angle) and the last function (largest angle) for which $t$ is in top-$k$.
Then it transforms the problem into an instance of one-dimensional range cover~\cite{bronnimann1995GHS} and solves it optimally.

The first part, i.e., the angular sweep, is described in Algorithm~\ref{alg:2dfr}.
For every item $t$ the algorithm aims to find the first ($b[t]$) and the last ($e[t]$) function for which $\mathcal{r}_f(t)\leq k$.
Algorithm~\ref{alg:2dfr} initially orders the items based on their $x_1$-coordinates and puts them in a list $L$ that keeps tracks of orderings while moving from x to y-axis. 
It uses a min-heap data structure to maintain the ordering exchanges between the adjacent items in $L$.
Please note that each ordering exchange is always  between two adjacent items in $L$.
Using Equation~\ref{eq:dual}, the angle of the ordering exchange between two items $L_i$ and $L_{i+1}$ is as follows:
$$
\theta_{L_i,L_{i+1}} = \arctan \frac{L_{i+1}[1]-L_i[1]}{L_i[2]-L_{i+1}[2]}
$$

For the items that are initially in the top-$k$, Algorithm~\ref{alg:2dfr} sets $b[t]$ to the angle $0^\circ$.
Then, it sweeps the ray and pops the next ordering exchange from the heap. Upon visiting an ordering exchange, the algorithm updates the ordered list $L$. If the exchange occurs between the items at rank $k$ and $k+1$:
(i) if this is the first time $L_{k+1}$ enters the top-$k$, the algorithm sets $b[L_{k+1}]$ as the current angle, and (ii) for the item $L_k$ that  leaves the top-$k$, it sets $e[k]$ to the current angle.
The algorithm will update $e[k]$ later on if it becomes a top-$k$ again.
Figure~\ref{fig:toy3} shows the ranges for the example dataset in Figure~\ref{fig:toydata} and $k=2$ ($k$-border is shown in Figure~\ref{fig:toy2}).

\begin{figure}[!t]
    \centering
    \includegraphics[width=0.38\textwidth]{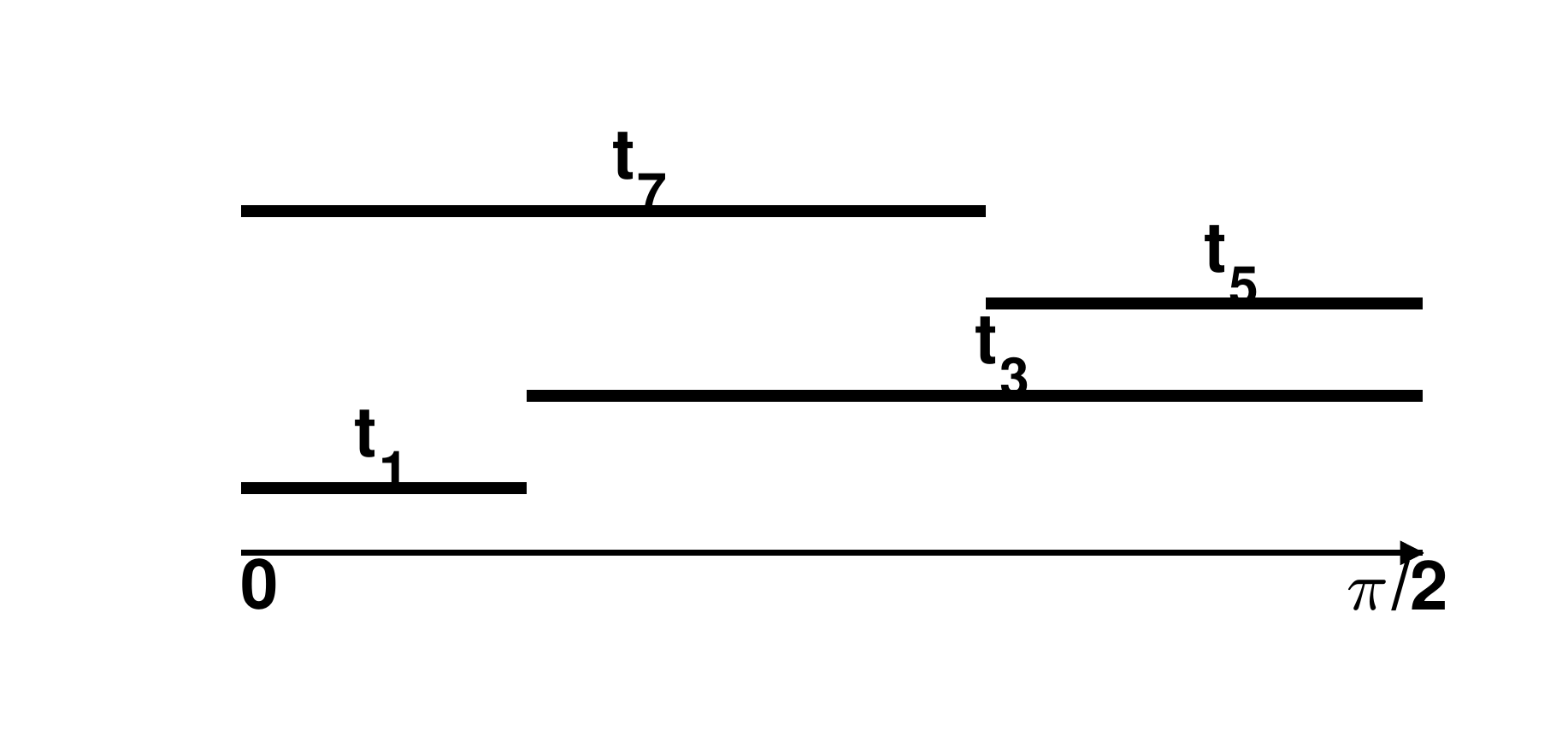}
    \vspace{-7mm}\caption{The ranges for Figure~\ref{fig:toy2}}
    \label{fig:toy3}
\end{figure}

\vspace{3mm}
\begin{algorithm}[!h]
\caption{{\bf FindRanges} \\
         {\bf Input:} 2D dataset $\mathcal{D}$, $n$, $k$ 
        }
\begin{algorithmic}[1]
\label{alg:2dfr}
    \STATE heap = {\it new} min-heap$()$; visited = {\it new} set$()$
    \STATE $L$ = sort $\mathcal{D}$ based on $x$
    \FOR{$i=1$ to $n-1$}
        \IF{$L_i[2]<L_{i+1}[2]$ \scriptsize{\tt /* skip if $L_i$ dominates $L_{i+1} */$} }
            \STATE heap.push( ($\arctan \frac{L_{i+1}[1]-L_i[1]}{L_i[2]-L_{i+1}[2]},L_i$) )
        \ENDIF
    \ENDFOR
    \STATE {\bf for} $i=1$ to $k$ {\bf do} $b[L_i]=0$
    \WHILE{heap is not empty}
        \STATE ($\theta$, $t$) = heap.pop() \scriptsize{\tt // let $i$ be the index of $t$ in $L$}
        \IF{$i == k$}
            \STATE {\bf if} $b[L_{i+1}]==null$ {\bf then} $b[L_{i+1}] = \theta$
            \STATE $e[L_{i}] = \theta$
        \ENDIF
        \STATE swap $L_i$ and $L_{i+1}$
        \IF{($L_{i-1}[1]<L_{i}[1]$ {\bf or} $L_{i-1}[2]<L_{i}[2]$) {\bf and} $(L_{i-1},L_{i})\notin$ visited)}
            \STATE heap.push( ($\arctan \frac{L_{i}[1]-L_{i-1}[1]}{L_{i-1}[2]-L_{i}[2]},L_{i-1}$) )
            \STATE visited.$add((L_{i-1},L_{i}))$
        \ENDIF
        \IF{($L_{i+1}[1]<L_{i+2}[1]$ {\bf or} $L_{i+1}[2]<L_{i+2}[2]$ {\bf and} $(L_{i+1},L_{i+2})\notin$ visited)}
            \STATE heap.push( ($\arctan \frac{L_{i+2}[1]-L_{i+1}[1]}{L_{i+1}[2]-L_{i+2}[2]},L_{i+1}$) )
            \STATE visited.$add((L_{i+1},L_{i+2}))$
        \ENDIF
    \ENDWHILE
    \STATE {\bf for} $i=1$ to $k$ {\bf do} $e[L_i]=\pi/2$
    \STATE {\bf return} $b$, $e$
\end{algorithmic}
\end{algorithm}

After computing the ranges for the items, the problem is transformed into a one dimensional range cover instance.
The objective is to cover the function space (the range between $0^\circ$ and $\pi/2^\circ$) using the least number of ranges.
The greedy approach leads to the optimal solution for this problem -- that is, at every iteration, select the range with the maximum coverage of the uncovered space.

At every iteration, the uncovered space is identified by a set of intervals.
Due to the greedy nature of the algorithm, the range of each remaining item intersects with at most one uncovered interval.
\begin{figure}[!tb]
    \centering
    \includegraphics[width=0.35\textwidth]{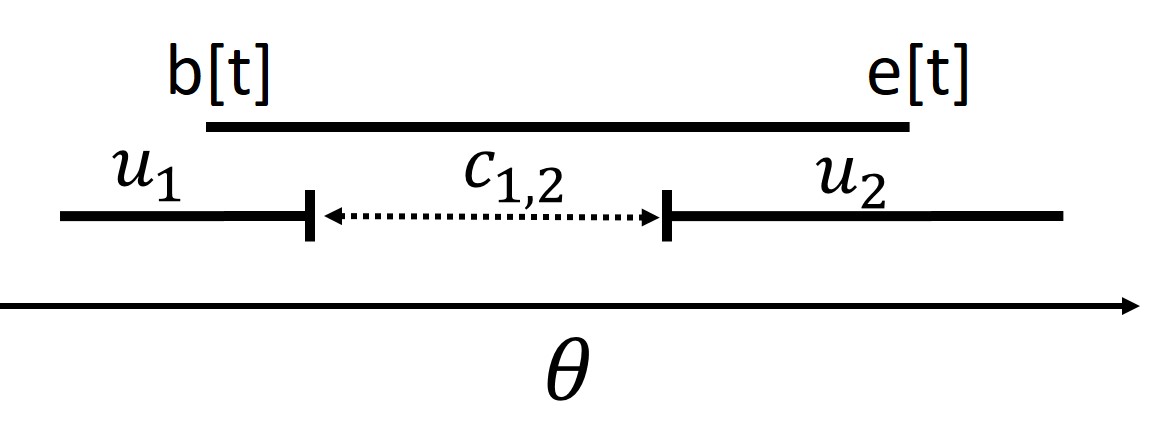}
    \caption{A contradictory example}
    \label{fig:rangecontradiction}
\end{figure}
To explain this by contradiction, consider an item $t$ that its range intersect with two (or more) uncovered intervals (Figure~\ref{fig:rangecontradiction}).
Let $u_1$ and $u_2$ be these intervals. Also, let us name the covered space between $u_1$ and $u_2$ as $c_{1,2}$.
(i) Since the range of $t$ intersects with both $u_1$ and $u_2$, $c_{1,2}$ is contained within the range of $t$, which implies the range of $t$ is larger than $c_{1,2}$.
(ii) $c_{1,2}$ should be covered by the range of at least one previously selected item $t'$. Also, since the ranges of items are continuous, the range of $t'$ cannot be larger than $c_{1,2}$.
As a result, the range of $t'$ is less than the range of $t$, which contradicts the fact that the ranges are selected greedily.

Using this observation, after finding the ranges for each item, \twodrrr (Algorithm~\ref{alg:2d}) uses a sorted list to keep track of the uncovered intervals.
The elements of the list are in the form of $\langle \theta_i, \vdash/\dashv \rangle$, where $\vdash$ (resp. $\dashv$) specifies that this is the beginning (resp. the end) of an uncovered interval.

\vspace{3mm}
\begin{algorithm}[!h]
\caption{\twodrrr \\
         {\bf Input:} 2D dataset $\mathcal{D}$, $n$, $k$
        }
\begin{algorithmic}[1]
\label{alg:2d}
    \STATE $b$,$e$ = {\bf FindRanges($\mathcal{D}$,$n$,$k$)}
    \STATE $\Psi$ = {\it new} set()
    \STATE $U$ = $[\langle 0, \vdash\rangle, \langle \pi/2, \dashv \rangle]$
    \WHILE{$|U|>0$}
        \STATE cov$_m$ $= 0$;
        \FOR{$t_i$ in $\mathcal{D}\backslash \Psi$}
            \STATE{\bf if} $b[t_i]== null$ {\bf then continue}
            \STATE $k$ = the index of the element in $U$ that $b[t_i]$ fall before it (found by applying binary search)
            \STATE{\bf if} $U_k[2]==~ \dashv$ {\bf then} cov $ = \min(U_k[1],e[t_i]) - b[t_i]$ \\ {\bf else} cov $ = \max (0, e[t_i] - U_k[1]$)
            \STATE{\bf if} cov $>$ cov$_m$ {\bf then} t $ = t_i$; cov$_m$ $=$ cov; $k_m = k$
        \ENDFOR
        \STATE $\Psi$.add($t$)
        \IF{$U_{k_m}[2] == \vdash$}
            \STATE {\bf if} $U_{k_m+1}[1]\leq e[t]$ {\bf then} remove $U_{k_m}$ and $U_{k_m+1}$
            \STATE {\bf else} $U_{k_m}[1] = e[t]$
        \ELSE
            \IF{$U_{k_m}[1]>e[t]$}
                \STATE $U.$insert$(k_m, \langle b[t], \dashv\rangle)$; $U.$insert$(k_m+1, \langle e[t], \vdash\rangle)$
            \ELSE
                \STATE $U_{k_m}[1] = b[t]$
            \ENDIF
        \ENDIF
    \ENDWHILE
    \STATE {\bf return} $\Psi$
\end{algorithmic}
\end{algorithm}

At every iteration, for each item that has still not been selected, the algorithm applies a binary search to find the element in $U_k$ that $b[t_i]$ falls right before it, i.e., $U_k[1]\geq b[t_i]$ and $\nexists k'<k$ such that $U_{k'}[1] \geq b[t_i]$.
Then depending on whether $U_k$ specifies the beginning ($\vdash$) or the end ($\dashv$) of an uncovered interval, it computes how much of the uncovered region $t_i$ covers.
The algorithm chooses the item with the maximum coverage, adds it to the selected set, and updates the uncovered intervals accordingly.
It stops when no more uncovered intervals are left.

As an example, for the dataset in Figure~\ref{fig:toydata}, if we execute Algorithm~\ref{alg:2d} on the ranges provided in Figure~\ref{fig:toy3}, it returns the set $\{t_3, t_1\}$.


\begin{theorem}\label{th:2drrr1}
The algorithm \twodrrr runs in $O(n^2\log n)$ time.
\end{theorem}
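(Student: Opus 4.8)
The plan is to bound the running time of \twodrrr by separately analyzing its two phases: the call to FindRanges (Algorithm~\ref{alg:2dfr}), which computes the ranges $b[\cdot]$ and $e[\cdot]$, and the greedy one-dimensional range-cover loop that constitutes the body of Algorithm~\ref{alg:2d}. I would show that each phase costs $O(n^2\log n)$, so that their sum is $O(n^2\log n)$.

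For FindRanges, the initial sort and heap construction cost $O(n\log n)$. The crux is bounding the number of iterations of the \textbf{while} loop, i.e., the number of adjacent-order exchanges encountered during the angular sweep from $0$ to $\pi/2$. Here I would invoke the fact (established in the proof of Theorem~\ref{th:max2k}) that the dual lines of any two items intersect at most once; equivalently, any pair of items reverses its relative order at most once as the ray rotates. Since the \texttt{visited} set ensures each adjacent pair is enqueued at most once, both the total number of exchanges and the total number of heap pushes are at most $\binom{n}{2}=O(n^2)$. Because the heap therefore never holds more than $O(n^2)$ events, each \texttt{pop} and each of the (at most two) \texttt{push} operations per exchange costs $O(\log n)$, which yields $O(n^2\log n)$ for this phase.

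For the greedy phase, each iteration of the outer \textbf{while} loop permanently moves exactly one item into $\Psi$, so it runs at most $n$ times. Within an iteration, the inner \textbf{for} loop scans the at most $n$ unselected items, and for each one performs a binary search over the uncovered-interval list $U$ at cost $O(\log|U|)$; since each selected item splits $U$ into at most one extra interval, $|U|=O(n)$ throughout, so each binary search is $O(\log n)$. The subsequent constant-size updates to $U$ (insertions or removals at the located positions) add only lower-order cost. Hence each outer iteration is $O(n\log n)$ and the whole phase is $O(n^2\log n)$.

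The step I expect to be the main obstacle is the exchange-count bound in the first phase: a naive analysis might fear that repeated entries into and exits from the top-$k$ could generate many events per pair of items, inflating the count beyond $O(n^2)$. The single-intersection property of dual lines is exactly what rules this out, capping the relative reorderings of each pair at one and thereby keeping the sweep at $O(n^2)$ events and the overall bound at $O(n^2\log n)$. Combining the two phases gives the claimed $O(n^2\log n)$ running time.
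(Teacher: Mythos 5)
Your proof is correct and follows essentially the same decomposition as the paper's: bound the angular sweep by the $O(n^2)$ cap on pairwise order exchanges (each dual-line pair intersects at most once), and bound the greedy phase by $n$ iterations each costing $O(n\log n)$ via binary search over the $O(n)$ uncovered intervals. You are in fact slightly more careful than the paper, which charges only constant work per intersection event and states the sweep as $O(n^2)$, whereas you correctly account for the $O(\log n)$ heap operations; either way the overall $O(n^2\log n)$ bound stands.
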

\begin{proof}
Intuitively, the summation of the cost of each iteration of the greedy algorithm is used to derive the running time.
Please find the details of the proof in \techrep{Appendix~\ref{ap:proofs}}\submit{the technical report~\cite{techreport}}.
\end{proof}

\begin{theorem}\label{th:2drrr2}
The output size of \twodrrr is not more than the size of the optimal solution for RRR.
\end{theorem}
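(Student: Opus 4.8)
The plan is to show that every feasible RRR solution is also a feasible solution to the one-dimensional range-cover instance that \twodrrr actually solves, and then to invoke the optimality of the greedy range cover. Since \twodrrr returns an optimum of that range-cover instance, it can be no larger than any feasible cover, and in particular no larger than the RRR optimum.

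First I would formalize the two covering problems over the function space $[0,\pi/2]$. For each item $t$, let $I_t$ denote the \emph{exact} set of angles $\theta$ for which $\mathcal{r}_f(t)\le k$; by the discussion of the top-$k$ border, $I_t$ is in general a union of disjoint intervals. Let $J_t=[b[t],e[t]]$ be the single interval produced by \textbf{FindRanges}, i.e., the convex closure of $I_t$, so that $I_t\subseteq J_t$. By Definitions~\ref{def:r2f} and~\ref{def:r2}, an optimal RRR solution $OPT$ satisfies $RR_\mathcal{L}(OPT)\le k$, which means that for every function some tuple of $OPT$ is in the top-$k$; equivalently $\bigcup_{t\in OPT} I_t=[0,\pi/2]$.

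Next I would exploit the monotonicity of coverage. Because $I_t\subseteq J_t$ for every $t$, replacing each exact region by its closure can only enlarge the covered set, so $\bigcup_{t\in OPT} J_t\supseteq\bigcup_{t\in OPT} I_t=[0,\pi/2]$, giving $\bigcup_{t\in OPT} J_t=[0,\pi/2]$. Hence $OPT$ is itself a feasible cover of the range-cover instance on the intervals $\{J_t\}$ that \twodrrr solves. Since \twodrrr returns an \emph{optimal} solution of this instance (the greedy interval cover is optimal, as argued via the contradiction of Figure~\ref{fig:rangecontradiction}), its output $\Psi$ is no larger than any feasible cover, and in particular $|\Psi|\le|OPT|$, which is the claim.

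The main obstacle, and the conceptual crux, is the interval-expansion step: I must argue that enlarging each item's top-$k$ region to its convex closure $J_t$ never turns a feasible RRR set into an infeasible cover, so that the optimum of the \emph{easier} range-cover problem upper-bounds the size of the RRR optimum. Here Theorem~\ref{th:max2k} is needed only to guarantee that $J_t$ is a single contiguous interval, which is precisely what makes the greedy interval cover optimal; the size comparison itself follows purely from the containment $I_t\subseteq J_t$ together with that optimality, and does not rely on \twodrrr achieving rank-regret exactly $k$ (indeed it only guarantees $2k$).
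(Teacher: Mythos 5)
Your proof is correct and follows essentially the same route as the paper's: both arguments rest on the containment of each item's exact top-$k$ region in the single interval $[b[t],e[t]]$ produced by \textbf{FindRanges} (the paper phrases this as each top-$k$ region's covering set $S_R$ being a superset of its top-$k$), so that any feasible RRR solution is a feasible cover of the relaxed range-cover instance, which \twodrrr solves optimally. The only quibble is a side remark: $J_t$ is a single interval by construction (convex closure), whereas Theorem~\ref{th:max2k} is what licenses using that closure without losing the $2k$ rank guarantee — but this does not affect the size argument.
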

\begin{proof}
The proof follows from the fact that the ranges identified by Algorithm~\ref{alg:2dfr} provide a superset for each top-$k$ result.
Please refer to \techrep{Appendix~\ref{ap:proofs}}\submit{the technical report~\cite{techreport}} for the details.
\end{proof}

\begin{theorem}\label{th:2drrr3}
The output of \twodrrr guarantees the maximum rank-regret of $2k$.
\end{theorem}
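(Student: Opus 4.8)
The plan is to combine the coverage guarantee of the range-cover step with Theorem~\ref{th:max2k}, lifting the per-endpoint top-$k$ property into a top-$2k$ property at every angle. Concretely, I would show that the output set $X=\Psi$ covers the entire angular domain $[0,\pi/2]$, and that for every covered angle some selected item is guaranteed, via Theorem~\ref{th:max2k}, to have rank at most $2k$ there.

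First I would establish coverage. Algorithm~\ref{alg:2d} terminates only when the list $U$ of uncovered intervals is empty, so it suffices to argue that the union $\bigcup_{t}[b[t],e[t]]$ already covers $[0,\pi/2]$. This holds because at any angle $\theta$ the rank-$1$ item (indeed every true top-$k$ item) has $\theta$ inside its range, since $[b[t],e[t]]$ is a superset of the angles at which $t$ is genuinely in the top-$k$, as noted for Algorithm~\ref{alg:2dfr}. Hence at least one range contains $\theta$, the greedy cover can always complete, and on termination every $\theta\in[0,\pi/2]$ lies in $[b[t],e[t]]$ for some $t\in X$.

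Next, I would fix an arbitrary linear function $f''$ and let $\theta$ be the angle of its ray. By coverage there is $t\in X$ with $b[t]\le\theta\le e[t]$. By the construction in Algorithm~\ref{alg:2dfr}, the endpoint $b[t]$ is either $0$ (set because $t$ was among the initial top-$k$) or the angle at which $t$ entered the top-$k$ through an exchange at ranks $k/k{+}1$; in both cases there is a function $f$ at angle $b[t]$ with $\mathcal{r}_{f}(t)\le k$, and symmetrically a function $f'$ at angle $e[t]$ with $\mathcal{r}_{f'}(t)\le k$. I would then invoke the elementary geometric fact that, for any two points $p$ and $p'$ chosen on the rays of $f$ and $f'$, a ray from the origin at an intermediate angle $\theta\in[b[t],e[t]]$ must cross the segment $pp'$ (the ray enters the triangle with vertices $O,p,p'$ at the vertex $O$ and must exit through the opposite side). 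Thus the ray of $f''$ intersects a segment $l_{f,f'}$ of exactly the form required by Theorem~\ref{th:max2k}.

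Finally, applying Theorem~\ref{th:max2k} with $k_1=k_2=k$ yields $\mathcal{r}_{f''}(t)\le 2k$, so $RR_{f''}(X)\le \mathcal{r}_{f''}(t)\le 2k$. Since $f''$ was arbitrary, $RR_{\mathcal{L}}(X)\le 2k$, which is the claim. I expect the main obstacle to be the careful treatment of the two boundary endpoints $b[t]=0$ and $e[t]=\pi/2$: there is no exchange event pinning the rank to exactly $k$ at those extremes, so I must argue directly from the initialization lines of Algorithm~\ref{alg:2dfr} that $t$ indeed has rank at most $k$ there. The ray-segment intersection claim is routine, but I would state it explicitly so that the hypothesis of Theorem~\ref{th:max2k} is cleanly verified.
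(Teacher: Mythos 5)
Your proposal is correct and follows essentially the same route as the paper's proof: establish that the selected ranges cover the angular domain, note that each item is in the top-$k$ at both endpoints of its range, and apply Theorem~\ref{th:max2k} with $k_1=k_2=k$ to bound the rank by $2k$ at every intermediate angle. The extra care you take in verifying the ray--segment intersection hypothesis and the boundary endpoints $b[t]=0$, $e[t]=\pi/2$ is a welcome tightening of details the paper leaves implicit, but it does not change the argument.
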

\begin{proof}
This result is easy to prove, by applying Theorem~\ref{th:max2k}. The details are provided in \techrep{Appendix~\ref{ap:proofs}}\submit{the technical report~\cite{techreport}}.
\end{proof}
\section{RRR in MD}\label{sec:md}
In multi-dimensional cases (MD) where $d>2$, the continuous function space becomes problematic, the geometric shapes become complex, and even the operations such as computing the volume of a shape and the set operations become inefficient.
Therefore, in this section, we use the $k$-set notion~\cite{Ed87} to take an alternative route for solving the RRR problem by transforming the continuous function space to discrete sets.
This leads to the design of an approximation algorithm that guarantees the rank-regret of $k$, introduces a log approximation-ratio  in the output size, and runs in time polynomial for a constant number of dimensions. We will explain the details of this algorithm in \S~\ref{subsec:mdapprox}.
Then, in \S~\ref{subsec:mdrc}, we propose the function-space partitioning algorithm \mdrc that uses the result of Theorem~\ref{th:max2k} in its design for solving the problem without finding the $k$-sets.
Note that proposed algorithms in this section are also applicable for 2D.
\subsection{k-Set and Its Connection to RRR}\label{sec:kset}

A $k$-set is an important notion in combinatorial geometry with applications including half-space range search~\cite{halfspace1, halfspace2}.
Given a set of points in $\mathbb{R}^d$, a $k$-set is a collection of exactly $k$ points in the point set that are strictly separable from the rest of points using a $d-1$ dimensional hyperplane.

Consider a finite set $P$ of $n$ points in the euclidean space $\mathbb{R}^d$. 
A hyperplane $h$ 
partitions it into $P^+ = P\cap h^+$ and $P^- = P\cap h^-$, called half spaces of $P$, where $h^+$ (resp. $h^-$) is the open half space above\footnote{\small We use the word above (resp. below) to refer to the half space that does not contain (resp. contains) the origin.} (resp. below) $h$~\cite{Ed87}. 
The hyperplane $h$ in the Euclidean space $\mathbb{R}^d$ can be uniquely defined by a point $\rho$ on it and a $d$ dimensional normal vector $v$ orthogonal to it, and has the following equation:  

\begin{align}
v[1](x_1 - \rho [1]) + v[2](x_2 - \rho [2]) + \cdots + v[d](x_d - \rho [d])  = 1
\end{align}

A half space $S$ of $P$ is a $k$-set if $card(S) = k$.
Without loss of generality, we consider the positive half spaces and $v[i]\geq 0$.
That is, $S\subseteq P$ is a $k$-set if
$\exists$ a point $\rho$ and the positive normal vector $v$ such that $S = h(\rho ,v)^+$ and $card( h(\rho,v)^+ ) = k$.
For example, the empty set is a $0$-set and each point in the convex hull of $P$ is a $1$-set. We use $\mathcal{S}$ to refer to the collection of $k$-sets of $P$; i.e., $ \mathcal{S} = \{ S\subseteq P | S$ is a $k$-set$\}$.
For example, Figure~\ref{fig:toy4} shows the collection of $k$-sets for $k=2$ for the dataset of Figure~\ref{fig:toydata}. As we can see, the $2$-sets are $\mathcal{S}=\{ \{t_1, t_7\}, \{t_7, t_3\},\{t_3, t_5\}\}$.

\begin{figure}[!t]
    \centering
    \includegraphics[width=0.38\textwidth]{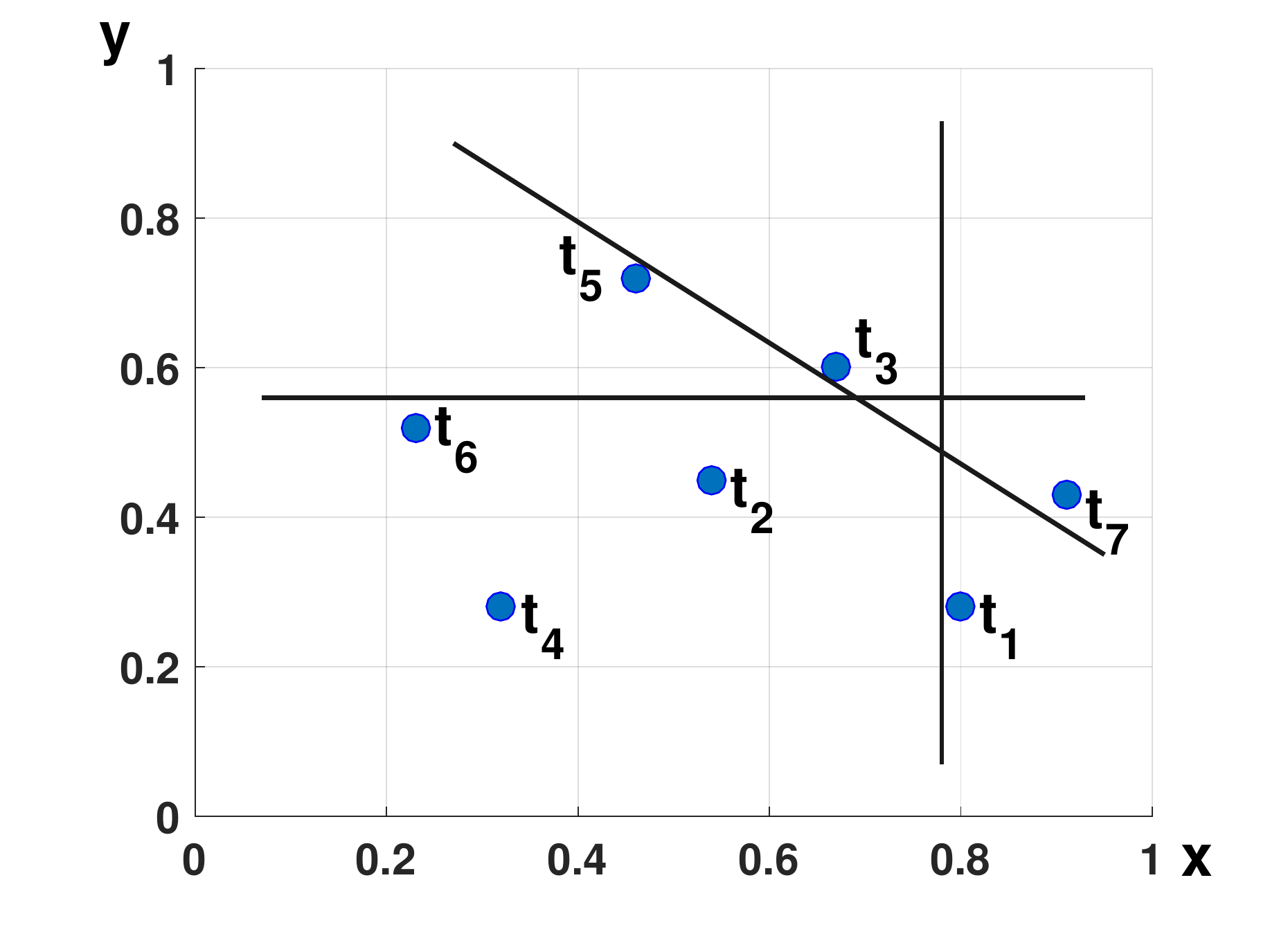}
    \vspace{-3mm}\caption{The $k$-sets of Figure~\ref{fig:toydata} for $k=2$.}
    \label{fig:toy4}
\end{figure}

If we consider items $t\in\mathcal{D}$ as points in $\mathbb{R}^d$, the notion of $k$-sets is interestingly related to the notion of top-$k$ items, as the following arguments show:
\begin{itemize}
\item A hyperplane $h(\rho,v)$ describes the set of all points with the same score as point $\rho$, for the ranking function $f$ with the weight vector $v$, i.e., the set of attribute-value combinations with the same scores as $\rho$ based on the ranking function $f$.
\item If we consider a hyperplane  $h(\rho,v)$ where $card( h(\rho,v)^+ ) = k$, the set of points belonging to $h(\rho,v)^+$ is equivalent to the top-$k$ items of $\mathcal{D}$ for the ranking function with weight vector $v$.
\end{itemize}

\begin{lemma}\label{lemma:1}
Let $\mathcal{S}$ be the collection of all $k$-sets for the points corresponding to the items $t\in\mathcal{D}$.
For each possible ranking function $f$, there exists a $k$-set $S\in\mathcal{S}$ such that top-$k$($f$)=$S$.
\end{lemma}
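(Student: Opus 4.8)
The plan is to exhibit, for an arbitrary ranking function $f$, a single hyperplane whose positive half-space is exactly top-$k$($f$); this is precisely the reverse direction of the second bullet preceding the lemma. Write $v$ for the weight vector of $f$ (with $v[i]>0$, so in particular $v[i]\geq 0$ as the $k$-set definition requires), and recall that the score of an item, viewed as a point $t$, is $f(t)=v\cdot t$. Since the tie-breaker forces all scores to be distinct, top-$k$($f$) is well defined and has exactly $k$ elements whenever $n\geq k$.

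First I would translate a hyperplane with normal $v$ until it isolates the $k$ highest-scoring points. Concretely, let $s_k$ and $s_{k+1}$ be the $k$-th and $(k+1)$-th largest scores among the points of $\mathcal{D}$; distinctness gives $s_k>s_{k+1}$, so the open interval $(s_{k+1},s_k)$ is nonempty and contains no item score. Fixing any threshold $c$ in this interval, the level set $\{x:v\cdot x=c\}$ has exactly the $k$ top items strictly on its high-score side and all remaining points strictly on its low-score side, with no item lying on the hyperplane itself.

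It then remains to cast this threshold hyperplane in the paper's normalized form $v\cdot(x-\rho)=1$ and to check the orientation. Choosing any point $\rho$ with $v\cdot\rho=c-1$ gives $h(\rho,v)=\{x:v\cdot x=c\}$, and under the positive-data convention used here the origin (score $v\cdot 0=0<c$) lies on the low-score side; hence $h(\rho,v)^+$ — the open half-space not containing the origin — is exactly $\{x:v\cdot x>c\}$, i.e.\ top-$k$($f$). Therefore $card(h(\rho,v)^+)=k$ with positive normal $v$, so top-$k$($f$) is a $k$-set and belongs to $\mathcal{S}$, which proves the claim. I do not expect a genuine obstacle: the only points requiring care are invoking the distinct-scores assumption to guarantee a clean gap between $s_k$ and $s_{k+1}$, and reconciling the chosen threshold $c$ with the ``$=1$'' normalization and the origin-based orientation of $h^+$ versus $h^-$.
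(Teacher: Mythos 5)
Your proof is correct and follows essentially the same route as the paper's: both arguments exhibit a hyperplane with normal vector equal to the weight vector of $f$ whose positive (origin-free) half-space contains exactly top-$k$($f$), so that top-$k$($f$) is itself a $k$-set; the paper merely wraps this construction in a contradiction. If anything, your placement of the threshold strictly between the $k$-th and $(k+1)$-th scores is slightly more careful than the paper's choice of passing the hyperplane through the $k$-th ranked item, since the latter leaves that item on the boundary rather than in the \emph{open} half-space required by the strict-separation definition of a $k$-set.
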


\begin{proof}
We provide the proof by contradiction. Please refer to \techrep{Appendix~\ref{ap:proofs}}\submit{the technical report~\cite{techreport}} for the details.
\end{proof}

Based on Lemma~\ref{lemma:1}, all possible answers to top-$k$ queries on linear ranking functions can be captured by the collection of $k$-sets.
This will help us in solving the RRR problem
in \S~\ref{subsec:mdapprox}.
As we shall explain in \S~\ref{sec:related}, the best known upper bound on the number of $k$-sets in $\mathbb{R}^2$ and $\mathbb{R}^3$ are $O(n k ^{1/3})$~\cite{dey1998} and $O(n k^{3/2})$~\cite{sharir2000}. For $d>3$, the best known upper bound is $O(n^{d-\varepsilon})$~\cite{ABFK92}, where $\varepsilon>0$ is a small constant\footnote{\small Note that this is polynomial for a constant $d$.}.
However, as we shall show in \S~\ref{sec:exp}, in practice $|\mathcal{S}|$ is significantly smaller than the upper bound.

In \techrep{Appendix~\ref{ap:kset-enum}}\submit{the technical report~\cite{techreport}}, we review the $k$-set enumeration.
For the 2D case, a ray sweeping algorithm (similar to Algorithm~\ref{alg:2dfr}) that follows the $k$-border finds the collection of $k$-sets. For higher dimensions, the enumeration can be modeled as a graph traversal problem~\cite{andrzejak1999optimization}. The algorithm considers the $k$-set graph $G(V,E)$ in which the vertices are the $k$-sets and there is an edge between two $k$-sets if the size of their intersection is $k-1$. We discuss the connectivity of the graph, and explain how to traverse it and enumerate the $k$-sets.

Next, we use the $k$-set notion for developing an approximation algorithm for RRR that guarantees a rank-regret of $k$ and a logarithmic approximation ratio on the output size.

\subsection{MDRRR: Hitting-Set Based Approximation Algorithm}\label{subsec:mdapprox}
As we discussed in \S~\ref{sec:kset} the collection of $k$-sets contains the set of all possible top-$k$ results for the linear ranking functions. As a result, a set of tuples $X \subseteq \mathcal{D}$ that contains at least one item from each $k$-set is guaranteed to have at least one of the items in the top-$k$ of any linear ranking function; which implies that $X$ satisfies the rank regret of $k$.
On the other hand, since every $k$-set $S=h(\rho,v)^+$ is at least the top-$k$ of the linear function $f$ with the weight vector $v$, a subset $X' \subseteq \mathcal{D}$ that does not contain any of the items of a $k$-set $S$ does not satisfy the rank regret of $k$.

\begin{algorithm}[!t]
\caption{\mdrrr\\
		 {\bf Input:} collection of $k$-sets $\mathcal{S}$
		}
\begin{algorithmic}[1]
\label{alg:GHS}
\STATE $D= \underset{\forall s_i\in\mathcal{S}}{\cup} S_i$
\STATE Set weight of each point to one
\WHILE{True}
	\STATE $X$ = Select the $\epsilon$-net
	\IF{$X$ is not hitting set}
        \FOR{$S$ in $\mathcal{S}$}
        	\IF{points in a set k-set $S$ missed by $X$ }
            	\STATE Double the weights of the points in $S$
            \ENDIF
		\ENDFOR
	\ELSE
		\STATE {\bf return} ($X$)
	\ENDIF
\ENDWHILE
\end{algorithmic}
\end{algorithm}

One can see that given the collection of $k$-sets, our RRR problem is similar to the {\em minimum hitting set problem}~\cite{karp1972reducibility}. 
Given a universe of $n$ items $\mathcal{D}$, and a collection of sets $\mathcal{S}$ where each set $S \in \mathcal{S}$ is a subset of $\mathcal{D}$, the minimum hitting set problem asks for the smallest set of items $X' \subseteq \mathcal{D}$ such that $X'$ has a non-empty intersection with every set $S$ of $\mathcal{S}$. The minimum hitting set problem is known to be NP-complete~\cite{karp1972reducibility} and the existing approximation algorithm provides a factor of $O(\log n)$ from the optimal size $c$. 
A deterministic polynomial time algorithm with an improved factor of $O(\delta \log \delta c)$ had been proposed by~\cite{bronnimann1995GHS} for a specific instance of this problem  -- the {\em geometric hitting set problem} --  where $\delta$ is the Vapnik Chervonenkis dimension (VC-dimension).
The VC-dimension is defined as the cardinality of the largest set of points $Y \subseteq \mathcal{D}$ that can be {\em shattered} by $\mathcal{S}$, i.e., the system introduced by $\mathcal{S}$ on $Y$ contains all the subsets of $Y$~\cite{vapnik2013nature}. 
In the RRR problem, since the $k$-sets are defined by half spaces, the VC-dimension is $d$ (the number of attribute)~\cite{VCofHalfSpace, bronnimann1995GHS}.


Next we formally show the mapping of the RRR problem into the  geometric hitting set problem, and provide the detail of approximation algorithm.

\medskip\noindent
 \framebox[\columnwidth]{\parbox{0.9\columnwidth}{ \textsc{Mapping to Geometric hitting set problem:}
Given a set space $R=(D, \mathcal{S})$, where $\mathcal{S}$ is the collection of $k$-sets and $D= \underset{\forall s_i\in\mathcal{S}}{\cup} S_i$ is the set of points, 
find the smallest set $X \subseteq D$ such that $\forall S\in\mathcal{S}, \exists t\in X$ s.t. $t\in S$.
}}\\

In  \mdrrr (Algorithm~\ref{alg:GHS}), we use the approximation algorithm for the geometric hitting set problem that is proposed in~\cite{bronnimann1995GHS} using the concept of $\epsilon$-nets~\cite{haussler1987epsilon}. More formally, an $\epsilon$-net of $D$ for $\mathcal{S}$ is a set of points $X \subseteq D$ such that $X$ contains a point for every $S\in\mathcal{S}$ with size of at least $\epsilon |D|$. Algorithm~\ref{alg:GHS} shows the psudocode of \mdrrr, the approximation algorithm that uses the mapping to geometric hitting set problem. The algorithm initializes the weight of each point to one. It then iteratively, in polynomial time, selects (using weighted sampling) a small-sized set of tuples $X \subseteq D$ 
that intersects all highly weighted sets in $\mathcal{S}$. More formally if a set $X \subseteq D$ intersects each $k$-set $S$ of $\mathcal{S}$ with weight larger than $\epsilon W(D)$, where $W(D)$ is the total weights of of points in $D$, then $X$ is an $\epsilon$-net. If $X$ is not a hitting set (lines 4-9), then the algorithm doubles the weight of the points in the particular sets $S$ of $\mathcal{S}$ missed by $X$.

 \smallskip\noindent{\bf Discussion:} In summary, considering the one-to-one mapping between the RRR problem and the geometric hitting set problem over the collection of $k$-sets, we can see that:
\begin{itemize}
\item \mdrrr guarantees rank-regret of $k$. That is because \mdrrr is guaranteed to return at least one item from each $k$-set in $\mathcal{S}$, the set of all top-$k$ results.
\item \mdrrr guarantees the approximation ratio of $O(d\log dc)$, where $c$ is the optimal output size and $d$ is the number of attributes.
\item \mdrrr runs in polynomial time. This is because 
it has been shown in~\cite{bronnimann1995GHS} that the number of iterations the algorithm must perform is at most $\mathcal{O}(c \log \frac{n'}{c})$, where $n'$ is the number of points in $D$, and $c$ is the size of the optimal hitting set. 
Moreover, recall that \mdrrr needs the collection of $k$-sets, which can be enumerated by traversing the
 $k$-set graph \techrep{(c.f Appendix~\ref{ap:kset-enum})} which runs in polynomial time.
\end{itemize}

\smallskip\noindent
Nevertheless, although it runs in polynomial time, the \mdrrr algorithm is quite impractical as described above. It needs the collection of $k$-sets ($\mathcal{S}$), as input.
Therefore, its efficiency depends on the $k$-set enumeration and the size of $|\mathcal{S}|$.
Although, as we shall show in \S~\ref{sec:exp}, in practice the size of $|\mathcal{S}|$ is reasonable and 
\techrep{as explained in Appendix~\ref{ap:kset-enum},} the $k$-set graph traversal algorithm is linear in $|\mathcal{S}|$, the algorithm does not scale beyond a few hundred items in practice.
The reason is that while exploring each $k$-set, it needs to solve much as $n$ linear programs, each of size $n$ constraints over $d$ variables.
This makes the enumeration extremely inefficient. Therefore, we need to explore practical alternatives to the $k$-set enumeration algorithm.

In the next subsection, we propose a more practical randomized algorithm \ksetr for $k$-set enumeration.

\subsubsection{\ksetr: Sampling for  $k$-set enumeration}

\begin{algorithm}[!tb]
\caption{{\bf \ksetr}\\
		 {\bf Input:} dataset $\mathcal{D}$, termination condition $c$
		}
\begin{algorithmic}[1]
\label{alg:kset_baseline}
\STATE $\mathcal{S}_r=\{\}$ ,counter=0
\WHILE{counter$\leq c$}
	\STATE \texttt{\scriptsize // generate a sample function}
	\FOR{$i=1$ to $d$}
		\STATE $w_i=|N(0,1)|$ \texttt{\scriptsize // N(0,1) draws a sample from the standard normal distribution}
	\ENDFOR
	\STATE \texttt{\scriptsize // find the corresponding $k$-set}
	\STATE $S$ = top-$k$($\mathcal{D}$, $f_w$)
	\IF{$S\in\mathcal{S}_r$}
		\STATE counter = counter+1
	\ELSE
		\STATE add $S$ to $\mathcal{S}_r$
		\STATE counter = 0
	\ENDIF
\ENDWHILE
\STATE {\bf return} ($\mathcal{S}_r$)
\end{algorithmic}
\end{algorithm}

Here we propose a sampling-based alternative for the $k$-set enumeration.
There is a many to one mapping between the linear ranking functions and the $k$-sets. That is, while 
a $k$-set is the top-$k$ of infinite number of linear ranking functions, every ranking function is mapped to only one $k$-set, the set of top-$k$ tuples for that function.
Instead of the exact enumeration of the $k$-sets, which requires solving expensive linear programming problems for the discovery of the $k$-sets, we propose a randomized approach based on the {\em coupon collector's problem}~\cite{couponcollector}. The coupon collector's problem describes the ``collect the coupons and win'' contest. Given a set of coupons, consider a sampler that every time picks a coupon uniformly at random, with replacement. The requirement is to keep sampling until all coupons are collected. Given a set of $\nu$ coupons, it has been shown that the expected number of samples to draw is in $\Theta(\nu\log \nu)$.
We use this idea for collecting the $k$-sets by generating random ranking functions and taking their top-$k$ results as the $k$-sets.
This is similar to the coupon collector's problem setting, except that the probabilities of retrieving the $k$-sets are not equal. For each $k$-set, this probability depends on the portion of the function space for which it is the top-$k$.
Therefore, rather than applying a $k$-set enumeration algorithm, Algorithm~\ref{alg:kset_baseline}, repeatedly generates random functions and computes their corresponding $k$-sets, stopping when it does not find a new $k$-set after a certain number of iterations.
The algorithm returns the collection of $k$-sets it has discovered, as $\mathcal{S}_r$.
Recall that the function space in MD, is modeled by the universe of origin-starting rays.
The set of points on the surface of the (first quadrant of the) unit hypersphere represent the universe of origin-starting rays.
Therefore, uniformly selecting points from the surface of the hypersphere in $\mathbb{R}^d$, is equivalent to uniformly sampling the linear functions.
Algorithm~\ref{alg:kset_baseline} adopts the method proposed by
Marsaglia~\cite{marsaglia1972choosing} for uniformly sampling points on the surface of the unit hypersphere, in order to generate random functions. It generates the weight vector of the sampled function as the absolute values of $d$ random normal variables. 
We note that since the $k$-sets are not collected uniformly by \ksetr, its running time is not the same as coupon collector's problem, but as we shall show in \S~\ref{sec:exp}, it runs well in practice.

After finding $\mathcal{S}_r$, using Algorithm~\ref{alg:kset_baseline}, we
pass it, instead of $\mathcal{S}$ to \mdrrr.
Since Algorithm~\ref{alg:kset_baseline} does not guarantee the discovery of all $k$-sets, the output of the hitting set algorithm does not guarantee the rank-regret of $k$ for the missing $k$-sets. However, the missing $k$-sets (if any) are expected to be in the very small regions that has never been hit by a randomly generated function.
Also, the fact that the adjacent $k$-sets in the $k$-set graph vary in only one item,  further reduces the chance that a missing $k$-set is not covered.
Therefore, this is very unlikely that the top-$k$ of a randomly generated function is not within the output.

On the other hand, since Algorithm~\ref{alg:kset_baseline} finds a subset of $k$-sets, the output size for running the hitting set on top of the subset (i.e., $\mathcal{S}_r$) is not more than the output size of running the hitting set on $\mathcal{S}$. As a result, the output size remains within the logarithmic approximation factor.

\subsection{MDRC: Function Space Partitioning}\label{subsec:mdrc}

\begin{figure}[!bt]
\centering
\includegraphics[width=0.18\textwidth]{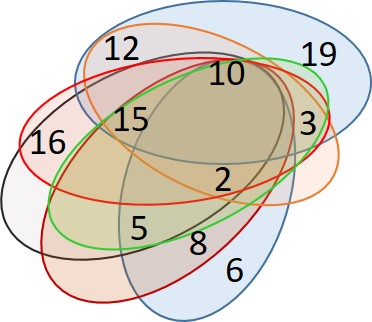}
\caption{Illustration of overlap between the $k$-sets of a sample of 20 items from the DOT dataset (c.f. \S~\ref{sec:exp}) while $d=2$}\label{fig:ksetsexample}
\end{figure}

Given the collection of $k$-sets, the hitting set based approximation algorithm \mdrrr guarantees the rank-regret of $k$ while introducing a logarithmic approximation in its output size.
Despite these nice guarantees, \mdrrr still suffers from $k$-set enumeration, as it can only be executed after the $k$-sets have been discovered.
Therefore, as we shall show in \S~\ref{sec:exp}, in practice it does not scale well for large problem instances.
One observation from the $k$-set graph is the high overlap between the $k$-sets, as the adjacent $k$-sets differ in only one item.
As a result many of them may share at least one item.
For example, we selected 20 random items from the DOT (Department of Transportation) dataset (c.f. \S~\ref{sec:exp}) while setting $d=2$. By performing an angular sweep of a ray from the x-axis to the y-axis while following the $k$-border (see Figure~\ref{fig:toy2}), we enumerated the $k$-sets. In Figure~\ref{fig:practicalrunning}, we illustrate the overlap between these $k$-sets.
The figure confirms the high overlap between the $k$-sets where the item with id 2 appears in all except one of the sets. This motivates the idea of finding these items without enumerating the $k$-sets.
In addition, the top-$k$ of two similar functions (where the angle between their corresponding rays is small) are more likely to intersect. 

We uses these observations in this subsection and propose the function-space partitioning algorithm \mdrc which (similar to the 2D algorithm \twodrrr) leverages Theorem~\ref{th:max2k} in its design.
The algorithm is based on the extension of Theorem~\ref{th:max2k} that bounds the rank of an item that appears in the top-$k$ of the functions corresponding to the corners of a convex polytope in the function space.
 
\mdrc considers the function space in which every function (i.e., a ray starting from the origin) in $\mathbb{R}^d$ is identified as a set of $d-1$ angles.
Rather than discovering the $k$-sets and transforming the problem to a hitting set instance, here our objective is to cover the {\em continuous function space} (instead of the discrete $k$-set space).
Intuitively, 
we propose a recursive algorithm which, at every recursive step, considers a hyper-rectangle in the function space, and either assigns a tuple to the functions in the space, or uses a round robin strategy on the $d-1$ angles to break down the space in two halves, and to continue the algorithm in each half. 
This partitioning strategy is similar to the Quadtree data structure~\cite{finkel1974quad}.
The reason for choosing this strategy is to maximize the similarity of the functions in the corners of the hyper-rectangles to increase the probability that their top-$k$ sets intersect.
\mdrc also follows a {\em greedy} strategy in covering the function space, by partitioning a hyper-rectangle only if it cannot assign a tuple to it.

Consider the space of possible ranking functions in $\mathbb{R}^d$.
This is identified by a set of $d-1$ angles $\Theta = \{\theta_1, \theta_2,\cdots , \theta_{d-1}\}$, where $\theta_i\in [0,\pi/2]$.
To explain the algorithm, consider the binary tree where each node is associated with a hyper-rectangle in the angle space, specified by a range vector of size $d-1$.
The root of the tree is the complete angle space, that is the hyper-rectangle defined between the ranges $[0,\pi/2]$ on each dimension.
Let the level the nodes increase from top to bottom, with the level of the root being $0$.
Every node at level $l$ uses the angle $\theta_{l\%(d-1) + 1}$ to partition the space in two halves, the negative half (left children) and the positive half (the right child).
Figure~\ref{fig:practicalrunning} illustrates an example of such tree for 3D. The root uses the angle $\theta_1$ to partition the space. The left child of the root is associated with the rectangle specified by the ranges $\{[0,\pi/4],[0,\pi/2]\}$ and the right child shows the one by $\{[\pi/4,\pi/2],[0,\pi/2]\}$. 
The nodes at level $1$ use the angle $\theta_{1\%2 + 1} = \theta_2$ for partitioning the space.

\begin{figure}[!tb]
\centering
\includegraphics[width=0.35\textwidth]{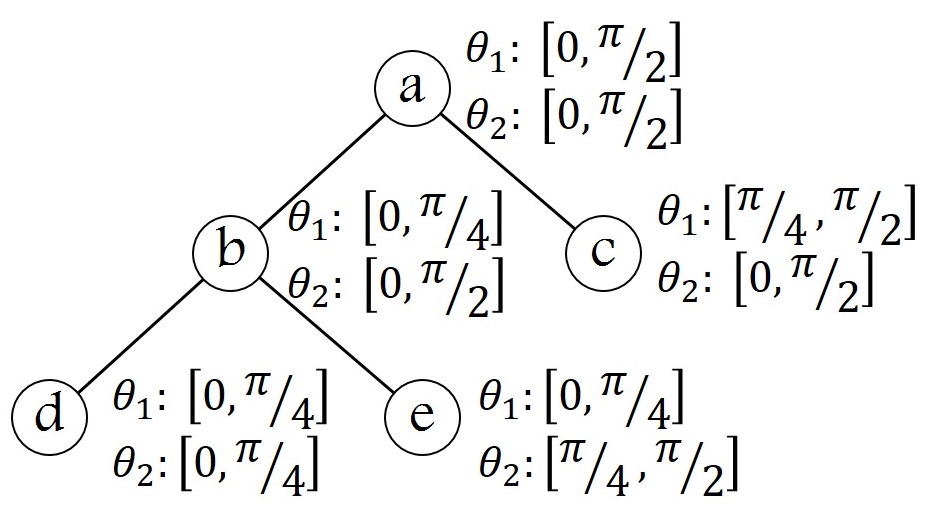}
\caption{Illustration of space partitioning and the recursion tree of Algorithm~\ref{alg:mdpractical2}}\label{fig:practicalrunning}
\end{figure}

At every node, the algorithm checks the top-$k$ items in the corners of the node's hyper-rectangle and if there exists an item that is common to all of them, returns it.
Otherwise, it generates the children of the node and iterates the algorithm on the children. The algorithm combines the outputs of each of the halves as its output.
Algorithm~\ref{alg:mdpractical2} shows the pseudocode of the recursive algorithm \mdrc.
The algorithm is started by calling \mdrc$(\mathcal{D}, n, d, k, 0, \\ \{[0,\pi/2]~| \forall 0<i<d\})$.

\begin{algorithm}[!tb]
\caption{{\bf \mdrc}\\
		 {\bf Input:} The dataset $\mathcal{D}$, $n$, $d$, $k$, level of the node: $l$, ranges: $R$
		}
\begin{algorithmic}[1]
\label{alg:mdpractical2}
	\STATE $C =$ corners of the hypercube specified by $R$
	\STATE $I = \underset{\forall c_i\in C}{\cap}~\mbox{top-}k(\mathcal{D}, c_i)$
    \STATE {\bf if} $|I|>0$ {\bf then return} $I[1]$
    \STATE $i = l\%(d-1) + 1$
    \STATE mid = $\frac{R[i][1]+R[i][2]}{2}$
    \STATE $lR = rR = R$
    \STATE $lR[i][2] =$ mid; $rR[i][1] =$ mid;
    \STATE {\bf return } \mdrc$(\mathcal{D}, n, d, k, l+1, lR)\cup$ \mdrc$(\mathcal{D}, n, d, k, l+1, rR)$
\end{algorithmic}
\end{algorithm}

As a running example for the algorithm, let us consider Figure~\ref{fig:practicalrunning}.
The algorithm starts at the root, partitions the space in two halves, as the intersection of the top-$k$ of its hyper-rectangle's corners are empty, and does the recursion at nodes $b$ and $c$.
The node $c$ finds the item $t_c$ which appears in the top-$k$ of all of its corners and returns it to $a$.
Node $b$, however, cannot find such an item and does the recursion by partitioning its hyper-rectangle along the angle $\theta_2$.
Nodes $d$ and $e$ find the items $t_d$ and $t_e$ and return them to $b$ which returns $\{t_d,t_e\}$ to the root. The root returns $\{t_c, t_d, t_e\}$ as the representative.

\begin{theorem}\label{th:mdrc2}
The algorithm \mdrc guarantees the maximum rank-regret of $dk$.
\end{theorem}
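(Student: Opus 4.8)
The plan is to reduce the claim to a multidimensional generalization of Theorem~\ref{th:max2k}. \mdrc terminates only after its recursion has partitioned the entire function space $[0,\pi/2]^{d-1}$ into hyper-rectangular cells, and for each cell it outputs an item $t$ that lies in the top-$k$ of all $2^{d-1}$ corner functions of that cell. Since the cells cover every function, it suffices to show that for an arbitrary function $f''$ inside a cell, the item $t$ assigned to that cell satisfies $\mathcal{r}_{f''}(t)\le dk$; taking the maximum over all functions then yields $RR_{\mathcal{L}}(X)\le dk$ for the output $X$. To make Theorem~\ref{th:max2k} directly applicable, I would first fix a transversal cross-section (e.g. the simplex $\sum_i x_i=1$ in the positive orthant) and identify every function with the single point where its ray meets this cross-section, so that a function lying ``between'' two others corresponds to an ordinary line segment in the cross-section.

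Next I would prove the following simplex version of Theorem~\ref{th:max2k} by induction on the number of vertices $j$: if $t$ is in the top-$k$ of each of the $j$ functions spanning a simplex in the cross-section, then $\mathcal{r}_{f''}(t)\le jk$ for every function $f''$ inside that simplex. The base cases $j=1$ and $j=2$ are immediate, the latter being exactly Theorem~\ref{th:max2k} with $k_1=k_2=k$. For the inductive step, extend the line from the first vertex $v_1$ through $f''$ until it meets the opposite facet, a simplex spanned by the remaining $j-1$ vertices, at a point $g$. By the induction hypothesis $\mathcal{r}_g(t)\le (j-1)k$, and $f''$ lies on the segment joining $v_1$ (where $\mathcal{r}_{v_1}(t)\le k$) to $g$; applying Theorem~\ref{th:max2k} to this segment gives $\mathcal{r}_{f''}(t)\le k+(j-1)k=jk$.

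Finally I would connect the cells to this lemma. Because each angle coordinate moves the cross-section point monotonically, the image of a hyper-rectangular cell is contained in the convex hull of its $2^{d-1}$ corner points, so $f''$ is a convex combination of the corner functions. By Carath\'eodory's theorem in the $(d-1)$-dimensional cross-section, $f''$ is then a convex combination of at most $d$ of those corners, i.e. it lies in a sub-simplex with at most $d$ vertices, all of which are corners where $t$ is guaranteed to be in the top-$k$. Invoking the simplex lemma with $j=d$ yields $\mathcal{r}_{f''}(t)\le dk$, which completes the argument.

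The main obstacle is the geometric bookkeeping rather than the arithmetic: I must make sure that at each inductive step the triple $(v_1,g,f'')$ really satisfies the hypothesis of Theorem~\ref{th:max2k}, namely that the ray of $f''$ crosses a segment between points on the rays of $v_1$ and $g$, which is exactly what the cross-section representation buys, and I must justify that the interior function genuinely lies in the convex hull of the cell's corner functions so that Carath\'eodory applies. Once these two facts are in place, the rank additivity is supplied verbatim by Theorem~\ref{th:max2k}, and the bound collapses to the clean $dk$ rather than the naive $2^{d-1}k$ that a face-by-face induction would produce.
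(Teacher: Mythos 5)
Your argument reaches the same bound $dk$ by the same engine as the paper --- repeated application of Theorem~\ref{th:max2k} with one endpoint always anchored at a corner where the rank is at most $k$ --- but the decomposition is genuinely different. The paper inducts on the face lattice of the cell: a point in an $i$-dimensional face lies on a segment from a corner (rank $\le k$) to a boundary point of an $(i{-}1)$-dimensional face (rank $\le ik$ by induction), giving $(i{+}1)k$ and hence $dk$ at the top level $i=d-1$. You instead prove a simplex lemma (rank $\le jk$ inside any simplex whose $j$ vertices are all witnesses of top-$k$ membership) and then use Carath\'eodory to place an arbitrary cell function in a sub-simplex spanned by at most $d$ of the $2^{d-1}$ corners. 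The two inductions are arithmetically isomorphic ($k+(j-1)k$ versus $k+ik$), and both avoid doubling; your remark that a face-by-face induction ``would produce'' $2^{d-1}k$ undersells the paper's proof, which is face-by-face and still gets $dk$. What your route buys is a cleaner, reusable statement (the simplex lemma is not tied to boxes) and an explicit reduction from $2^{d-1}$ corners to $d$ relevant ones; the cross-section reformulation you introduce also makes the ``between'' hypothesis of Theorem~\ref{th:max2k} checkable, which the paper leaves implicit.

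The one step that does not hold as stated is the containment of the cell's image in the convex hull of its corner images. Coordinatewise monotonicity does not give this: under a genuinely angular parametrization of rays (which is what \mdrc uses, partitioning $[0,\pi/2]^{d-1}$), the cross-section image of even a single edge of the box is the intersection of a cone with a hyperplane --- a curved arc that need not lie on the chord between its endpoints --- so the image of the box can protrude outside the convex hull of the $2^{d-1}$ corner points and Carath\'eodory has nothing to bite on. (The paper's proof leans on the same unstated assumption when it applies Theorem~\ref{th:max2k} along ``line segments'' of the angle-space hyper-rectangle, so you are in good company.) The clean repair, consistent with your own setup, is to parametrize functions affinely from the start: identify each ray with its intersection with a fixed hyperplane such as $\sum_i x_i=1$ and let the partition live in those coordinates, so that cells are literal convex polytopes in the cross-section, edges are chords, and both the Carath\'eodory step and the simplex lemma apply verbatim. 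With that adjustment your proof is complete.
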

\begin{proof}
This proof uses Theorem~\ref{th:max2k} to extend the maximum rank bound from one dimensional ranges to $(d-1)$ dimensions. Please find the details in \techrep{Appendix~\ref{ap:proofs}}\submit{the technical report~\cite{techreport}}.
\end{proof}
Theorem~\ref{th:mdrc2} uses the result of Theorem~\ref{th:max2k} to provide an upper bound on the maximum rank of the items assigned to each hyper-rectangle, for the functions inside it.
However, as we shall show in \S~\ref{sec:exp}, the rank-regret of its output in practice is much less. For all the experiments we ran, the output of \mdrc satisfied the maximum rank of $k$ for all settings.
Also, following the greedy nature in partitioning the function space, as we shall show in \S~\ref{sec:exp}, the output of \mdrc in all cases was less than 40.
In addition, in \S~\ref{sec:exp}, we show that this algorithm is very efficient and scalable in practice.


\begin{figure*}[ht]
    \begin{minipage}[t]{0.23\linewidth}
        \centering
        \includegraphics[scale=0.24]{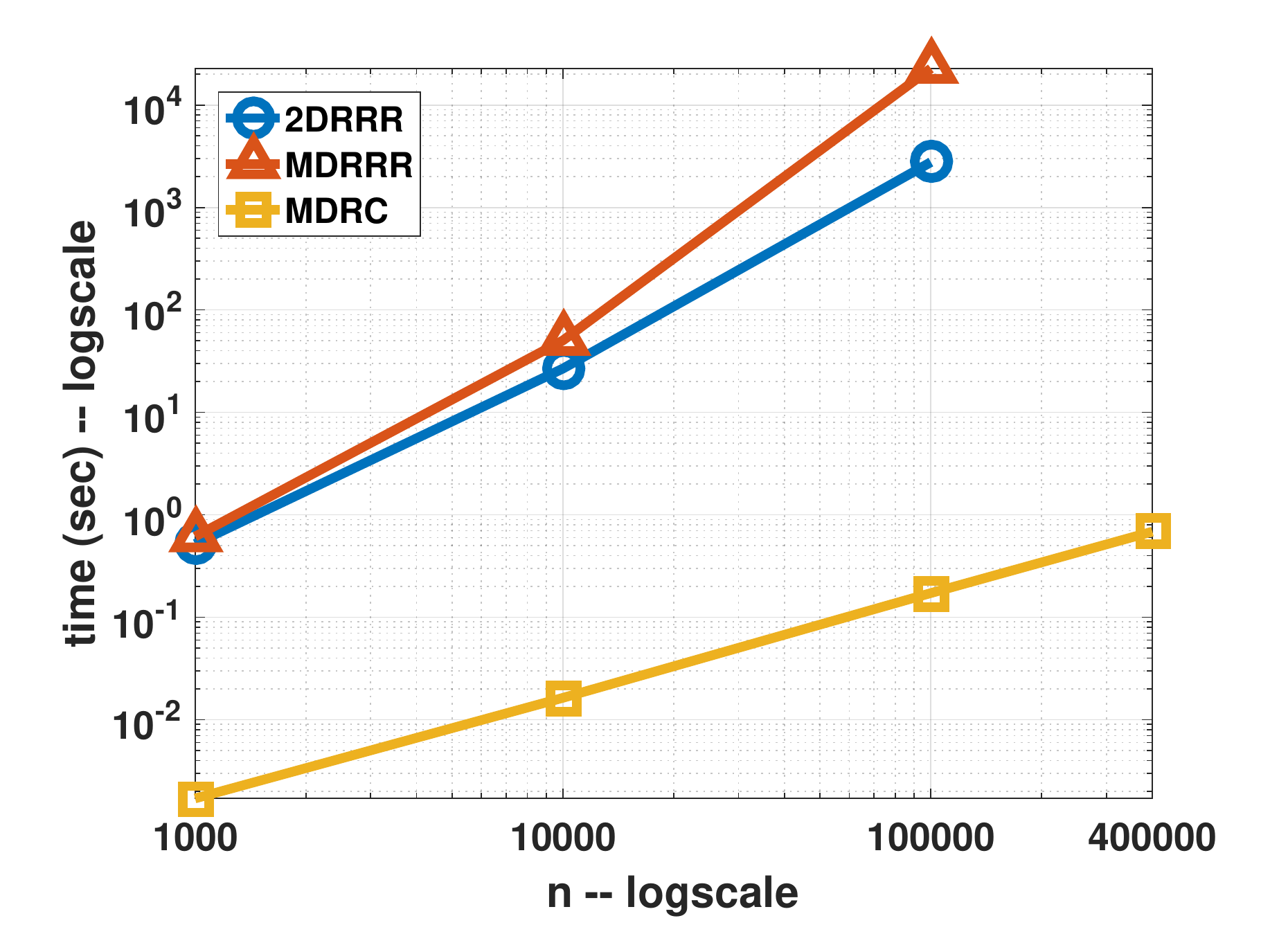}
        \vspace{-8mm}\caption{DOT dataset, 2D, Efficiency: Impact of dataset size ($n$)}
        \label{fig:DOT2DVN1}
    \end{minipage}
    \hspace{0mm}
    \begin{minipage}[t]{0.23\linewidth}
        \centering
        \includegraphics[scale=0.24]{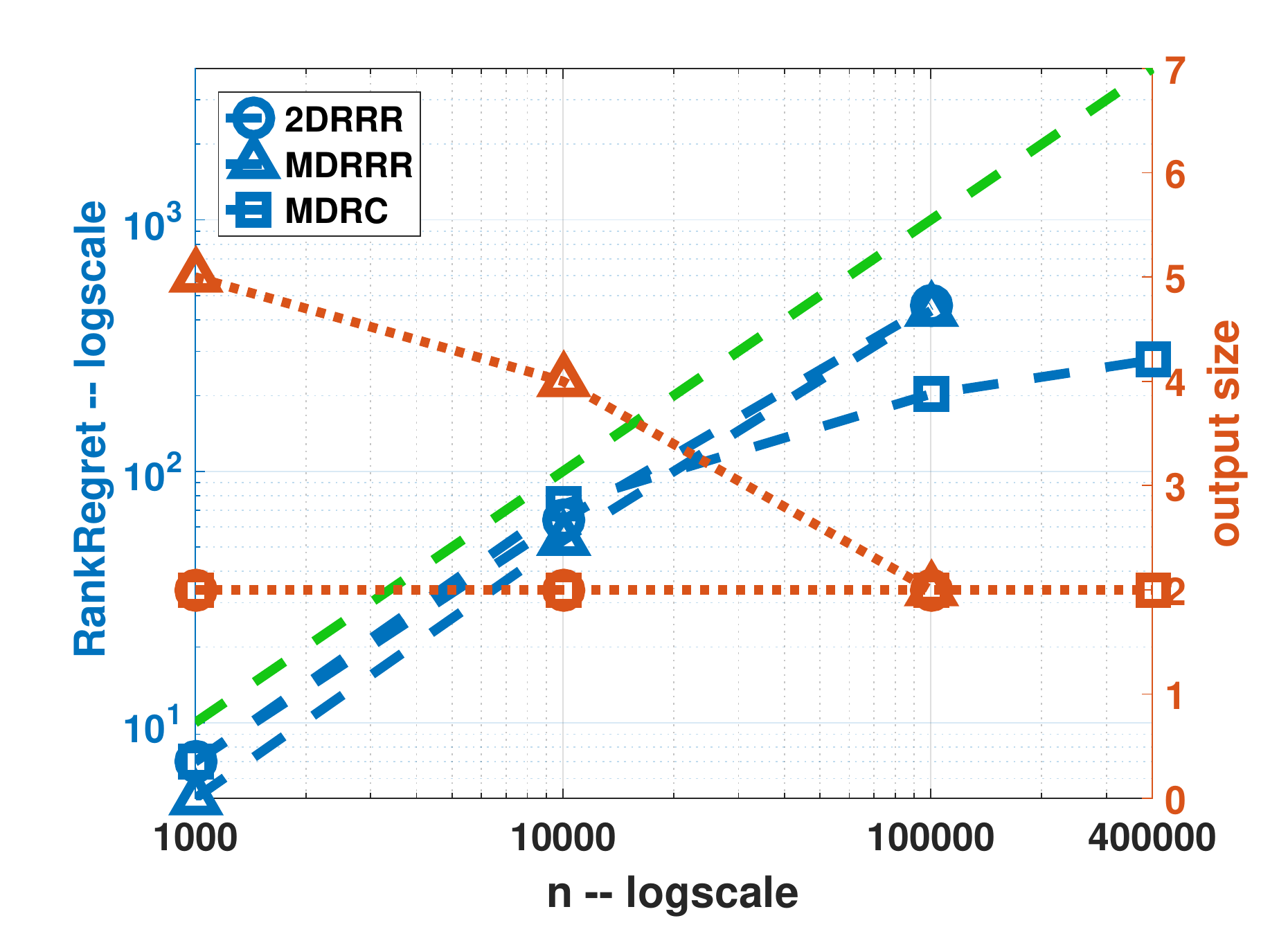}
        \vspace{-8mm}\caption{DOT dataset, 2D, Effectiveness: Impact of dataset size ($n$)}
        \label{fig:DOT2DVN2}
    \end{minipage}
    \hspace{3mm}
    \begin{minipage}[t]{0.23\linewidth}
        \centering
        \includegraphics[scale=0.24]{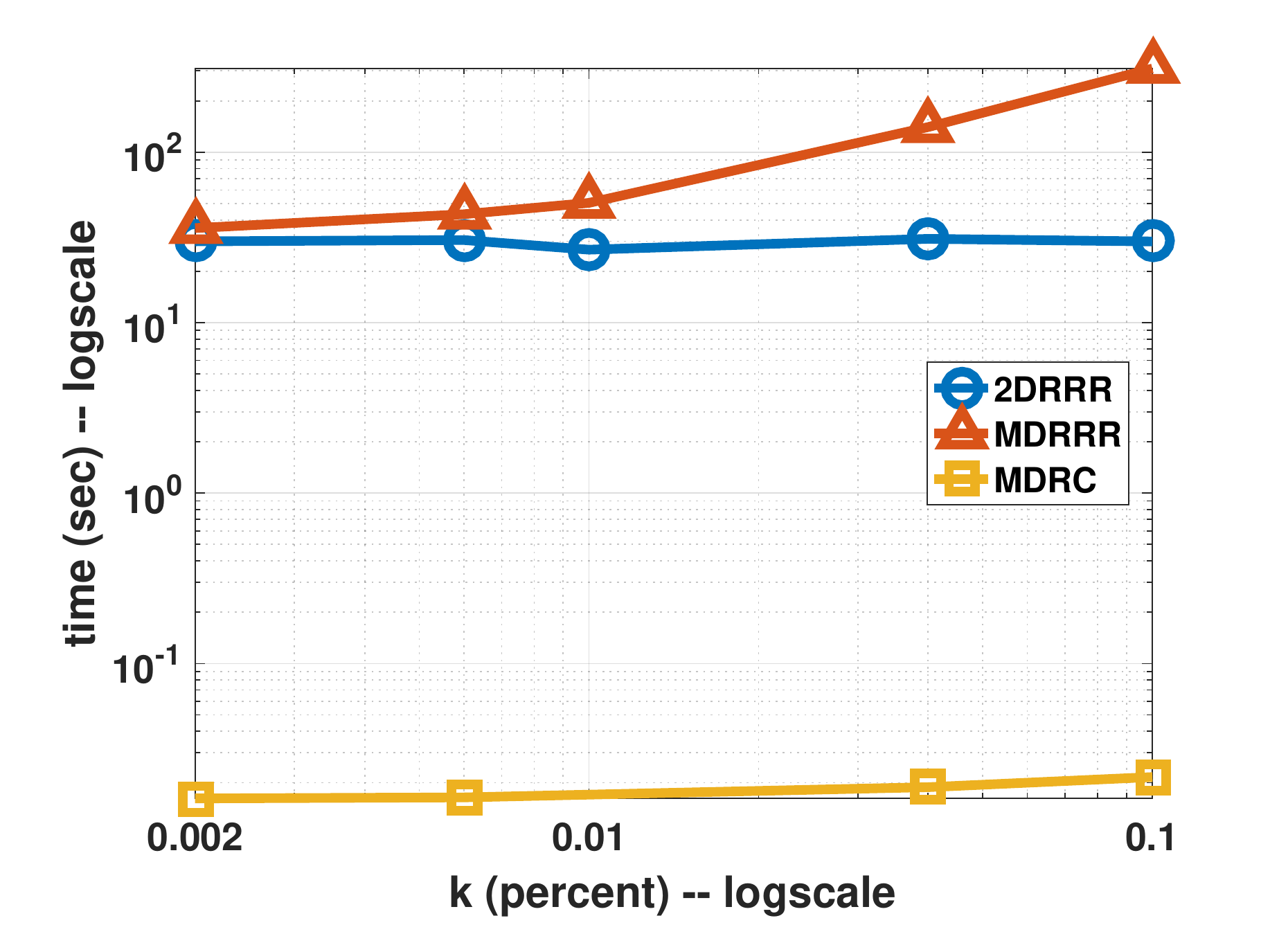}
        \vspace{-8mm}\caption{DOT dataset, 2D, Efficiency: Impact of $k$}
        \label{fig:DOT2DVK1}
    \end{minipage}
    \hspace{1mm}
    \begin{minipage}[t]{0.23\linewidth}
        \centering
        \includegraphics[scale=0.24]{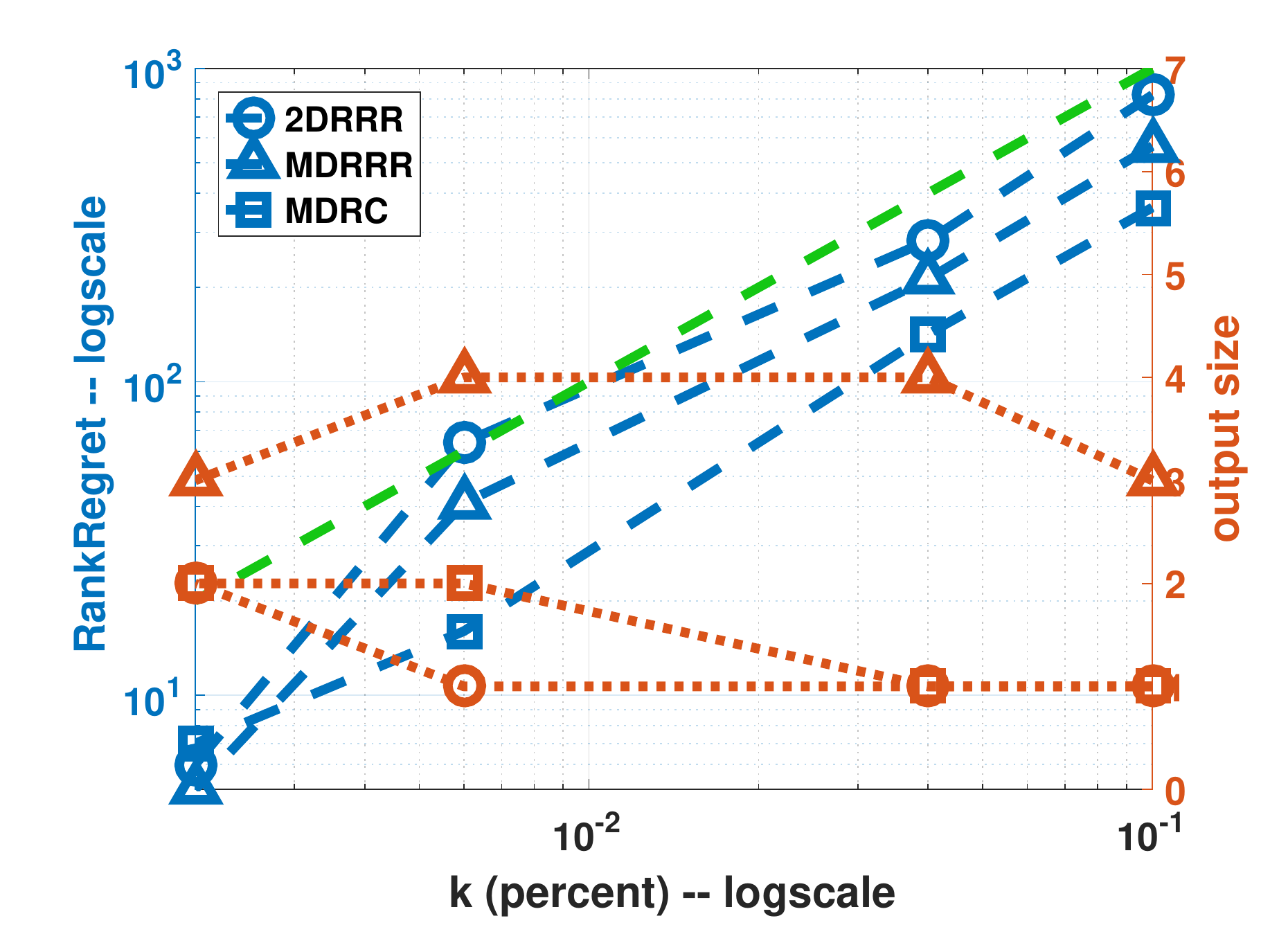}
        \vspace{-8mm}\caption{DOT dataset, 2D, Effectiveness: Impact of $k$}
        \label{fig:DOT2DVK2}
    \end{minipage}
    \hspace{-2mm}
\end{figure*}

\begin{figure*}[ht]
    \begin{minipage}[t]{0.23\linewidth}
        \centering
        \includegraphics[scale=0.24]{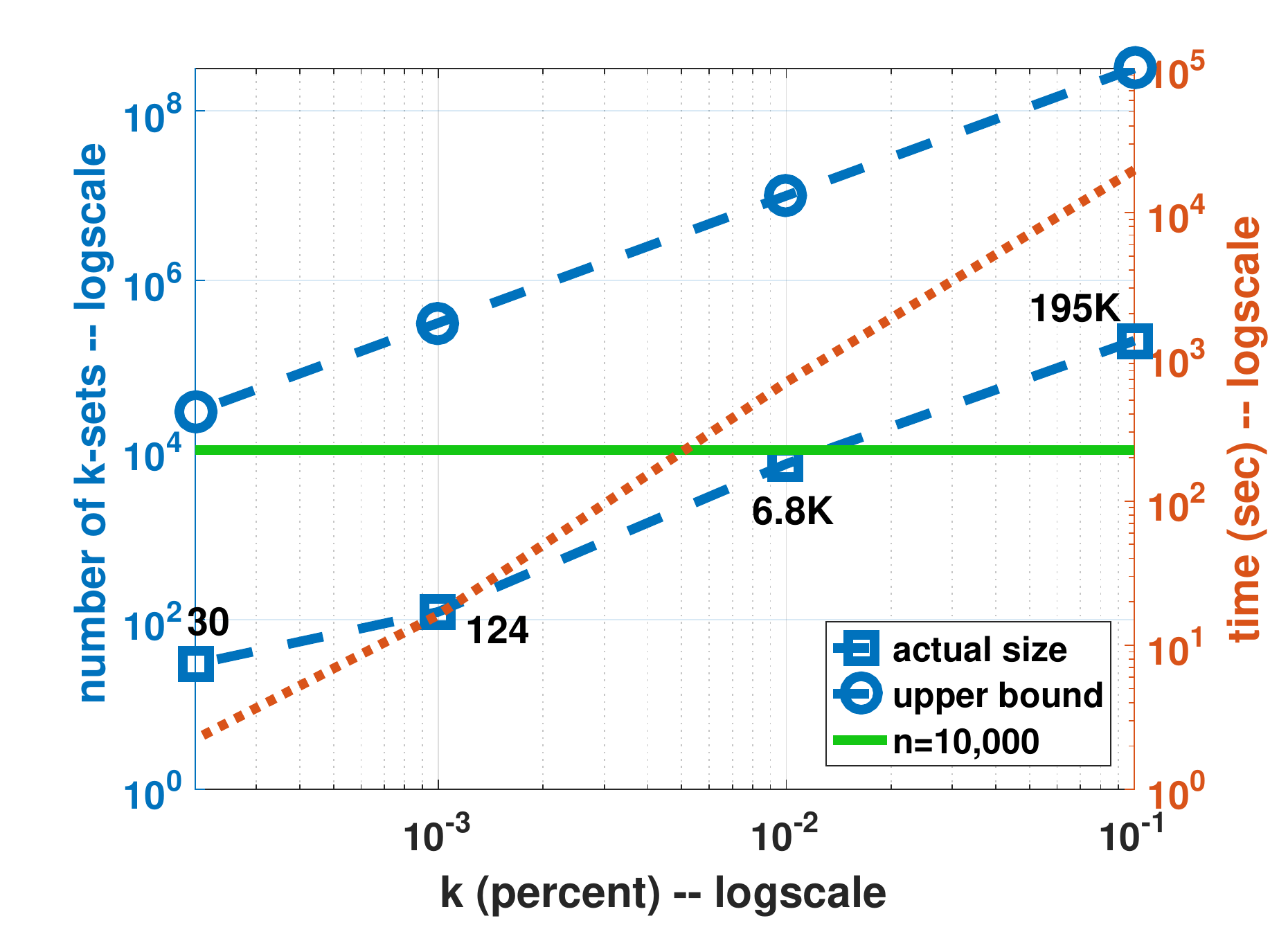}
        \vspace{-8mm}\caption{DOT dataset, MD: Impact of $k$ on $|\mathcal{S}|$}
        \label{fig:DOTKSVK}
    \end{minipage}
    \hspace{2mm}
    \begin{minipage}[t]{0.23\linewidth}
        \centering
        \includegraphics[scale=0.24]{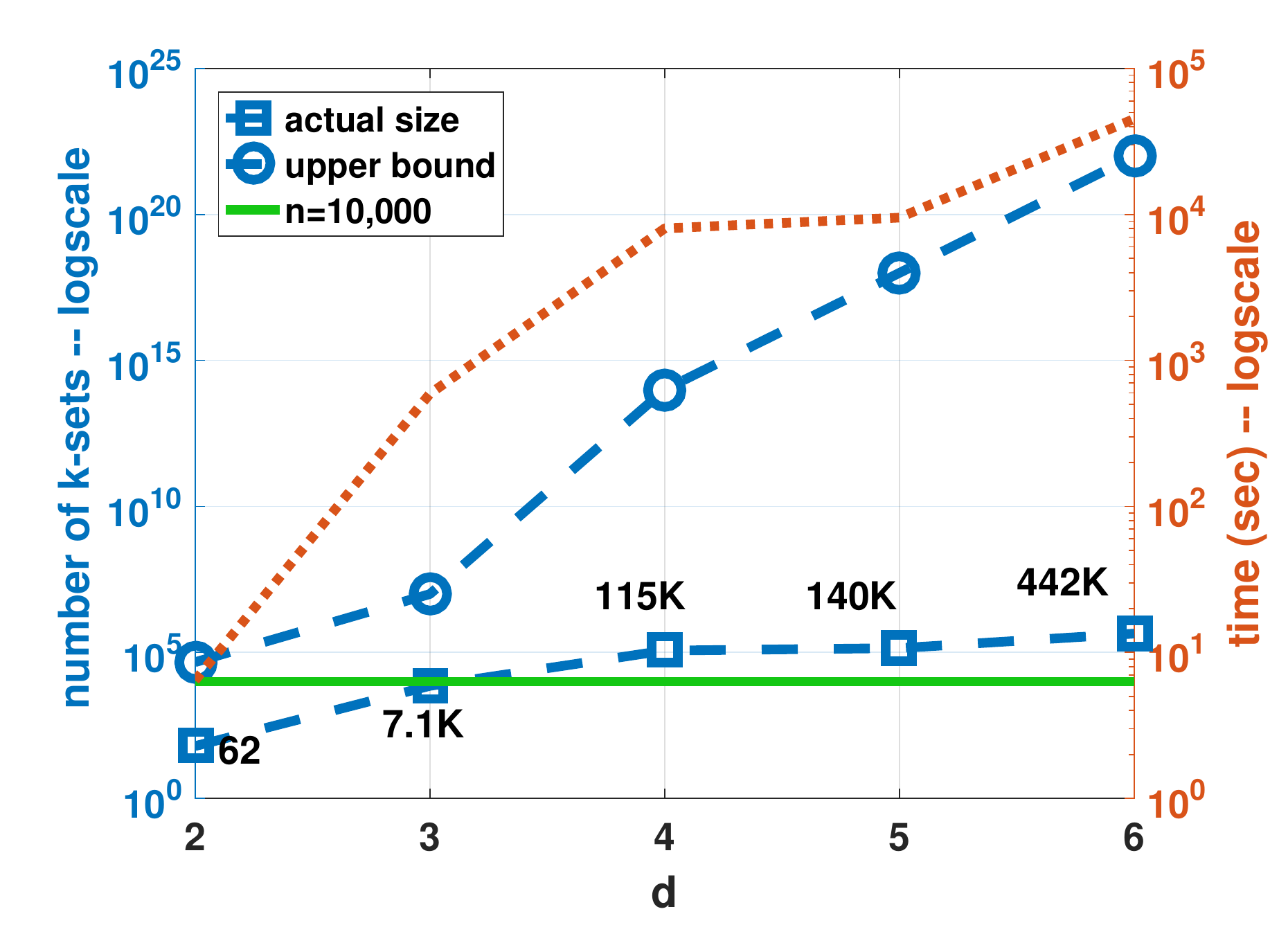}
        \vspace{-8mm}\caption{DOT dataset, MD: Impact of number of attributes ($d$) on $|\mathcal{S}|$}
        \label{fig:DOTKSVM}
    \end{minipage}
    \hspace{3mm}
    \begin{minipage}[t]{0.23\linewidth}
        \centering
        \includegraphics[scale=0.24]{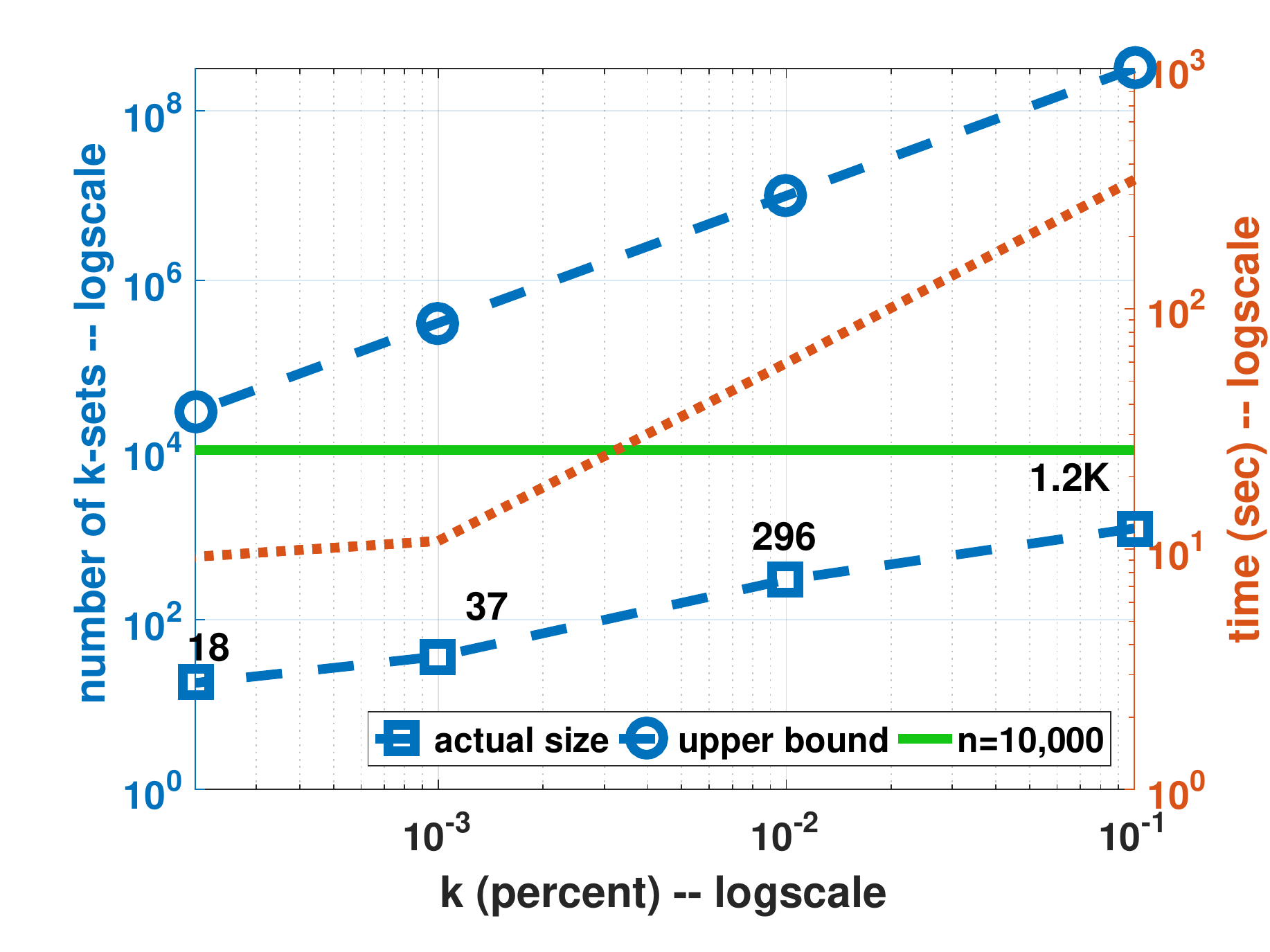}
        \vspace{-8mm}\caption{BN dataset, MD: Impact of $k$ on $|\mathcal{S}|$}
        \label{fig:BNKSVK}
    \end{minipage}
    \hspace{1mm}
    \begin{minipage}[t]{0.23\linewidth}
        \centering
        \includegraphics[scale=0.24]{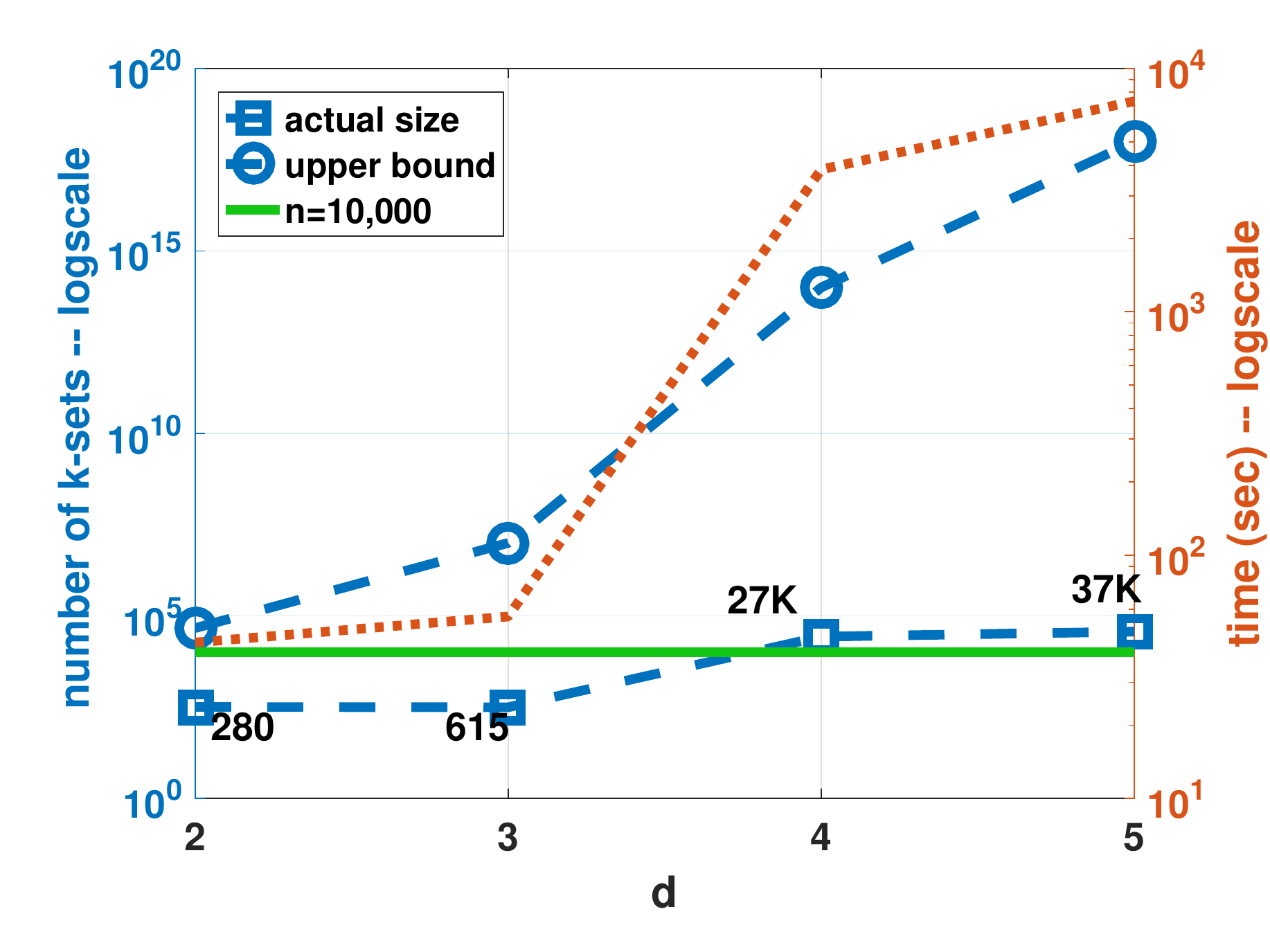}
        \vspace{-8mm}\caption{BN dataset, MD: Impact of number of attributes ($d$) on $|\mathcal{S}|$}
        \label{fig:BNKSVM}
    \end{minipage}
    \hspace{-2mm}
\end{figure*}

\begin{figure*}[ht]
    \begin{minipage}[t]{0.23\linewidth}
        \centering
        \includegraphics[scale=0.24]{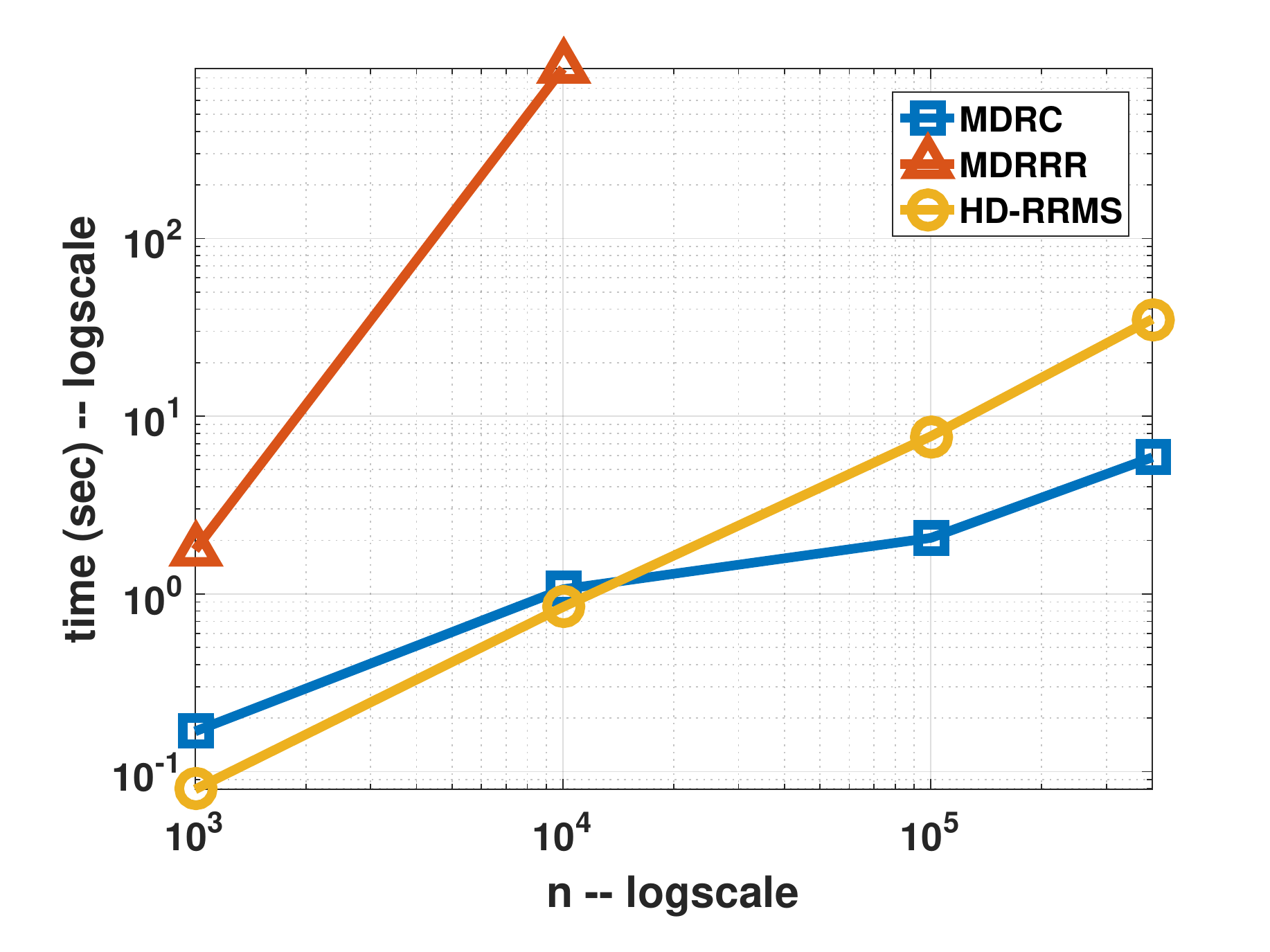}
        \vspace{-8mm}\caption{DOT dataset, MD, Efficiency: Impact of dataset size ($n$)}
        \label{fig:DOTMDVNTime}
    \end{minipage}
    \hspace{0mm}
    \begin{minipage}[t]{0.23\linewidth}
        \centering
        \includegraphics[scale=0.24]{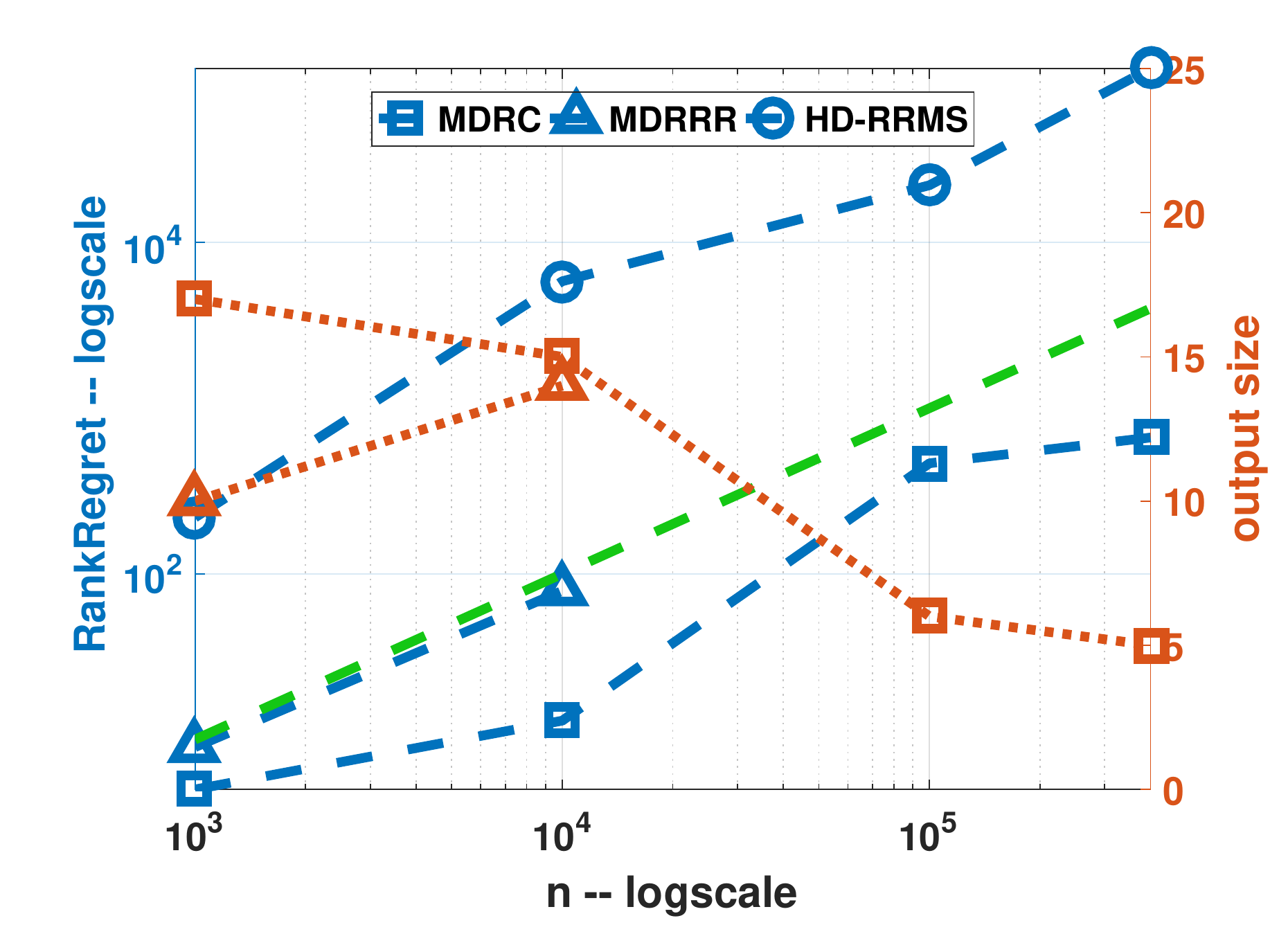}
        \vspace{-8mm}\caption{DOT dataset, MD, Effectiveness: Impact of dataset size ($n$)}
        \label{fig:DOTMDVNSize}
    \end{minipage}
    \hspace{3mm}
    \begin{minipage}[t]{0.23\linewidth}
        \centering
        \includegraphics[scale=0.24]{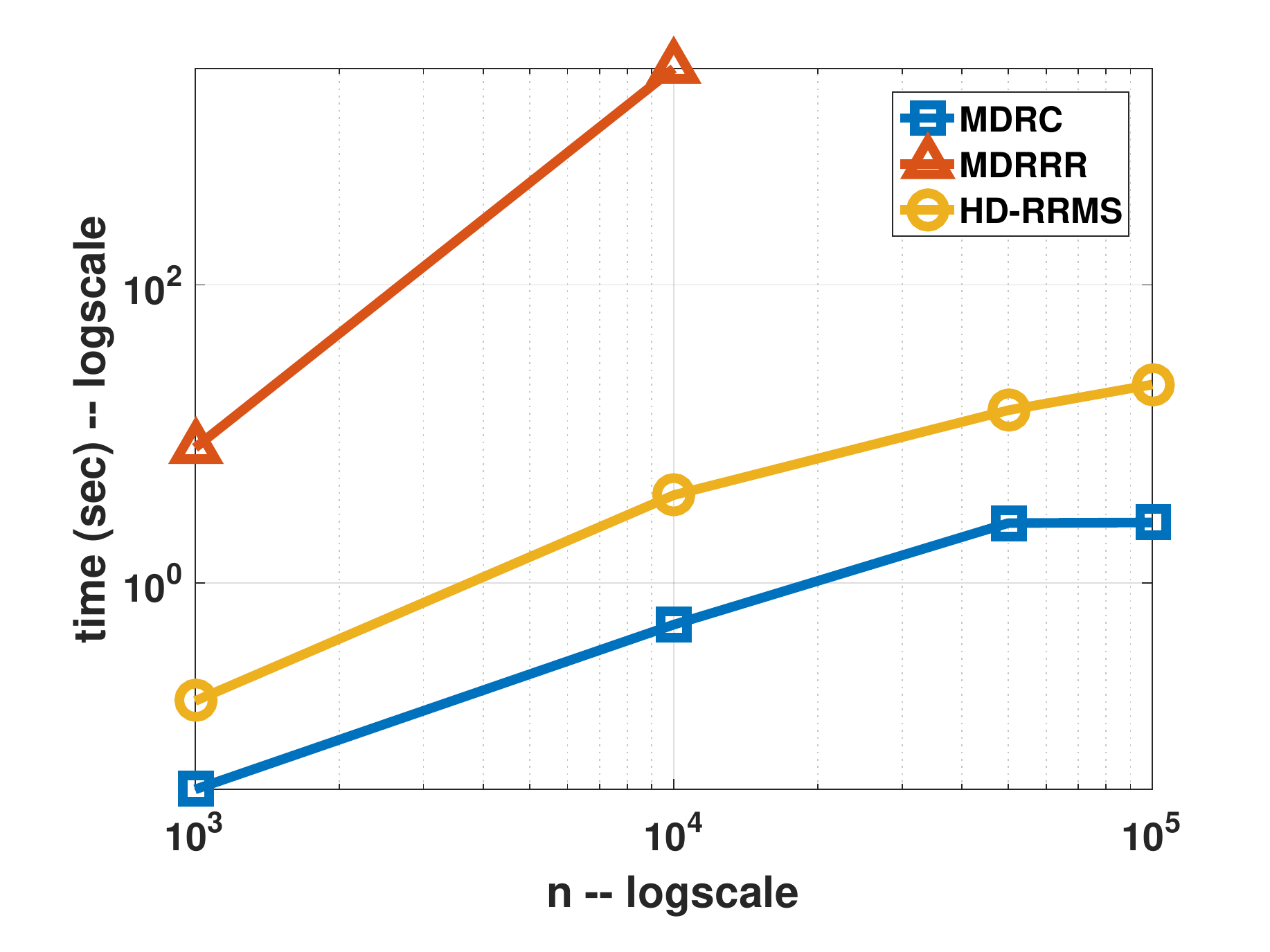}
        \vspace{-8mm}\caption{BN dataset, MD, Efficiency: Impact of dataset size ($n$)}
        \label{fig:BNMDVNTime}
    \end{minipage}
    \hspace{1mm}
    \begin{minipage}[t]{0.23\linewidth}
        \centering
        \includegraphics[scale=0.24]{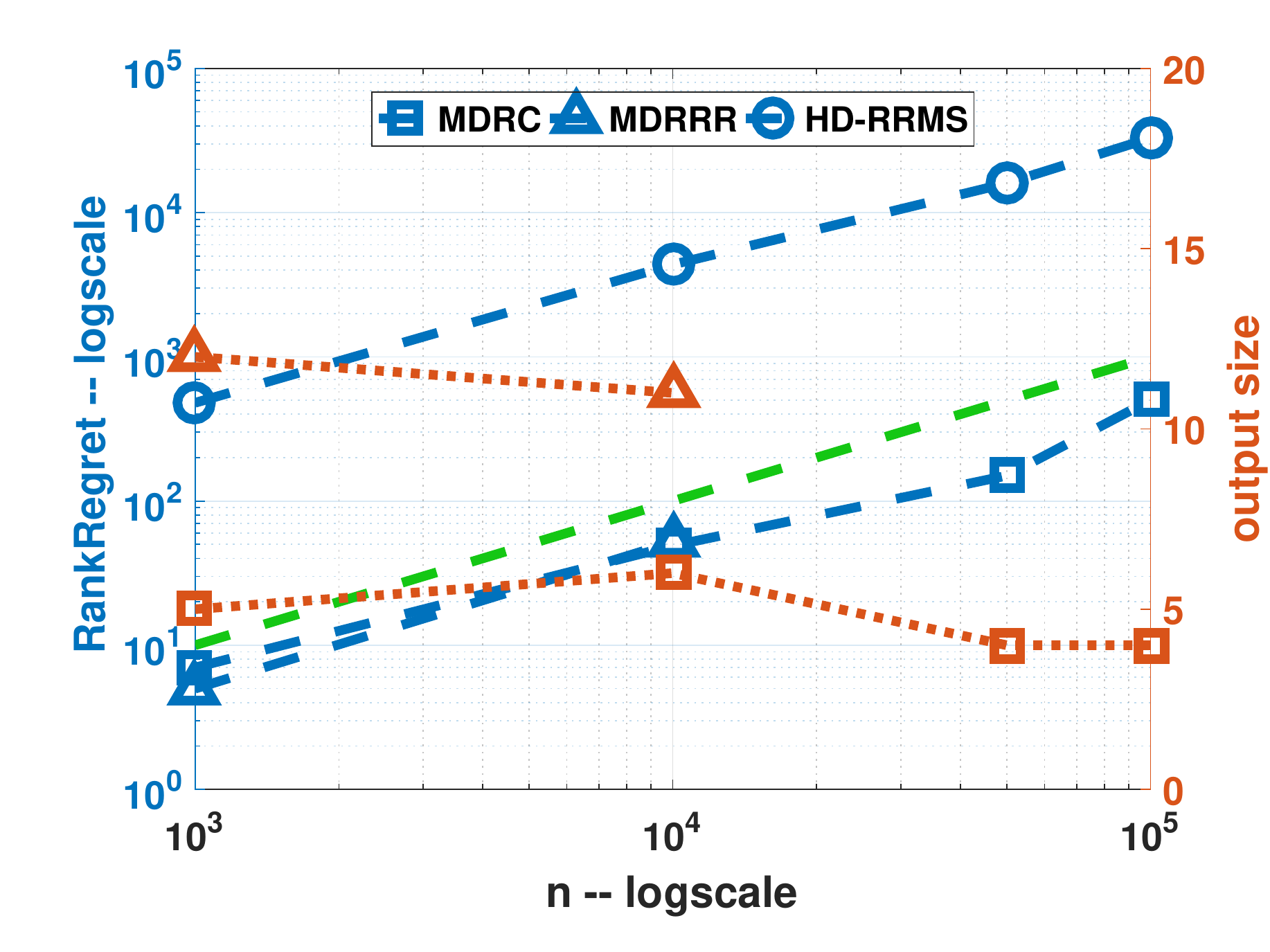}
        \vspace{-8mm}\caption{BN dataset, MD, Effectiveness: Impact of dataset size ($n$)}
        \label{fig:BNMDVNSize}
    \end{minipage}
    \hspace{-2mm}
\end{figure*}

\begin{figure*}[ht]
    \begin{minipage}[t]{0.23\linewidth}
        \centering
        \includegraphics[scale=0.24]{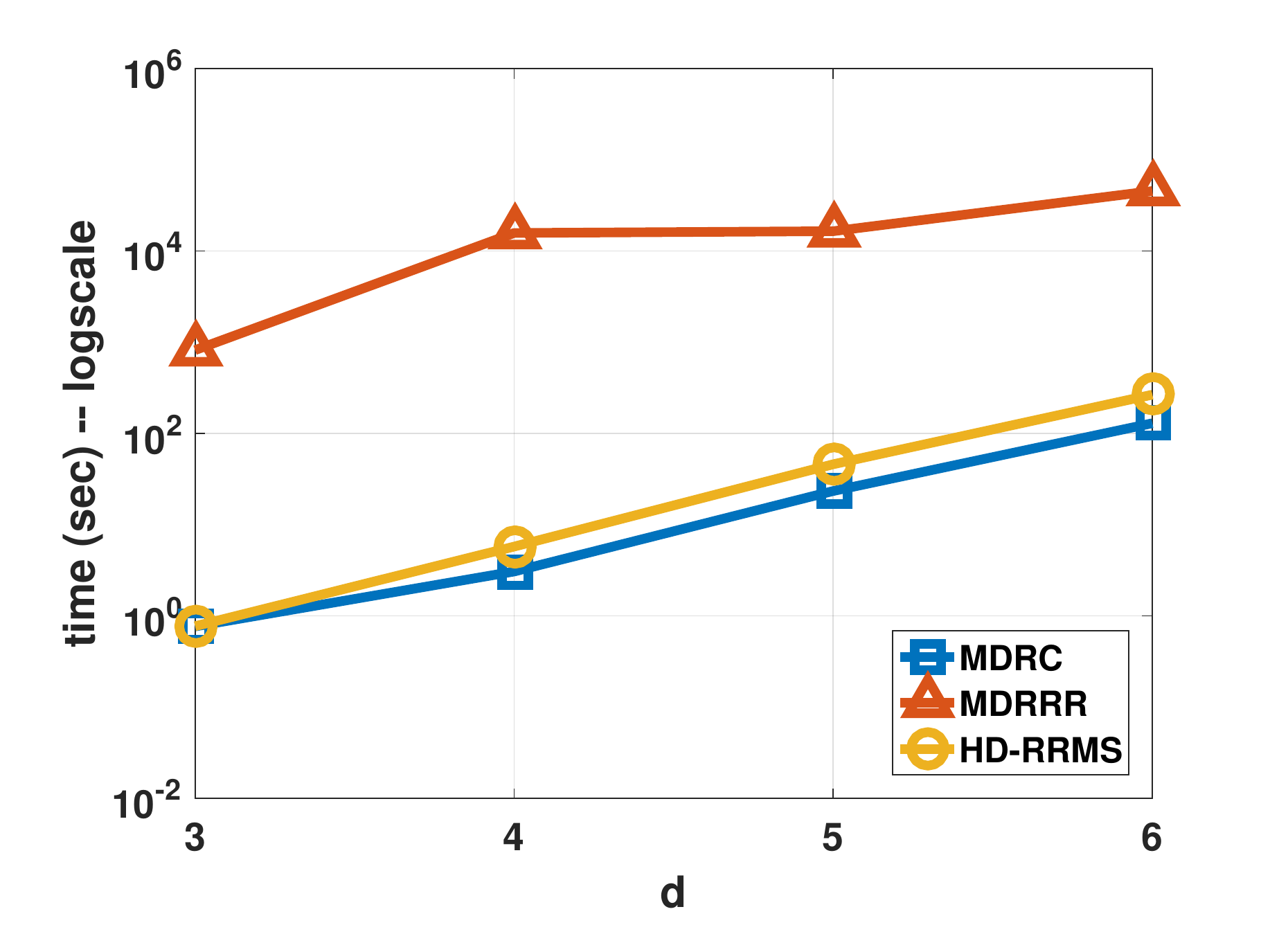}
        \vspace{-8mm}\caption{DOT dataset, MD, Efficiency: Impact of number of attributes ($d$)}
        \label{fig:DOTMDVMTime}
    \end{minipage}
    \hspace{0mm}
    \begin{minipage}[t]{0.23\linewidth}
        \centering
        \includegraphics[scale=0.24]{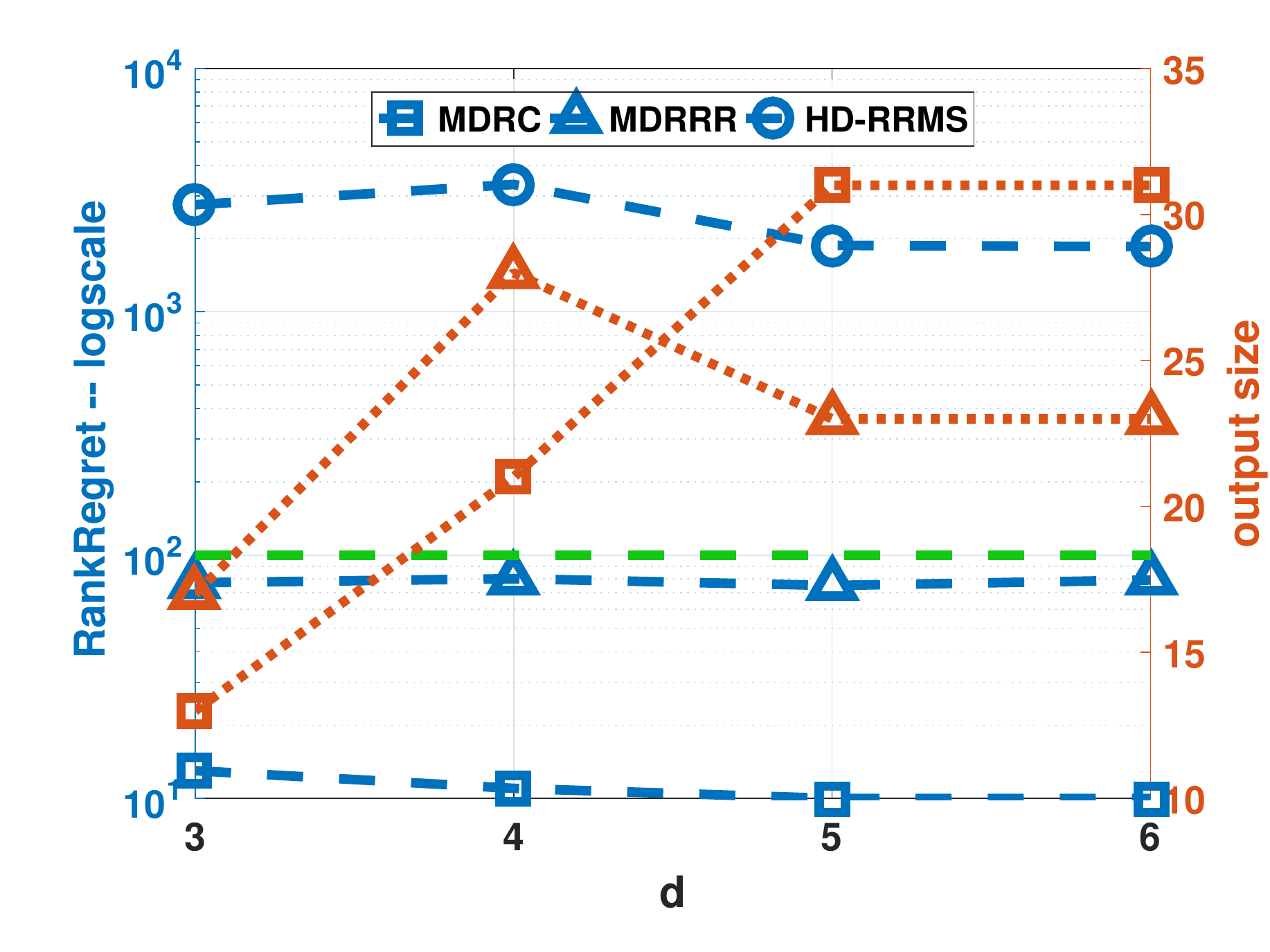}
        \vspace{-8mm}\caption{DOT dataset, MD, Effectiveness: Impact of number of attributes ($d$)}
        \label{fig:DOTMDVMSize}
    \end{minipage}
    \hspace{3mm}
    \begin{minipage}[t]{0.23\linewidth}
        \centering
        \includegraphics[scale=0.24]{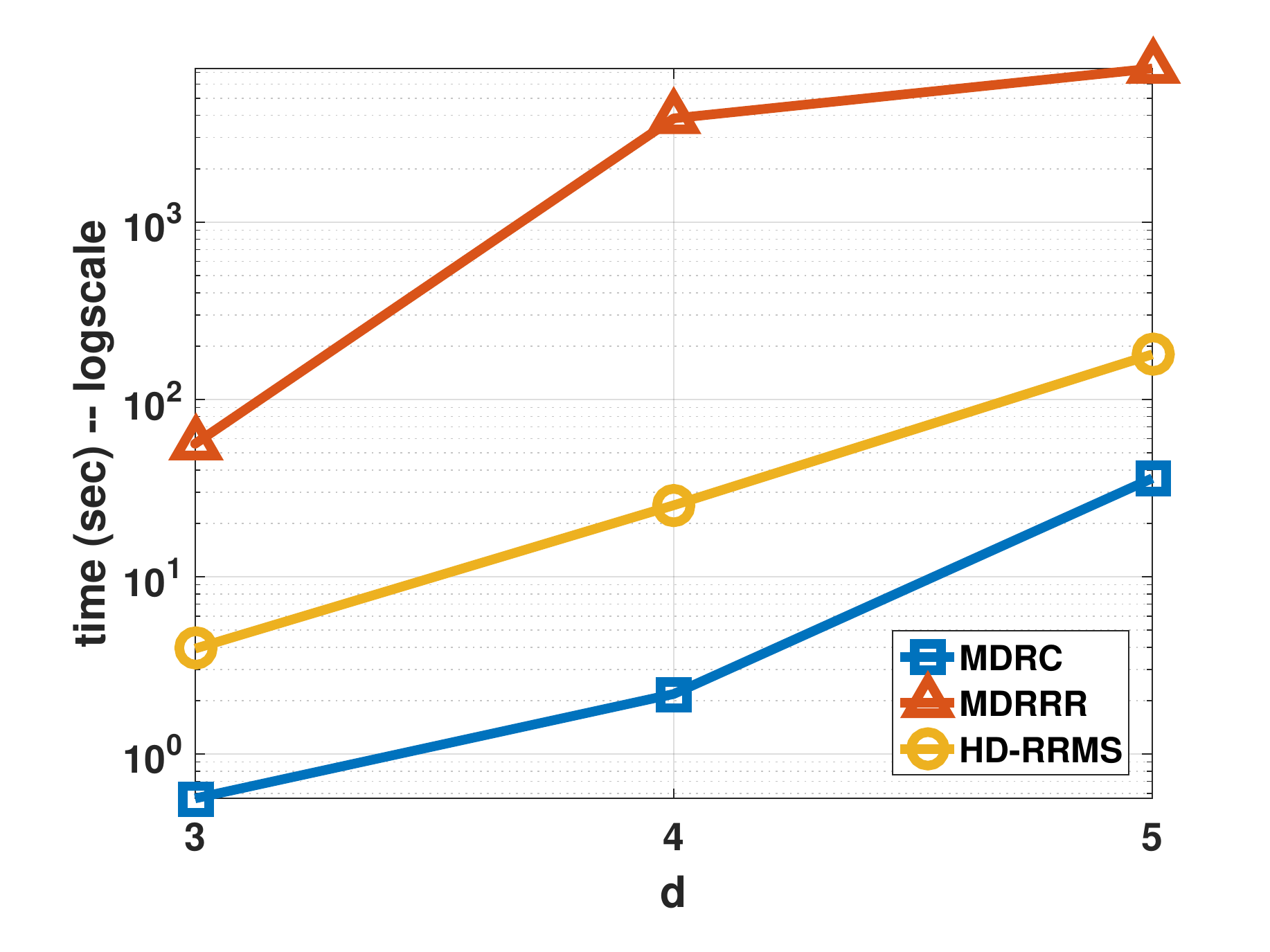}
        \vspace{-8mm}\caption{BN dataset, MD, Efficiency: Impact of number of attributes ($d$)}
        \label{fig:BNMDVMTime}
    \end{minipage}
    \hspace{1mm}
    \begin{minipage}[t]{0.23\linewidth}
        \centering
        \includegraphics[scale=0.24]{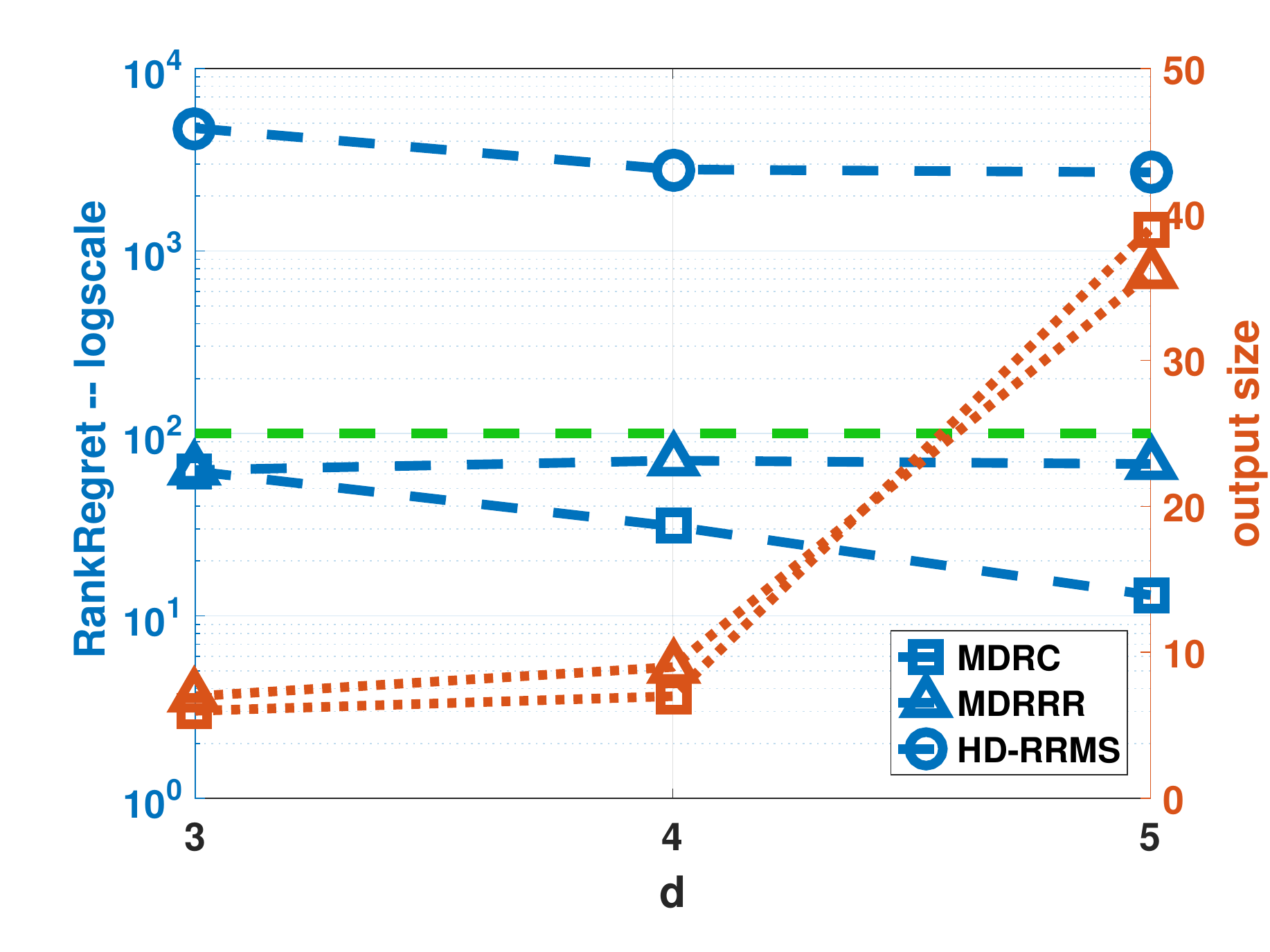}
        \vspace{-8mm}\caption{BN dataset, MD, Effectiveness: Impact of number of attributes ($d$)}
        \label{fig:BNMDVMSize}
    \end{minipage}
    \hspace{-2mm}
\end{figure*}

\begin{figure*}[ht]
    \begin{minipage}[t]{0.23\linewidth}
        \centering
        \includegraphics[scale=0.24]{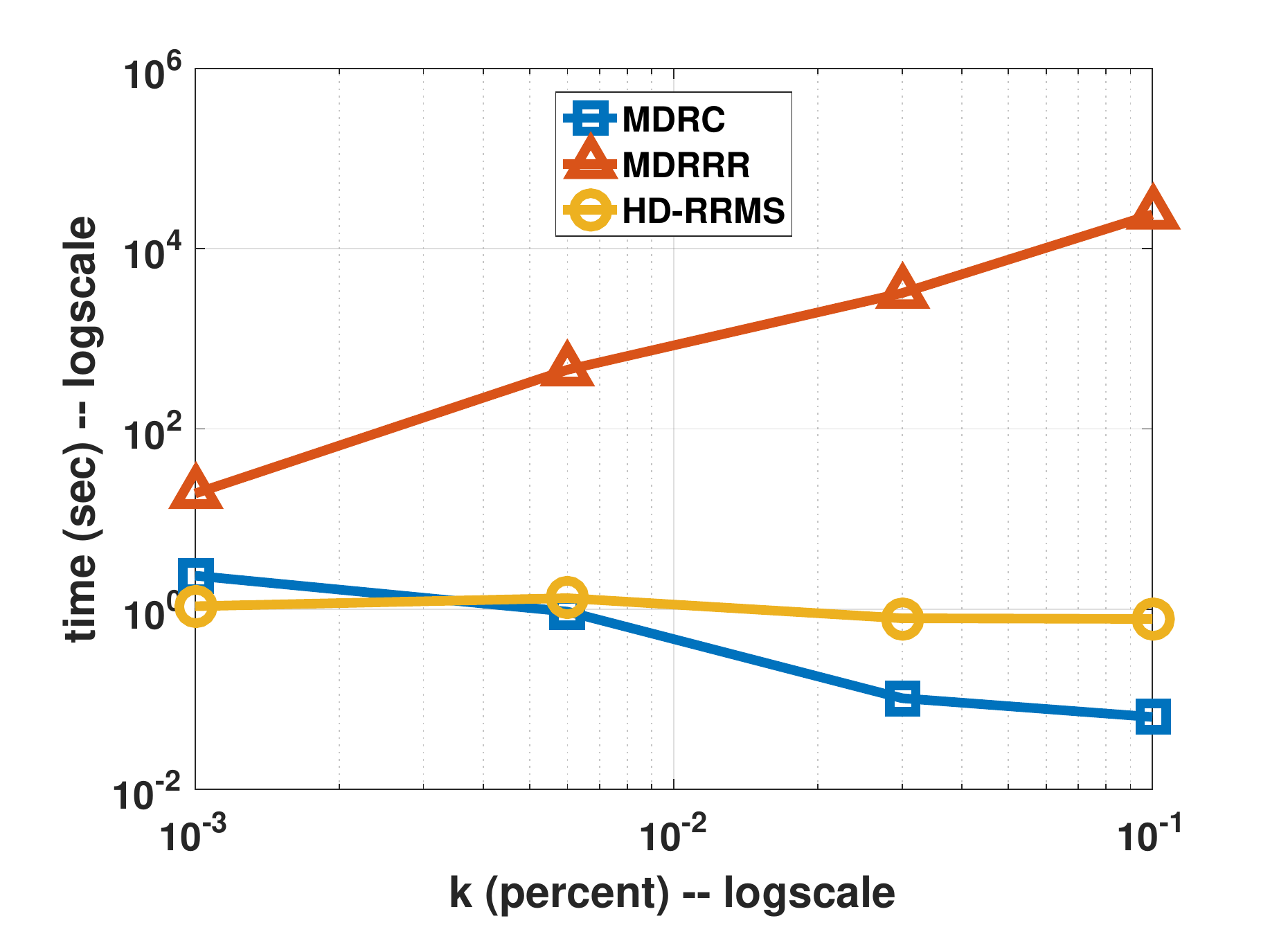}
        \vspace{-8mm}\caption{DOT dataset, MD, Efficiency: Impact of $k$}
        \label{fig:DOTMDVKTime}
    \end{minipage}
    \hspace{0mm}
    \begin{minipage}[t]{0.23\linewidth}
        \centering
        \includegraphics[scale=0.24]{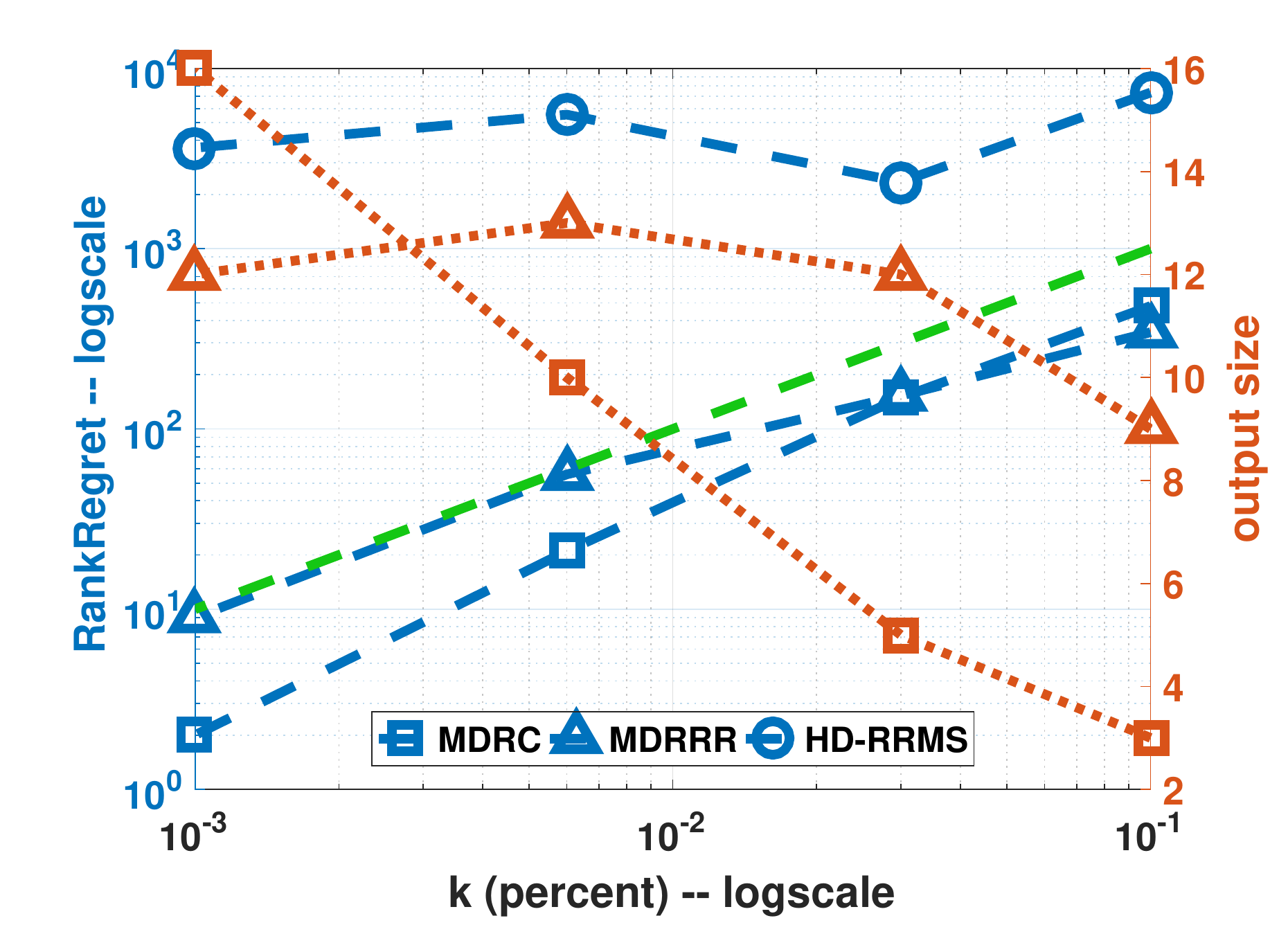}
        \vspace{-8mm}\caption{DOT dataset, MD, Effectiveness: Impact of $k$}
        \label{fig:DOTMDVKSize}
    \end{minipage}
    \hspace{3mm}
    \begin{minipage}[t]{0.23\linewidth}
        \centering
        \includegraphics[scale=0.24]{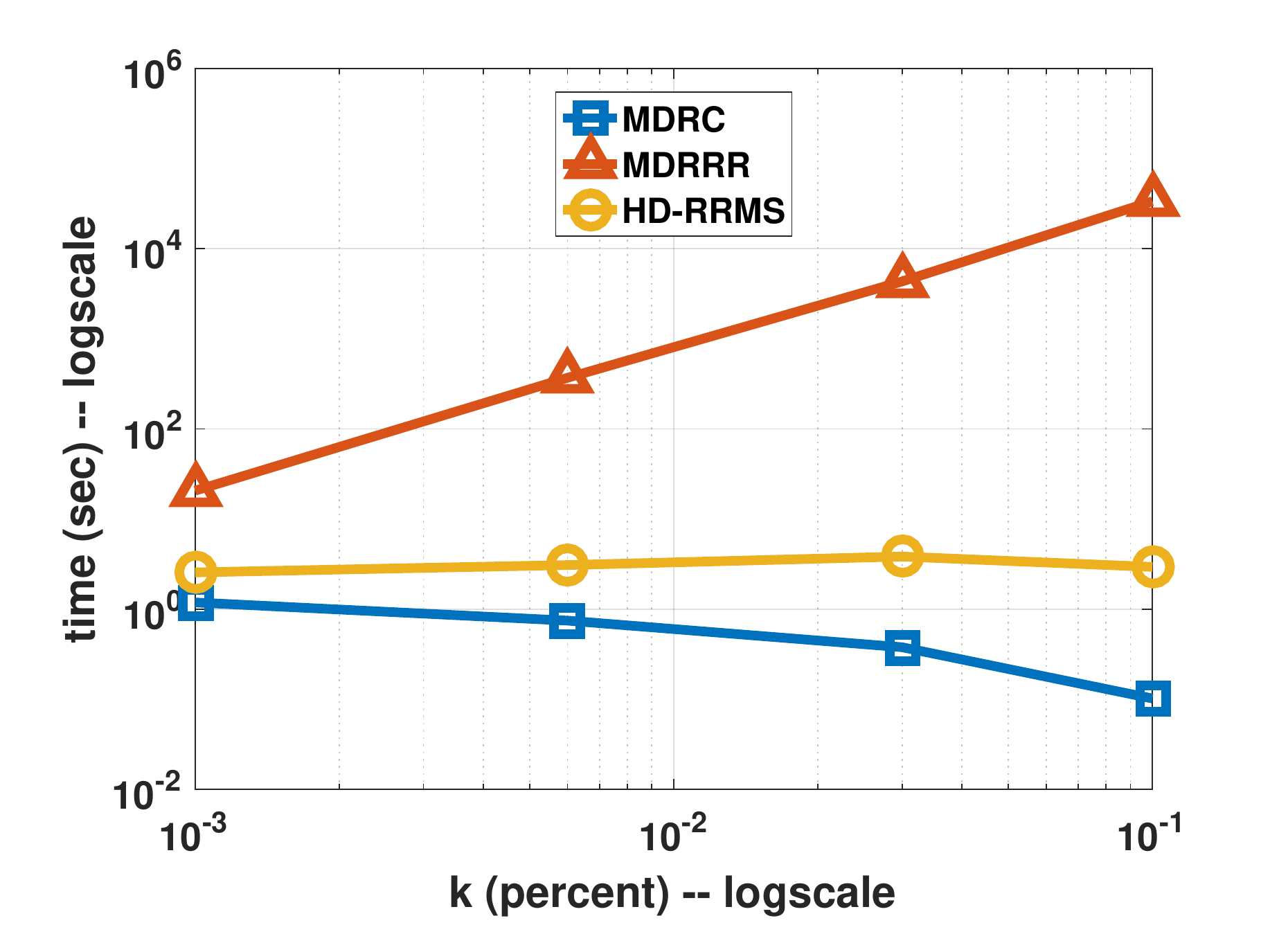}
        \vspace{-8mm}\caption{BN dataset, MD, Efficiency: Impact of $k$}
        \label{fig:BNMDVKTime}
    \end{minipage}
    \hspace{1mm}
    \begin{minipage}[t]{0.23\linewidth}
        \centering
        \includegraphics[scale=0.24]{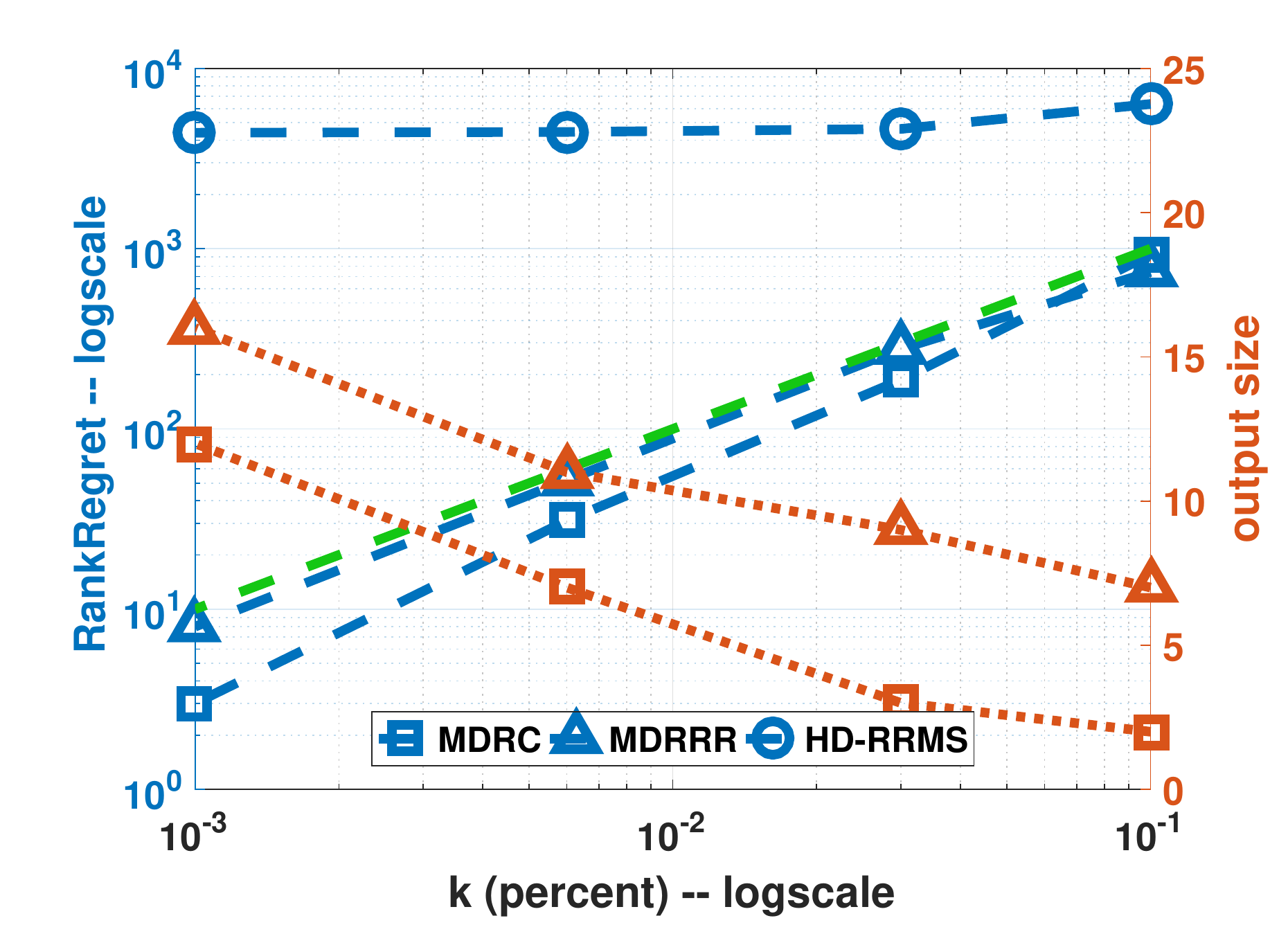}
        \vspace{-8mm}\caption{BN dataset, MD, Effectiveness: Impact of $k$}
        \label{fig:BNMDVKSize}
    \end{minipage}
    \hspace{-2mm}
\end{figure*}

\section{Experimental Evaluation}\label{sec:exp}
\subsection{Setup}

\stitle{Datasets.} 
To evaluate our algorithms to compute RRR, we conducted experiments over two real multi-attribute datasets that could potentially benefit from the user of rank regret.  
We describe these datasets next.

{\it US Department of Transportation flight delay database (DOT)\footnote{\small{\url{www.transtats.bts.gov/DL_SelectFields.asp?}}}:} This database is widely used by third-party websites to identify the on-time performance of flights, routes, airports, and airlines.
After removing the records with missing values, the dataset contains 457,892 records, for all flights conducted by the 14 US carriers in the last months of 2017, over the scalar attributes {\tt Dep-Delay}, {\tt Taxi-Out}, {\tt Actual-elapsed-\\time}, {\tt Arrival-Delay}, {\tt Air-time}, {\tt Distance}, {\tt Taxi-in}, and {\tt CRS-elapsed-time}. For {\tt Air-time} and {\tt Distance} higher values are preferred while for the rest of attributes lower values are better.

{\it Blue Nile (BN)\footnote{\small{\url{www.bluenile.com/diamond-search?}}}:} 
Blue Nile is the largest diamonds online retailer in the world. We collected its catalog that contained 116,300 diamonds at the time of our experiments.
We consider the scalar attribute {\tt Carat}, {\tt Depth}, {\tt LengthWidthRatio}, {\tt Table}, and {\tt Price}. For all attributes, except {\tt Price}, higher values are preferred.
The value of the diamonds highly depend on these measurement, small changes in these scores may mean a lot in terms of the quality of the jewel:
For example, while the listed diamonds range from 0.23 carat to 20.97, minor changes in the carat affects the price. We considered two similar diamonds, where one is 0.5 carat and the other is 0.53 carat. Even though all other measures are similar for both diamonds, the second is 30\% more expensive than the first one.
This is also correct for {\tt Depth}, {\tt LengthWidthRatio}, and {\tt Table}.
{\em Such settings where slight changes in the scores may dramatically affect the value} (and the rank) of the items, highlight the motivation of rank-regret.

We normalize each value $v$ of a higher-preferred attribute $A$ as $(v-\min(A))/(\max(A)-\min(A))$ and for each lower-preferred attribute $A$, we do it as $(\max(A) - v)/(\max(A)-\min(A))$.

\stitle{Algorithms evaluated:}
In addition to the theoretical analyses, we evaluate the algorithms proposed in this paper.
In \S~\ref{sec:2d}, we proposed \twodrrr, the algorithm that uses Theorem~\ref{th:max2k} to transform the problem into one dimensional range covering. This quadratic algorithm guarantees the approximation ratio of 2 on the maximum rank regret of its output.
In this section, we shall show that in all the cases it generated an output with maximum rank of $k$. For 2D, we implemented the ray-sweeping algorithm (similar to Algorithm~\ref{alg:2dfr}) that enumerates the $k$-sets by following the changes in the $k$-border (Figure~\ref{fig:toy2}). We also implemented the $k$-set graph based enumeration \techrep{explained in Appendix~\ref{ap:kset-enum}} for MD. We did not include the results here, but we observed that it does not scale beyond a few hundred items (that is because it need to solve much as $O(nk)$ linear programs for a single $k$-set). Instead, we apply the randomized algorithm \ksetr for finding the $k$-sets (while setting the termination condition $c$ to 100).
The MD algorithms proposed in \S~\ref{sec:md} are the hitting-set based algorithm \mdrrr and the space function covering algorithm \mdrc.
As we explained in \S~\ref{sec:intro} and~\ref{sec:related}, all of the existing algorithms proposed for different varieties of regret-ratio consider the score difference, as the measure of regret and apply the optimization based on it.
Still to verify this, we consider comparing with them as the baseline.
As we shall further explain in \S~\ref{sec:related}, the
advanced algorithms for the regret-ratio problem are two
similar studies~\cite{agarwal2017efficient,asudeh2017} that both work based on discretizing the function space and applying hitting set, and therefore, provide similar controllable additive approximation factors. We adopt the {\sc hd-rrms} algorithm~\cite{asudeh2017} which as mentioned in~\cite{agarwal2017efficient, kumar2018faster} should perform similar to the one in~\cite{agarwal2017efficient}.
Since the input the algorithm is the index size, in order to be fair in the comparison, in all settings, we first run the algorithm \mdrc, and then pass the output size of it as the input to {\sc hd-rrms}.
 
\stitle{Evaluations:}
In addition to the efficiency, we evaluate the effectiveness of the proposed algorithms. That is, we study if the algorithms can find a small subset with bounded rank-regret based on $k$.
We consider the running time as the efficiency measure and the rank-regret of output set, as well as its size, for effectiveness. Computing the exact rank-regret of a set needs the construction of the arrangement of items in the dual space which is not scalable to the large settings. Therefore, in the experiments for estimating the rank-regret of a set in MD,
we draw 10,000 functions uniformly at random (based on Lines 4 to 6 of Algorithm~\ref{alg:kset_baseline}) and consider them for estimating the rank-regret.

\stitle{Default values:} For each experiment, we study the impact of varying one variable while fixing other attributes to their default values. The default values are as following: (i) dataset size ($n$): 10,000, (ii) number of attributes ($d$): 3, and (iii) $k$: top-1\%.

\stitle{Hardware and platform.}
All experiments were performed on a Linux machine with a 3.8 GHz Intel Xeon processor and 64 GB memory.
The algorithms are implemented using Python 2.7.

\subsection{Results}
\stitle{2D.}
We use a ray sweeping algorithm, similar to Algorithm~\ref{alg:2dfr}, to enumerates the $k$-sets by following the changes in the $k$-border. We also use the ray sweeping to find out the (exact) rank regret of a set in 2D.
Due to the space limitations, for 2D, we only provide the plots for the DOT dataset.
Figures~\ref{fig:DOT2DVN1} and~\ref{fig:DOT2DVN2} show the performance of the algorithms for varying the dataset size ($n$) from 1000 to 400,000.
The running times of \twodrrr and \mdrrr are dominated by the time required by the sweeping line algorithms for finding the ranges (Algorithm~\ref{alg:2dfr}) and the $k$-sets. Since these two algorithms have similar structure, their running times are similar.
Still, because the sweeping ray algorithm is quadratic, these algorithms did not scale beyond 100K items.
On the other hand \mdrc does not depend on finding the $k$-set or sweeping a line. Rather, it partitions the space until top-$k$ of two corners of each range intersect. Due to the binary search nature of the algorithm that breaks the space by half at every iteration, soon the functions in the two ends of each range become similar enough to share an item in their top-$k$.
Therefore, the algorithm performs very well in practice, and scales well for large settings. For example, it took less than a second to run \mdrc for 100K items, while \twodrrr and \mdrrr required several thousand seconds. See Figure~\ref{fig:DOT2DVN1}.
In Figure~\ref{fig:DOT2DVN2}, and all other plots with, two y-axes, the left axis show the rank-regret and the right one is the output size.
The dashed green line show the border for the rank-regret of 1\%.

The algorithm \twodrrr guarantees the optimal output size. For all settings its output also had the rank-regret of less than $k$, confirming that it returned the optimal solution. On the other hand, \mdrrr guarantees the rank-regret of $k$ and provides the logarithmic approximation ratio on its output size. This is also confirmed in the figure, where the rank regret of the output of \mdrrr is always below the green line. However, the size of its output is more than the optimal for two (out of three) settings. the space partitioning algorithm \mdrrr provides the output which in all cases satisfied the rank-regret of $k$ and also its output size was the minimum, confirming that it also discovered the optimal output.
In Figures~\ref{fig:DOT2DVK1} and~\ref{fig:DOT2DVK2}, we fix the dataset size and other variables to the default and study the effect of changing $k$ on the efficiency of the algorithm and the quality of their outputs. Similar to Figure~\ref{fig:DOT2DVN1}, \twodrrr and \mdrrr have similar running times (due to applying the ray sweeping algorithm) and \mdrc runs within a few milliseconds for all settings.
On the other hand, in Figure~\ref{fig:DOT2DVK2}, the output size of \mdrc is in all cases, except one, equal to the optimal output size (the output size of \twodrrr) while, due to its logarithmic approximation ratio, the hitting set based \mdrrr generates larger outputs.\mdrrr guarantees the rank-regret of $k$, which is confirmed in the figure. \mdrc also provided the maximum rank-regret of $k$ for all settings and \twodrrr did so for all, except $k=0.004\%$ for which its maximum rank regret was slightly above the threshold.
 
\stitle{k-set size.}
Next, we compare the actual size of $k$-sets with the theoretical upper-bounds, using the \ksetr algorithm.
To do so, we select the DOT and BN datasets, set number of items to 10K and study the impact of varying $k$ and $d$.
The results are provided in Figures~\ref{fig:DOTKSVK},~\ref{fig:DOTKSVM},~\ref{fig:BNKSVK}, and~\ref{fig:BNKSVM}.
The left-y-axis in the figures show the size and the right-y-axis show the running time of the \ksetr algorithm.
The horizontal green line in the figures highlight the number of items $n=10$K.
Figures~\ref{fig:DOTKSVK} and~\ref{fig:BNKSVK} show the results for varying $k$ for DOT and BN, respectively.
First, as observed in the figures, the actual sizes of the $k$-sets are significantly smaller than the best known theoretical upper-bound for 3D ($O(nk^{3/2})$~\cite{sharir2000}). In fact, the number of $k$-sets is closer to $n$ than the upper-bounds.
Second, the number of $k$-sets for $k=10\%$ is significantly larger than the number of $k$-sets for smaller values of $k$.
Recall that the $k$-sets are densely overlapping, as the neighboring $k$-sets in the $k$-set graph only differ in one item.
As $k$ increases (up until $k=50\%$), for each node of the $k$-set graph the number of candidate transitions to the neighboring $k$-sets increases which affect $|\mathcal{S}|$ as well.
Although significantly smaller than the upper bound, still the sizes are large enough to make the $k$-set discovery impractical for large settings. For example, running the \ksetr algorithm for the DOT dataset and $k=10\%$ took more than ten thousand seconds.
The observations for varying $d$ (Figures~\ref{fig:DOTKSVM} and~\ref{fig:BNKSVK}) are also similar.
Also, the gap between the theoretical upper-bound for $d\geq 4$ and the actual $k$-sets sizes show how loose the bounds are.

\stitle{MD.}
Here, we study the algorithms proposed for the general cases where $d\geq 3$. \mdrrr is the hitting set based algorithm that, given the collection of $k$-sets, guarantees the rank-regret of $k$ and a logarithmic increase in the output size. So far, the 2D experiments confirmed these bounds.
The other algorithm is the space partitioning algorithm \mdrc which is designed based on Theorem~\ref{th:max2k}.
Given the possibly large number of $k$-sets and the cost of finding them (even using the randomized algorithm \ksetr), this algorithm is designed to prevent the $k$-set enumeration. \mdrc uses the fact that the $k$-sets are highly overlapping and recursively partitions the space (see Figure~\ref{fig:practicalrunning}) into several hypercubes and stops the recursion for each hypercube as soon as the intersection of the top-$k$ items in its corners is not empty.
This algorithm performs very well in practice, as after a few iterations, the functions in the corners become similar enough to share at least one item in their top-$k$.
Also,
the maximum rank-regret of the item that appear in the top-$k$ of the corners of the hyper-rectangle for the functions inside the hypercube is much smaller than the bound provided in Theorem~\ref{th:mdrc2}. We so far observed it in the 2D experiments where in all cases the rank-regret of the output of \mdrc is less than $k$, while the output size also was always close to the optimal output size.

In addition to these algorithms, we compare the efficiency and effectiveness of our algorithms against, \hdrrms~\cite{asudeh2017}, the recent approximation algorithm proposed for regret-ratio minimizing problem. 
Since \hdrrms takes the index size as the input, we first run the \mdrc algorithm and pass its output size to \hdrrms.
Having a different optimization objective (on the regret-ratio), as we shall show, the output of \hdrrms fails to provide a bound on the rank-regret.
In the first experiment, fixing the other variables to their default values, we vary the dataset size $n$ from 1000 to 400,000 for DOT and from 1000 to 100,000 for BN.
Figures~\ref{fig:DOTMDVNTime},~\ref{fig:DOTMDVNSize},~\ref{fig:BNMDVNTime}, and~\ref{fig:BNMDVNSize} show the results.
Figures~\ref{fig:DOTMDVNTime} and~\ref{fig:DOTMDVNSize},~\ref{fig:BNMDVNTime} show the running time of the algorithms for DOT and BN, respectively.
Looking at these figures, first \mdrrr did not scale for 100K items. The reason is that \mdrrr needs the collection of $k$-sets in order to apply the hitting set. For a very large number of items even the \ksetr algorithm does not scale.
\hdrrms has a reasonable running time in all cases.
\mdrc has the least running time for large values of $n$ and in all cases it finished in less than a few seconds.
The reason is that after a few recursions, the functions in the corners of the hypercubes become similar and share an item in their top-$k$.
Figures~\ref{fig:DOTMDVNSize} and~\ref{fig:BNMDVNSize} show the effectiveness of the algorithms for these settings.
The left-y-axes show the maximum rank-regret of an output set while the right-y-axes show the output size.
The green lines show the rank-regret of $k$ border.
First, the output size for all settings is less than 20 items, which confirm the effectiveness of algorithms for finding a rank-regret representative.
As explained in \S~\ref{subsec:mdapprox}, \mdrrr guarantees the rank-regret of $k$, which is observed here as well.
As expected, \hdrrms fails to provide a rank-regret representative in all cases.
Both for DOT and BN, the maximum rank-regret of the output of \hdrrms are close to $n$, the maximum possible rank-regret.
For example, for DOT and $n=$400K, the rank-regret of \hdrrms was 112K, i.e., there exists a function for which the top-$1$ based on the output of \hdrrms has the rank 112,000.
Based on Theorem~\ref{th:mdrc2}, for these settings, the rank-regret of the output of \mdrc is guaranteed to be less than $4k$ for all cases. However, in practice we expect the rank-regret to be smaller than this. This is confirmed in both experiments for DOT (Figure~\ref{fig:DOTMDVNSize}) and BN (Figure~\ref{fig:BNMDVNSize}) where the output of \mdrc provided the rank-regret of $k$.

Next, we evaluate the impact of varying the number of dimensions. Setting $n$ to 10,000 and $k$ to $1\%$ of $n$ (i.e. 100), we the number of attributes, $d$, from $3$ to $6$ for DOT and from $3$ to $5$ for BN.
Figures~\ref{fig:DOTMDVMTime},~\ref{fig:DOTMDVMSize},~\ref{fig:BNMDVMTime}, and~\ref{fig:DOTMDVMSize} show the results.
The running times of the algorithms for DOT and BN are provided in Figures~\ref{fig:DOTMDVMTime} and~\ref{fig:BNMDVMTime}. 
Similar to the previous experiments, 
since the hitting set based algorithm \mdrrr requires the collection of $k$-sets, it was not efficient.
Both \hdrrms and \mdrc performed well in both experiments.
On the other hand, looking at Figures~\ref{fig:DOTMDVMSize} and~\ref{fig:DOTMDVMSize} \hdrrms fails to provide a rank-regret representative, as in all settings there the rank-regret of  its output was several thousands, while the maximum possible rank-regret is $n=10,000$.
The outputs of proposed algorithms in \S~\ref{sec:md}, as expected, satisfied the requested rank-regret.
Interestingly, the output of \mdrc had a lower rank-regret, especially for DOT where its rank-regret was around 10 for all settings.
The output of both \mdrrr and \mdrc was less than 40, for all settings and both datasets, which confirm the effectiveness of them as the representative. 

In the last experiment, we evaluate the impact of varying $k$.
For both datasets, while setting $n$ to 10,000 and $d$ to 3, we varied $k$ from 0.1\% of items (i.e., 10) to 10\% (i.e., 1000).
Figures~\ref{fig:DOTMDVKTime},~\ref{fig:DOTMDVKSize},~\ref{fig:BNMDVKTime}, and~\ref{fig:BNMDVKSize} show the results.
Looking at Figures~\ref{fig:DOTMDVKTime} and~\ref{fig:BNMDVKTime} which show the running time of the algorithms for DOT and BN, respectively, \mdrrr had the worst performance, and it got worse as $k$ increased. 
The bottleneck in \mdrrr is the $k$-set enumeration, and (looking at Figures~\ref{fig:DOTKSVK} and~\ref{fig:BNKSVK}) it increased by $k$, as the number of $k$-sets increased.
Both \hdrrms and \mdrc were efficient for all settings.
One interesting fact in these plots is that the running time of \mdrc decreases as $k$ increases. This is despite the fact that, as showed in Figures~\ref{fig:DOTKSVK} and~\ref{fig:BNKSVK}, the number of $k$-sets increased.
The reason for the decrease, however, is simple. The probability that the top-$k$ of corners of a hypercube share an item increases when looking at larger values of $k$ where each set contains more items.
Although \hdrrms was efficient in all settings, similar to the previous experiments it fails to provide a rank-regret representative as the rank-regret of its output is not bounded.
The outputs of \mdrrr and \mdrc, on the other hand, had smaller rank-regret than the requested  $k$ in all settings for both datasets. Again, the output sizes in all settings were less than 20, which confirm the effectiveness of them as the rank-regret representative.


\stitle{Summary of results:}
To summarize, the experiments verified the effectiveness and efficiency of our proposal.
While the adaptation of the regret-ratio based algorithm \hdrrms fails to provide a rank-regret representative, \twodrrr, \mdrrr, and \mdrc found small sets with small rank-regrets.
Although the rank-regret of the outputs of \twodrrr and \mdrc can be larger than $k$, in our experiments and our measurements those were always below $k$.
\mdrrr provided small outputs that as expected, always guarantees the rank-regret of $k$.
Interestingly, the output size of \mdrc was around the size of the one by \mdrrr, which verifies the effect of the greedy behavior of \mdrc.
The output sizes in all the experiments were less than 40, confirming the effectiveness of the representatives.
The quadratic \twodrrr and the hitting-set based algorithm \mdrrr scaled up to a limit, whereas \mdrc had low running time at all scales.

\section{Related Work}\label{sec:related}
The problem of finding preferred items of a dataset has been extensively investigated in recent years, and research has spanned multiple directions, most notably in top-$k$ query processing~\cite{ihab} and skyline discovery~\cite{skylineoperator}.
In top-$k$ query processing, the approach is to model the user preferences by a ranking/utility function which is then used to preferentially select  tuples. Fundamental results include access-based algorithms~\cite{fagin2003Optimal, fagin2003comparing, bruno2002, marian2004} and view-based algorithms~\cite{PREFER, das2006views}.
In skyline research, the approach is to compute subsets of the data (such as skylines and convex hulls) that serve as the data representatives in the absence of  explicit preference functions~\cite{skylineoperator,pareto, farhad}. Skylines and convex hulls can also serve as effective indexes for top-$k$ query processing~\cite{chang2000onion, xin, asudeh2016discovering}.

Efficiency and effectiveness have always been the challenges in the above studies. While top-$k$ algorithms depend on the existence of a preference function and may require a complete pass over all of the data before answering a single query, representatives such as skylines may become overwhelmingly large and ineffective in practice~\cite{asudeh2017,har2011expected}.
Studies such as~\cite{chan2006, vlachou2010ranking} are focused towards reducing the skyline size.
In an elegant effort towards finding a small representative subset of the data, Nanongkai et al.~\cite{nanongkai2010} introduced the regret-ratio minimizing representative.
The intuition is that a ``close-to-top'' result may satisfy the users' need. Therefore, for a subset of data and a preference function,
they consider the score difference between the top result of the subset versus the actual top result as the measure of regret, and seek the subset that minimizes its maximum regret over all possible linear functions. Since then, 
works such as~\cite{nanongkai2012interactive, zeighami2016minimizing,asudeh2017,kessler2015k,peng2014geometry,agarwal2017efficient,cao2017k,kumar2018faster} studied different challenges and variations of the problem. 
Chester et al.~\cite{chester2014}
generalize the regret-ratio notion to $k$-regret, in which the regret is considered to be the difference between the top result of the subset and the actual top-$k$ result (instead of the top-1 result). They also prove that the problem is NP-complete for variable number of dimensions.
\cite{cao2017k, agarwal2017efficient} prove that the $k$-regret problem is NP-complete even when $d=3$, using the polytope vertex cover problem~\cite{das1997complexity} for the reduction. As explained in \S~\ref{sec:pre}, this also proves that our problem is NP-complete for $d\geq 3$.
For the case of two dimensional databases, 
\cite{chester2014} proposes a quadratic algorithm and \cite{asudeh2017} improves the running time to $O(n\log n)$.
The cube algorithm and a greedy heuristic~\cite{nanongkai2010} are the first algorithms proposed for regret-ratio in MD.
Recently,~\cite{agarwal2017efficient,asudeh2017} independently propose similar approximation algorithms for the problem, both discretizing the function space and applying the hitting set, thus, providing similar controllable additive approximation factors. The major difference is that \cite{asudeh2017} considers the original regret-ratio problem while \cite{agarwal2017efficient} considers the $k$-regret variation.

It is important to note that the above prior works consider the score difference as the regret measure, making their problem setting different from ours, since we use the rank difference as the regret measure.

The geometric notions used in this paper, such as arrangement, dual space, and $k$-set, are explained in detail in~\cite{Ed87}.
Finding bounds on the number of $k$-sets of a point set do not lead to promising results on the upper bound of the size of $\mathcal{S}$.
Lovasz and Erdos~\cite{lovasz1971, erdos1973} initiated the study of $k$-set notion and provided an upper bound on the maximum number of $k$-sets in $\mathbb{R}^2$. The problem in $\mathbb{R}^2$ has also been studied in~\cite{EW85 , Tot01 , edelsbrunner1989circles , PSS92}.
The best known upper bound on the number of $k$-sets in $\mathbb{R}^2$ and $\mathbb{R}^3$ are $O(n k ^{1/3})$~\cite{dey1998} and $O(n k^{3/2})$~\cite{sharir2000}, respectively.
For higher dimensions, finding an upper bound on the number of $k$-sets has been extensively studied~\cite{Ed87, Tot01, De94, ABFK92}; the best known upper bound is $O(n^{d-\varepsilon})$~\cite{ABFK92}, where $\varepsilon>0$ is a small constant. The problem of enumerating all
$k$-sets has been studied in ~\cite{edelsbrunner1986constructing, chan1999remarks} for 2D and~\cite{andrzejak1999optimization} for MD.
\section{Final Remarks}\label{sec:conclusion}
In this paper, we proposed a rank-regret measure that is easier for users to understand, and often more appropriate, than regret computed from score values.  
We defined {\em rank-regret representative} as the minimal subset of the data containing at least one of the top-$k$ of any possible ranking function.
Using a geometric interpretation of items, we bound the maximum rank of items on ranges of functions and utilized combinatorial geometry notions for designing effective and efficient approximation algorithms for the problem.
In addition to theoretical analyses, we conducted empirical experiments on real datasets that verified the validity of our proposal. Among the proposed algorithms, \mdrc seems to be scalable in practice: in all experiments, within a few seconds, it could find a small subset with small rank-regret.
\pagebreak
\bibliographystyle{abbrv}
\bibliography{ref}
\techrep{
\appendix
\section{Proofs}\label{ap:proofs}

{\sc Theorem}~\ref{th:2drrr1}. {\it The algorithm \twodrrr is in $O(n^2\log n)$.}
\begin{proof}
The complexity of the algorithm \twodrrr depends is determined by Algorithms~\ref{alg:2dfr} and~\ref{alg:2d}.
Algorithms~\ref{alg:2dfr} first orders the items based on $x$ in $O(n\log n)$. Then in applies a ray sweeping from the x-axis toward $y$ and at every intersection applies constant number of operations. The upper bound on the number of intersections in $O(n^2)$ and therefore, it is the running time of Algorithms~\ref{alg:2dfr}.
Calling Algorithm~\ref{alg:2dfr}, generates at most $n$ ranges, each for an item.
Every iteration of Algorithm~\ref{alg:2d} is in $O(n\log n)$ as it applies a binary search on the set of uncovered intervals for each unselected item, and the number of uncovered intervals is bounded by $O(n)$. Given that the output size is bounded by $n$, Algorithm~\ref{alg:2d} is in $O(n^2\log n)$.
\end{proof}

{\sc Theorem}~\ref{th:2drrr2}. {\it The output size of \twodrrr is not more than the size of the optimal solution for RRR.}
\begin{proof}
Following the $k$-border, while sweeping a ray from $x$-axis to $y$, the top-$k$ results change only when a line above the border intersects with it.
For example, in Figure~\ref{fig:toy2}, moving from $x$-axis to $y$, in the intersection between $\mathsf{d}(t_3)$ and $\mathsf{d}(t_1)$, the top-$2$ changes from $\{t_7,t_1\}$ to $\{t_7,t_3\}$. Consider the collection of the top-$k$ results and the range of angles of rays (named as top-$k$ regions) that provide them.
Now consider the ranges that are generated by Algorithm~\ref{alg:2dfr} for each item. Let us name them here as the ranges of items. These ranges mark the first and last angle for which an item is in top-$k$.
For each top-$k$ region $R$, let the set items that their ranges cover it be $S_R$. Each top-$k$ region is covered by each and every item in its top-$k$. In addition the ranges of some other items cover each top-$k$ region. Therefore, $S_R$ is a superset for the top-$k$ of $R$.
An optimal solution with the minimum number of items from the collection of supersets that contains at least one item from each set, is not larger than the minimum number of such items from the collection of subsets. As a result, the output size of \twodrrr is not greater that the size of the optimal solution for the RRR problem.
\end{proof}

{\sc Theorem}~\ref{th:2drrr3}. {\it The output of \twodrrr guarantees the maximum rank-regret of $2k$.}
\begin{proof}
The proof is straightforward, following the Theorem~\ref{th:max2k}.
For each item $t$, Algorithm~\ref{alg:2dfr} finds a range that in its beginning and its end, $t$ is in the top-$k$.
Therefore, based on Theorem~\ref{th:max2k}, the rank of $t$ for each of the functions inside its range is no more than $2k$. Algorithm~\ref{alg:2d} covers the function space with the ranges generated by Algorithm~\ref{alg:2dfr}. Hence, for each function, there exists an item $t$ in the output where $\mathcal{r}_f(t)\leq 2k$.
\end{proof}

{\sc Lemma}~\ref{lemma:1}. {\it Let $\mathcal{S}$ be the collection of all $k$-sets for the points corresponding to the items $t\in\mathcal{D}$.
For each ranking function $f$, there exists a $k$-set $S\in\mathcal{S}$ such that top-$k$($f$)=$S$.}

\begin{proof}
The proof is straight-forward using contradiction.
Consider a ranking function $f$ with the weight vector $w$ where the top-$k$ is $S_f$ and $S_f$ does not belong to $\mathcal{S}$.
Let $t$ be the item for which $\mathcal{r}_f(t)=k$.
Consider the hyperplane $h(t,w)$.
For all the items in $t'\in S_f$ $f(t')\leq f(t)$
and for all items in $\mathcal{D}\backslash S_f$, $f(t') > f(t)$.
Hence, all the items in $S_f$ fall in the positive half space of $h$ -- i.e., $h(t,w)^+ = S_f$.
Since $|S_f|$ is $k$, $card( h(t,w)^+ ) = k$.
Therefore $h(t,w)^+ = S_f$ is a $k$-set and should belong to $\mathcal{S}$, which contradicts with the assumption that is does not belong to the collection of $k$-sets.
\end{proof}

{\sc Theorem}~\ref{th:mdrc2}. {\it \mdrc guarantees the maximum rank-regret of $dk$.}
\begin{proof}
The proof of this theorem is based on Theorem~\ref{th:max2k}. We also consider the arrangement lattice~\cite{Ed87} for this proof.
Every convex region in the $(d-1)$-dimensional space is constructed from the $d-2$ dimensional space convex facets as its borders. Each of the facets are constructed by $d-3$ dimensional facets, and this continues all the way down until the ($0$ dimensional) points. For example, the borders of a convex polyhedron in 3D, are two dimensional convex polygones; the borders of the polygones are (one dimensional) line segments, each specified by two points.
The arrangement lattice is the data structure that 
describe the convex polyhedron by its $i$ dimensional facets -- $\forall ~0\leq i\leq d$. The nodes at level $i$ of the lattice show the $i$ dimensional facets, each connected to its $i-1$ dimensional borders, as well as the $i+1$ dimensional facets those are a border for.

Now, let us consider the hyper-rectangle of each of the leaf nodes in the recursion tree of \mdrc (c.f. Figure~\ref{fig:practicalrunning}) and let $t$ be the tuple that appeared at the top-$k$ of all corners of the hyper-rectangle. Consider the arrangement lattice for the hyper-rectangle of the leaf node and let us move up from the bottom of the lattice, identifying the maximum rank of $t$ at each level of it.
Since $t$ is in the top-$k$ of both corners of each line segment in level 1, based on Theorem~\ref{th:max2k}, its rank for each point on the line is at most $2k$.
Level 2 of the lattice shows the  two dimensional rectangles, each built by the line segments at level 1.
For every point inside each rectangle at level 2, consider a line segment on the rectangle's affine space starting from one of its corners, passing through the point and ending on the edge of the rectangle.
Since the rank of the point on the corner is less than $k$ and for any point on the edge less than $2k$, based on Theorem~\ref{th:max2k}, the rank of $t$ for the points inside the rectangles at level 2 of lattice is at most $k+2k = 3k$.
Similarly, consider each hyper-rectangle at level $i$ of the lattice.
The hyper-rectangle is built by the $(i-1)$ dimensional hyper-rectangle at level $i-1$.
For every point inside the $i$ dimensional hyper-rectangle, consider the line segment starting from a corner of the hyper-rectangle, passing through the point and hitting the edge of it.
By induction, the rank of $t$ on the $(i-1)$ dimensional edges of hyper-rectangle is at most $ik$. Therefore, since the rank of $t$ on the corner is at most $k$, based on Theorem~\ref{th:max2k}, its rank for the point inside the $i$ dimensional hyper-rectangle is at most $k+ik = (i+1)k$.
Therefore, the rank of $t$ for every point inside the $(d-1)$ dimensional hyper-rectangle (the top of the lattice) is at most $k + (d-1)k = dk$.
\mdrc partitions the function space into hyper-rectangles that, for each, there exists an item $t$ in the top-$k$ in all of hyper-rectangle's corners (included in the output).
The rank of $t$ for every point inside the hyper-rectangle is at most $dk$. Since every function in the space belongs to a hyper-rectangle, there exists an item in the output that guarantees the rank of $dk$ for it.
\end{proof}

\section{k-set enumeration}\label{ap:kset-enum}
In this section we review the enumeration of $k$-sets for a dataset $\mathcal{D}$.
Especially, we explain an algorithm~\cite{andrzejak1999optimization} that transforms the $k$-set enumeration to a graph traversal problem.

In 2D, $k$-set enumeration can efficiently be done using a ray sweeping algorithm (Similar to Algorithm~\ref{alg:2dfr}) that starts from the x-axis and along the way to the y-axis monitors the changes in the top-$k$ border.
Similarly, in MD, one can consider the arrangement of hyperplanes in the dual space and, similar to 2D,
follow the $k$-border. This algorithm is in $O(|\mathcal{S}|kn^2 $LP$(d,n))$~\cite{andrzejak1999optimization} -- where LP$(d,n)$ is the required time for solving a linear programming with $d$ variables and $n$ constrains.

Alternatively, here we revisit the algorithm that is first proposed in~\cite{andrzejak1999optimization} and 
unlike the algorithms that enumerate the sets in the geometric dual space, transforms the problem to a (simple to understand and implement) {\em graph traversal} problem. 

\noindent
\begin{definition}[$k$-set Graph] \label{def:ksetgraph}
Consider a graph $G(V,E)$, where each node in the graph is a $k$-set, i.e., $\forall$ $k$-set $s_i\in \mathcal{S}$, $\exists v_i\in V$.
The edges are between the nodes that their corresponding $k$-sets share $k-1$ tuples. I.e.,
$\forall s_i,s_j \in \mathcal{S}$ where $|s_i\cap s_j| = k-1$, $\exists e_{ij}\in E$.
\end{definition}

\begin{figure}[!t]
	\centering
	\includegraphics[width=0.3\textwidth]{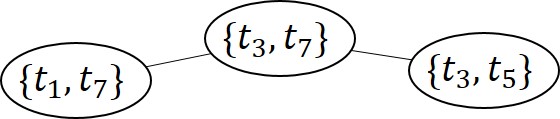}
    \vspace{-2mm}\caption{The $k$-sets for for Figure~\ref{fig:toy4}}
    \label{fig:toy5}
\end{figure}

\begin{theorem} \label{th:ksetconnected}
The $k$-set graph is connected.
\end{theorem}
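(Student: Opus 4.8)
The plan is to exploit the correspondence between $k$-sets and top-$k$ results established in Lemma~\ref{lemma:1}, together with the connectivity of the function space. Every $k$-set is realized as the top-$k$ of at least one linear ranking function, and the set of admissible functions is the collection of origin-starting rays with positive weights, i.e. the relevant portion of the unit hypersphere, which is path-connected. I would prove that any two $k$-sets can be joined by a walk in $G$ by exhibiting a continuous path between their defining functions and tracking how the top-$k$ set evolves along it.

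First I would fix two arbitrary $k$-sets $S, S' \in \mathcal{S}$ and, by Lemma~\ref{lemma:1}, pick functions $f, f'$ with top-$k$($f$) $= S$ and top-$k$($f'$) $= S'$. Let $\gamma$ be a continuous path in the function space from $f$ to $f'$; such a path exists by path-connectivity. Along $\gamma$, the ordering of items is governed by the arrangement of dual lines $\mathsf{L}(t_i)$ restricted to the region swept by $\gamma$ (exactly as in the setup used for Theorem~\ref{th:max2k}), so the top-$k$ set is piecewise constant: it changes only at the finitely many parameter values where the moving ray crosses an intersection of two dual hyperplanes lying at level exactly $k$, i.e. where the rank-$k$ and rank-$(k+1)$ items momentarily tie.

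The central step is the analysis of these transitions. I would perturb $\gamma$ into general position so that at each breakpoint exactly one pair of items exchanges scores, since the functions admitting a triple tie or two simultaneous swaps form a lower-dimensional locus that a generic path avoids. At such a transition the item at rank $k$ and the item at rank $k+1$ swap: exactly one tuple leaves the top-$k$ and exactly one enters, while the remaining $k-1$ members are unchanged. Consequently the $k$-set immediately before and the $k$-set immediately after the transition intersect in precisely $k-1$ tuples, which is exactly the edge condition of Definition~\ref{def:ksetgraph}. Concatenating the transitions, the sequence of top-$k$ sets encountered as we traverse $\gamma$ is a walk in $G$ from $S$ to $S'$; since $S, S'$ were arbitrary, $G$ is connected.

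The hard part will be the transition analysis: rigorously justifying that the top-$k$ set is piecewise constant with finitely many breakpoints and that each breakpoint is a single rank-$k$/rank-$(k+1)$ swap. This needs a general-position argument to exclude degeneracies (three or more coincident scores, or two disjoint swaps occurring at the same function) and a careful verification that the swap happens across the level-$k$ facet, so that the entering tuple and the leaving tuple are each unique and the two adjacent $k$-sets indeed share $k-1$ elements. Everything else reduces to the path-connectivity of the function space and the function-to-$k$-set correspondence of Lemma~\ref{lemma:1}.
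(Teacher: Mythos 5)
Your proposal is correct and is essentially the paper's own argument: the paper also traces a continuous path (a line segment between points on two facets of the $k$-border in the dual space, which is the same as a path through the function space) and observes that the sequence of level-$k$ facets crossed yields a walk in the $k$-set graph because consecutive top-$k$ sets differ by a single swap. Your version is phrased directly in the function space and is in fact somewhat more careful than the paper, since you explicitly flag the general-position perturbation needed to rule out triple ties and simultaneous swaps, a degeneracy the paper's proof silently ignores.
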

\begin{proof}
Consider the dual space in which every item $t$ is transformed to the hyperplane $\sum_{i=1}^d t[i]x_i = 1$.
As explained in \S~\ref{sec:dual}, every linear function $f$ with the weight vector $w$ is the origin-staring ray that passes through the point $w$. The ordering of the items based on $f$ are the same as the ordering of the intersections of their hyperplanes with the ray of $f$.
Therefore, for every ray the $k$ closest intersection to the origin identify the top-$k$ and the $k$-th hyperplane intersecting it specifies the $k$-border (see Figure~\ref{fig:toy2}).
The $k$-border is constructed by a set of facets each belonging to a hyperplane.
The borders of the facets are intersections of pairs of hyperplanes.
The borders for which one hyperplane above the $k$-border intersects with it specify two adjacent nodes in the $k$-set graph.
Given two $k$-sets $S_i$ and $S_j$, consider two facets that have those belove the $k$-border. For any arbitrary pair of points on these two facets, the line that connects these two points specifies a sequence of facets that their corresponding $k$-sets define a path from the node of $S_i$ to $S_j$ in the $k$-set graph. Therefore, since the $k$-border is connected, the $k$-set graph is also connected.
\end{proof}

Theorem~\ref{th:ksetconnected} provides the key property for designing the $k$-set enumeration algorithm.
Since the $k$-set graph is connected, any traversal of it discovers all the $k$-sets.
Algorithm~\ref{alg:kset} shows the pseudo-code for enumerating the $k$-sets of a dataset $\mathcal{D}$.
The algorithm has an initial step in which it finds a $k$-set, followed by the traversal of the graph.

\vspace{3mm}
\begin{algorithm}[!h]
\caption{{\bf $k$-set} \texttt{\scriptsize // BFS traversal of the k-set graph}\\
		 {\bf Input:} dataset $\mathcal{D}$ \\
		 {\bf Output:} $k$-set $\mathcal{S}$ 
		}
\begin{algorithmic}[1]
\label{alg:kset}
\STATE $S = $ top-$k$ tuples in $\mathcal{D}$ on attribute $A_1$
\STATE $\mathcal{S} = \{ S\}$ 
\STATE {\it Enqueue}($S$)
\WHILE{{\it queue} is not empty}
	\STATE $S = $ {\it Dequeue}()
	\FOR{$t\in S$}
		\FOR{$t'\in \mathcal{D}\backslash S$}
        	\STATE $S' = S.$remove$(t).$add$(t')$
			\IF{$S'\notin \mathcal{S}$}
				\IF {$S'$ is a valid $k$-set} \label{line:valid}
					\STATE add $S'$ to $\mathcal{S}$
					\STATE {\it Enqueue}($S'$)
				\ENDIF
			\ENDIF
		\ENDFOR
	\ENDFOR
\ENDWHILE
\STATE {\bf return} ($\mathcal{S}$)
\end{algorithmic}
\end{algorithm}

\noindent
{\em Initial step.} The first step in the algorithm is to find a $k$-set as a starting point. To do so, the algorithm finds the top-$k$ items with the maximum values on the first attribute and considers it as the first node in the graph. 

\noindent
{\em Traversal.} Based on Theorem~\ref{th:ksetconnected} the $k$-set graph is connected. Thus, given a vertex in the graph applying a BFS (breaths first search) traversal on the graph will discover all of its nodes. 
After finding the first $k$-set, the algorithm adds it to a queue and continues the traversal until the queue is not empty.
At every iteration, the algorithm removes a $k$-set $S$ from the queue and replaces its items one after the other, with the items in $\mathcal{D}\backslash S$ to get a set $S'$.
Next, if $S'$ does not belong to $\mathcal{S}$, it checks if this is a valid $k$-set or not. It adds $S'$ to $\mathcal{S}$ and to the queue, if it is a $k$-set.

Finding out if a set $S'$ of size $k$ is a $k$-set can be done through solving a linear programming.
$S'$ is a $k$-set if there exists a hyperplane $h$ containing the point $\rho$ and the normal vector $v$ such that $S' = h(\rho ,v)^+$ and $card( h(\rho,v)^+ ) = k$.
Also, the hyperplane $h$ identifies the contour of the function $v$ for $\rho$, i.e., all the points on $h$ have the same score as $\rho$ based on $v$. 
Any point in $h(\rho ,v)^+$ should have a larger score than $\rho$ based on $v$ while all the other points in $\mathcal{D}\backslash S'$ (the points falling on the negative half space) should have a smaller score than $\rho$ based on $v$.
This formulates the following linear programing formulation:
\begin{align} \label{eq:lp}
\nonumber \mbox{find}& \,\,\,\,\,(\rho, v)\\
\nonumber \mbox{s.t.  }&  \\
\nonumber & \forall t\in S': \sum\limits_{i=1}^{d}v_it[i] \geq \sum\limits_{i=1}^{d}v_i\rho_i \\
          & \forall t\in \mathcal{D}\backslash S': \sum\limits_{i=1}^{d}v_it[i] < \sum\limits_{i=1}^{d}v_i\rho_i
\end{align}
S' is valid if there exist a $\rho$ and a $v$ that satisfy the Equation~\ref{eq:lp}.

For each node in the $k$-set graph, Algorithm~\ref{alg:kset} checks the sets generated by replacing an item in the $k$-set with an item from the rest of the dataset. There totally exists $k(n-k)$ such sets for each $k$-set.
Checking if a set is a valid $k$-set, using Equation~\ref{eq:lp}, requires solving a linear programming with $d$ variables and $n$ constraints. As a result, the time complexity of Algorithm~\ref{alg:kset} is $O(nk$LP$(d,n))$.

}
\end{document}